%

\documentclass[ecta,nameyear,draft]{econsocart}

\usepackage{threeparttable}

\usepackage{amsmath,blkarray,bm,dsfont}
\usepackage{amssymb}
\usepackage{tikz}
\usepackage{subfigure}
\usepackage[mathscr]{euscript}
\usepackage{caption}
\usepackage{xcolor}
\usepackage{colortbl} 
\usepackage{arydshln} 
\usepackage{float}
\usepackage{multirow}
\usetikzlibrary{positioning}
\usepackage{enumitem}

\captionsetup[figure]{font=footnotesize}

\captionsetup[table]{font=footnotesize}

\usetikzlibrary{patterns,intersections}


\newcommand{\diag}{\operatorname{diag}}
\newcommand{\ess}{\operatorname{ess}}
\newcommand{\conv}{\operatorname{conv}}
\newcommand{\cl}{\operatorname{cl}}

\RequirePackage[colorlinks,citecolor=blue,linkcolor=blue,urlcolor=blue,pagebackref]{hyperref}

\newcommand{\E}{\mathcal{E}} 
\renewcommand{\ss}{\mathcal{S}} 
\newcommand{\sF}{\mathcal{F}} 
\newcommand{\sPF}{\mathcal{PF}} 
\newcommand{\sY}{\mathcal{Y}} 
\newcommand{\sX}{\mathcal{X}} 
\newcommand{\bS}{\mathbb{S}} 
\newcommand{\bQ}{\mathbb{Q}} 
\newcommand{\bB}{\mathbb{B}} 
\newcommand{\sC}{\mathcal{C}^*} 
\newcommand{\cC}{\mathcal{C}} 
\newcommand{\linseg}{\Lambda} 

\renewcommand{\v}{v} 
\newcommand{\q}{q} 

\newcommand{\er}{e_r} 
\newcommand{\eb}{e_b} 
\newcommand{\R}{R} 
\newcommand{\B}{B} 
\newcommand{\G}{G} 
\newcommand{\F}{F} 

\newcommand{\X}{X} 
\newcommand{\Y}{Y} 
\newcommand{\LL}{L} 
\newcommand{\bLL}{\mathbf{\LL}} 
\newcommand{\cM}{\mathcal{M}} 
\newcommand{\hcM}{\widehat{\mathcal{M}}} 
\newcommand{\tcM}{\widetilde{\mathcal{M}}} 
\newcommand{\bI}{\boldsymbol{I}} 

\newcommand{\h}{h} 
\newcommand{\hn}{\widehat{\h}} 
\newcommand{\btheta}{\boldsymbol{\theta}} 
\newcommand{\bvartheta}{\boldsymbol{\vartheta}} 
\newcommand{\bzeta}{\boldsymbol{\zeta}} 
\newcommand{\bxi}{\boldsymbol{\xi}} 

\newcommand{\HH}{\mathsf{H}} 
\newcommand{\bP}{\mathbb{P}} 


\startlocaldefs
\theoremstyle{remark}
\newtheorem{theorem}{Theorem}[section]

\newtheorem{defn}{Definition}

\newtheorem{prop}{Proposition}[section]

\newtheorem{asm}{Assumption}
\newtheorem{remark}{Remark}[section]
\newtheorem{proc}{Procedure}

\defcitealias{lia:lu:mu:oku24}{LLMO} 

\defcitealias{obe:pow:vog:mul19}{OPVM} 
\endlocaldefs

\begin{document}
\begin{frontmatter}

\title{Inference for an Algorithmic Fairness-Accuracy Frontier}

\begin{aug}
\author[id=au1,addressref={add1}]{\fnms{Yiqi}~\snm{Liu}\ead[label=e1]{yl3467@cornell.edu}}
\author[id=au2,addressref={add2}]{\fnms{Francesca}~\snm{Molinari}\ead[label=e2]{fm72@cornell.edu}}
\address[id=add1]{%
\orgdiv{Department of Economics},
\orgname{Cornell University}}

\address[id=add2]{%
\orgdiv{Department of Economics},
\orgname{Cornell University}}
\end{aug}

\vspace{-2em}

\support{\emph{This draft: June 13, 2025}\\
We thank 
Levon Barseghyan, Gillian Hadfield, Hiroaki Kaido, Nathan Kallus, Jens Ludwig, Chuck Manski, Alice Qi, Chen Qiu, Andres Santos, Vira Semenova, Rahul Singh, Alex Tetenov, Lars Vilhuber, Davide Viviano, reviewers for the EC24 conference, seminar participants at Chicago, Cornell, Geneva, JHU, LSE, MSU, Munich, SciencesPo, Stanford, Toulouse, UCL, Warwick, EC24, ESIF: Economics and AI+ML Meeting, the 2024 Brown University workshop ``Using Data to Make Decisions,'' and especially Jos{\'e} Montiel-Olea and Thomas Russell for helpful comments. All data and replication files can be accessed at \href{https://github.com/yiqi-liu/TestAlgFair}{github.com/yiqi-liu/TestAlgFair}.}
\begin{abstract}
Algorithms are increasingly used to aid with high-stakes decision making. Yet, their predictive ability frequently exhibits systematic variation across population subgroups. 
To assess the trade-off between fairness and accuracy using finite data, we propose a debiased machine learning estimator for the fairness-accuracy frontier introduced by  \citet*{lia:lu:mu:oku24}. We derive its asymptotic distribution and propose inference methods to test key hypotheses in the fairness literature, such as (i) whether excluding group identity from use in training the algorithm is optimal and (ii) whether there are less discriminatory alternatives to a given algorithm. 
In addition, we construct an estimator for the distance between a given algorithm and the fairest point on the frontier, and characterize its asymptotic distribution. Using Monte Carlo simulations, we evaluate the finite-sample performance of our inference methods. We apply our framework to re-evaluate algorithms used in hospital care management and show that our approach yields alternative algorithms that lie on the fairness-accuracy frontier, offering improvements along both dimensions.
\end{abstract}

\begin{keyword}
\kwd{Algorithmic fairness}
\kwd{statistical inference}
\kwd{support function}
\end{keyword}

\bigskip

\end{frontmatter}

\newpage
\section{Introduction}
\label{sec:intro} 
Algorithms are increasingly used in many aspects of life, often to guide or support high-stake decisions, for example by predicting job performance, re-offense risk, loan default, college success, or patient health.
These predictions feed, respectively, into the determination of who should be hired; which defendants should receive bail; who should be granted a loan; which students should be admitted to college; and which patients to treat. 
Yet, a growing body of literature documents that algorithms may exhibit bias against legally protected groups, both in their predictive accuracy and in the decisions they lead to \citep[see, e.g.,][]{ang:lar:mat:kir16, arn:dob:hul21,obe:pow:vog:mul19,ber:jei:jab:kea:rot21}.
The bias may arise, for example, due to the choice of labels the algorithm is trained on, the objective function that the algorithm optimizes, the training procedure, and many other factors involved in the design of the algorithm \citep[see, e.g.,][]{cow:tuc20}.

Designing an algorithm often entails a trade-off between making it more \emph{fair}, i.e., less likely to disproportionately harm a protected class, and more \emph{accurate}, e.g., better at assigning treatment to those who benefit from it and withholding it from those who do not. 
As a result, improving fairness often comes at the cost of accuracy.
Regulators, policymakers, algorithm designers, and actors affected by algorithmic predictions all have an interest in assessing various aspects of this trade-off.

We provide a set of tools for estimation of and statistical inference on a \emph{fairness-accuracy} (FA) frontier recently characterized by \citet*[LLMO henceforth]{lia:lu:mu:oku24}, where fairness is measured by the gap between group-specific expected losses.
The theoretical analysis in \citetalias{lia:lu:mu:oku24} assumes perfect knowledge of the population distribution of the observable variables and formalizes the trade-off between accuracy and fairness, shading light on how to use properties of the data distribution to determine whether it is optimal for the designer of the algorithm to exclude certain inputs from use. 
However, in practice, regulators and policymakers typically have access to only finite data. Hence, statistical inference tools are crucial for analyzing properties of algorithms and for their regulation.

We put forward a consistent estimator for \citetalias{lia:lu:mu:oku24}'s FA-frontier and derive its asymptotic distribution. For each point on the FA-frontier, we characterize an algorithm that achieves it.
We then develop a method to test hypotheses such as: Is it optimal to fully exclude group identity from use in an algorithm? Does a particular algorithm lead to group-specific expected losses that are on the FA-frontier? How far from the fairest point on the FA-frontier are the group-specific expected losses associated with a given algorithm?

Answers to the first two questions inform the regulation of algorithms and the determination of whether discrimination occurred. 
The law recognizes two main categories of discrimination: \emph{disparate treatment}, where individuals are deliberately treated differently based on their membership in a protected class; and \emph{disparate impact}, where protected classes are adversely affected disproportionately, no matter the intent \citep{kle:lud:mul:sun18,bla:spi22}.
Often, as part of an effort to avoid disparate treatment, algorithms are designed so that they do not take race, gender, or other sensitive attributes as input.
Even class-blind algorithms, however, may lead to disparate impact. 
Our first test informs a fairness-minded policymaker interested in assessing whether banning group identity has the potential to mitigate disparate impact.\footnote{This question is of interest, e.g., when assessing the recent U.S. Supreme Court decision to rule out the use of race in college admissions; for an overview, see \href{https://highered.collegeboard.org/recruitment-admissions/policies-research/access-diversity/2023-scotus-decision/overview}{College Board}.} 

Our second test evaluates whether a given algorithm lies on the frontier—and thus whether a less discriminatory alternative (LDA) exists. This test is relevant to both plaintiffs (e.g., job applicants) and defendants (e.g., hiring companies) in disparate impact disputes.
For example, if a selection process yields disparate impact, the hiring company may invoke business necessity to justify it. 
The challenger must then show the existence of an LDA, i.e., a fairer algorithm that is just as accurate. If our test rejects the null that the current algorithm is on the frontier, it supports the plaintiff's claim. Conversely, if the test fails to reject, there is no statistical evidence that the hiring company can build a fairer algorithm without sacrificing accuracy, supporting the business necessity defense.
When a given algorithm is not on the frontier, we characterize alternative algorithms that improve upon it in terms of accuracy or fairness (or both).

The third inferential method yields an estimator of the distance from a given algorithm to the fairest point on the frontier and constructs a confidence interval around it. 
This tool may interest any fairness-minded agent (e.g., a college) willing to trade some accuracy for reduced disparate impact (e.g., via affirmative action), as it provides a measure of the trade-off between promoting equity and achieving accuracy.

The key insight underlying our proposed inference methods is that since the feasible set of group-specific expected losses associated with all possible algorithms is convex, it can be fully represented by its \emph{support function}. As the FA-frontier is a portion of the boundary of the feasible set, we characterize and estimate it through this support function. We express the hypotheses listed above as restrictions on the support function, yielding easy to understand test statistics that essentially rely on judicious use of the separating hyperplane theorem. 
Throughout our analysis, the support function serves as a unifying tool for inference on properties of algorithms and of the FA-frontier.

We provide a consistent debiased machine learning (DML) estimator of the support function and establish that it converges to a tight Gaussian process as sample size increases, building on and extending results in \citet{ber:mol08}, \citet{cha:che:mol:sch18} and \cite{sem23}. We show how to allow for infimum-type test statistics that are directionally-differentiable mappings of the support function,  building on results of \citet{fan:san19}. 
Earlier uses of the support function for inference in partially identified models \cite[e.g., ][]{ber:mol08, bon:mag:mau12, kai:san14, kai16, cha:che:mol:sch18,mol20,sem23} did not include tests for hypotheses such as the ones we consider. Expressing these hypotheses in terms of restrictions on the support function is one of our main contributions.

We evaluate the finite-sample performance of our inference toolkit using extensive Monte Carlo simulations. We then demonstrate its empirical value by reanalyzing algorithms for high-risk care assignment at a research hospital studied by \citet*{obe:pow:vog:mul19}. We fail to reject the hypothesis that the hospital’s status-quo algorithm admits an LDA, and document fairness and accuracy gains from several alternative algorithms on the frontier that we characterize.

\textbf{Related Literature.} 
A growing literature in computer science and statistics studies algorithmic fairness; see \citet{cho:rot18}, \citet{bar:har:nar23}, and \citet{Corbett24} for comprehensive overviews and open questions.
Models have been developed to explain algorithmic bias by decomposing disparity sources \citep[e.g.,][]{ram:kle:lud:mul20} or incorporating taste-based discrimination and unobservables in label generation \citep[e.g.,][]{ram:rot20}.
Fairness has been modeled as a constraint or regularizer in the objective function that maximizes predictive accuracy \citep[e.g.,][]{dwork12,ber:hei:jab:jos:kea:mor:jam:nee:roth17} and incorporated in the preferences of a social planner that uses algorithms in their decision-making process \citep{kle:lud:mul:ram18,ram:kle:mul:lud20}.
In optimal policy targeting, fairness has been set as the criterion to be maximized when choosing a policy from the set of welfare-maximizing rules \citep[e.g.,][]{viv:bra23}.
When protected class membership is not observed in the data but proxy variables are available, data combination methods have been proposed to partially identify disparity measures \citep{kal:mao:zhou22}.
Yet, tests of hypotheses for properties of the trade-off between fairness and accuracy of algorithms are scant in the literature.
\citet{aue:lia:tab:oku24} propose a test, which can incorporate exogenous constraints on the algorithm space, using sample splitting for the union null hypothesis that a status quo algorithm can be weakly improved in terms of both fairness and accuracy. Their test is based on finding another algorithm within a user-specified subclass, subject to the constraint that the alternative algorithm is at least as accurate as the status quo algorithm.

In contrast, our test for existence of an LDA is a one-shot test, valid across all algorithms rather than a specific subclass, that reveals if the status quo algorithm is on the FA-frontier and does not require first estimating an alternative algorithm.
Characterizing the entire FA-frontier allows us to provide a comprehensive toolkit for statistical inference that can be useful for regulators to determine what algorithm design restrictions and reporting requirements to impose on entities making decisions using algorithms.

\textbf{Outline.} Section \ref{sec:setup} lays out notation and summarizes the derivation of the FA-frontier in \citetalias{lia:lu:mu:oku24}.
Section \ref{sec:sup_fun} characterizes the support function of interest and uses it to describe the FA-frontier.
Section \ref{sec:estimation} derives the asymptotic properties of our DML estimator for the support function.
Section \ref{sec:set_estimation} uses these results to obtain a consistent estimator and an asymptotically valid confidence set for the FA-frontier, and characterizes algorithms attaining points on it.
Section \ref{sec:test} formulates hypotheses of interest in the fairness literature as restrictions on the support function and proposes asymptotically valid tests.
Section \ref{sec:distance_F} provides an estimator and inference method for the distance between the expected group-specific losses associated with a given algorithm and the fairest point on the frontier.
Section \ref{sec:MC_and_empirical} presents our Monte Carlo simulations and re-evaluation of \citet{obe:pow:vog:mul19}'s study on a research hospital’s use of algorithms for assigning patients to a high-risk care management program.
Section \ref{sec:conclude} concludes.
Our main proofs are in Appendix \ref{appn:A};
Appendix \ref{appn:B} reports auxiliary results and extensions, and Appendix \ref{appn:C} includes supplemental empirical results.

\vspace{-.15cm}
\section{Setup}
\label{sec:setup}
\vspace{-.15cm}
Let a population of individuals be described by an outcome $\Y\in\sY\subset\mathbb{R}$, a binary group identity $\G\in\{r,b\}$ (red or blue), and a vector of covariates $\X\in\sX\subset\mathbb{R}^{d_\X}$, with the population distribution of $(\Y,\G,\X)$ denoted $\bP$.
For example, $\Y$ may denote an individual's number of active chronic illnesses in the subsequent year, $\G$ may denote their race, and $\X$ may include age, gender, biomarkers, comorbidity, costs and medication variables.
The relation between $\G$ and $\X$ is left unspecified{, but $\G$ is not part of $\X$}. 
Throughout, we assume $\G$ is binary, though the results extend to multiple groups.
Each individual receives a binary decision $D\in\{0,1\}$, e.g., whether they are automatically enrolled in a high-risk care management program.
An algorithm $a:\sX \mapsto [0,1]$ assigns a probability distribution to $D$; e.g., the algorithm assigns each patient a health risk score in $[0,1]$,
which for simplicity we take to be the only input to the {enrollment decision, and hence to coincide with the enrollment probability}.
Let $\mathcal A(\sX)$ denote the set of all algorithms that map from the input space $\sX$ to a probability distribution over $D$, and $\ell:\{0,1\}\times\sY\mapsto\mathbb R$ be a function that measures the loss associated with decision $d\in\{0,1\}$ for an individual with outcome $y\in\sY$. 
In the example discussed so far, the algorithm designer observes training data consisting of covariates $\X$ and a binary outcome $\Y$ indicating whether someone has a number of chronic illnesses exceeding a given threshold; the loss function $\ell$ may be the classification loss, $\ell(D,\Y)=\mathds{1}\{D\neq \Y\}$, which returns the value $1$ if the algorithm mistakenly enrolls a healthy person in the high-risk care program or fails to enroll someone who is very sick. 
We assume throughout that the training data is drawn from the same distribution as the population that we eventually apply the algorithm to (i.e., the subpopulation for which labels are observed is representative of the entire population). 

Given an algorithm $a\in\mathcal A(\sX)$, let the population expected loss for group $g\in\{r,b\}$ be
\begin{align}
e_g(a)\equiv\mathbb{E
}\left[a(\X)\ell(1,\Y) + (1-a(\X))\ell(0,\Y)|\G=g\right],\label{eq:define_e}
\end{align}
where the expectation is taken with respect to $\bP$. We refer to the group-specific expected loss in Eq.~\eqref{eq:define_e} as \emph{group risk}. Following \citetalias{lia:lu:mu:oku24}, we define a preference ordering over group risk pairs so that $e=(\er,\eb)$ is preferred to $e'=(\er',\eb')$, denoted $e>_{FA}e'$, if
\begin{align}
    \er\le e'_r, \quad \eb\le e'_b, \quad \text{and}\quad |\er-\eb|\le |e'_r-e'_b|,\label{pref}
\end{align}
with at least one strict inequality.
As shown in  \citetalias{lia:lu:mu:oku24}, all of utilitarian, Rawlsian, egalitarian, and various other preferences are consistent with this ordering.
One can then define the \emph{feasible set} of group risk pairs across algorithms from the class $\mathcal{A}(\sX)$ as 
\begin{align}
\E\big(\bP, \mathcal{A}(\sX)\big)\equiv\left\{\bigl(\er(a),\eb(a)\bigr)\in\mathbb{R}^2:a\in\mathcal{A}(\sX)\right\},\label{eq:feasible_set}
\end{align}
and the \emph{fairness-accuracy (FA) frontier} as
\begin{align}
\sF\big(\bP, \mathcal{A}(\sX)\big)\equiv\left\{e\in\E\big(\bP, \mathcal{A}(\sX)\big):~\nexists~e'\in\E\big(\bP, \mathcal{A}(\sX)\big)~\text{such that}~e'>_{FA}e\right\}.\label{eq:FA_frontier}
\end{align}

For finite $(\sX,\sY)$, \citetalias{lia:lu:mu:oku24} show that $\E\big(\bP, \mathcal{A}(\sX)\big)$ is a closed convex set {(we extend this convexity result to general $(\sX,\sY)$ in the proof of Proposition~\ref{prop:sf})} and $\sF\big(\bP, \mathcal{A}(\sX)\big)$ is a specific portion of its boundary connecting three points:  the feasible point that minimizes the risk for group $r$, denoted $\R$; the feasible point that minimizes the risk for group $b$, denoted $\B$, and the feasible point that minimizes the absolute difference in group risks, denoted $\F$.\footnote{Ties are broken in favor of the other group's risk for $\R$ or $\B$. If there are multiple feasible points that minimize the absolute difference in group risks, $\F$ is chosen to be the one that has the lowest risk for both groups.} 
Adapting \citetalias{lia:lu:mu:oku24} nomenclature, call $\big(\bP, \mathcal{A}(\sX)\big)$ \textit{group-balanced} if $\E\big(\bP, \mathcal{A}(\sX)\big)$ has $\R$ and $\B$ such that either $\R=\B=\F$, or $e_r < e_b$ at $\R$ and $e_r > e_b$ at $\B$; call $\big(\bP, \mathcal{A}(\sX)\big)$ \textit{$r$-skewed} if $e_r<e_b$ at $\R$ and $e_r\leq e_b$ at $\B$, and \textit{$b$-skewed} if $e_r\geq e_b$ at $\R$ and $e_r > e_b$ at $\B$.

To ease notation, we drop the dependence of $\E$ and $\sF$ on $\big(\bP, \mathcal{A}(\sX)\big)$ unless explicitly needed. Figure \ref{fig1} illustrates these sets and key points on them under a smoothness condition stated in Assumption \ref{asm:smooth-X}. \citetalias{lia:lu:mu:oku24} (Theorem 1) show that the shape of $\sF$ depends entirely on whether $\big(\bP, \mathcal{A}(\sX)\big)$ is group-balanced or $g$-skewed. 
If group-balanced, $\sF$ is the curve connecting $\R$ and $\B$, coinciding with the Pareto frontier (panel (a)); if $g$-skewed, $\sF$ connects $\F$ and the feasible point minimizing risk for group $g$ (panels (b) and (c) for the $r$-skewed case; omitted panels for the $b$-skewed case).
\citetalias{lia:lu:mu:oku24} (Proposition 6) further show that excluding group identity as an algorithmic input is uniformly welfare-reducing under strict group balance, where $R$ and $B$ are strictly separated by the 45-degree line.

{
\begin{figure}
\centering
\includegraphics[width=\textwidth]{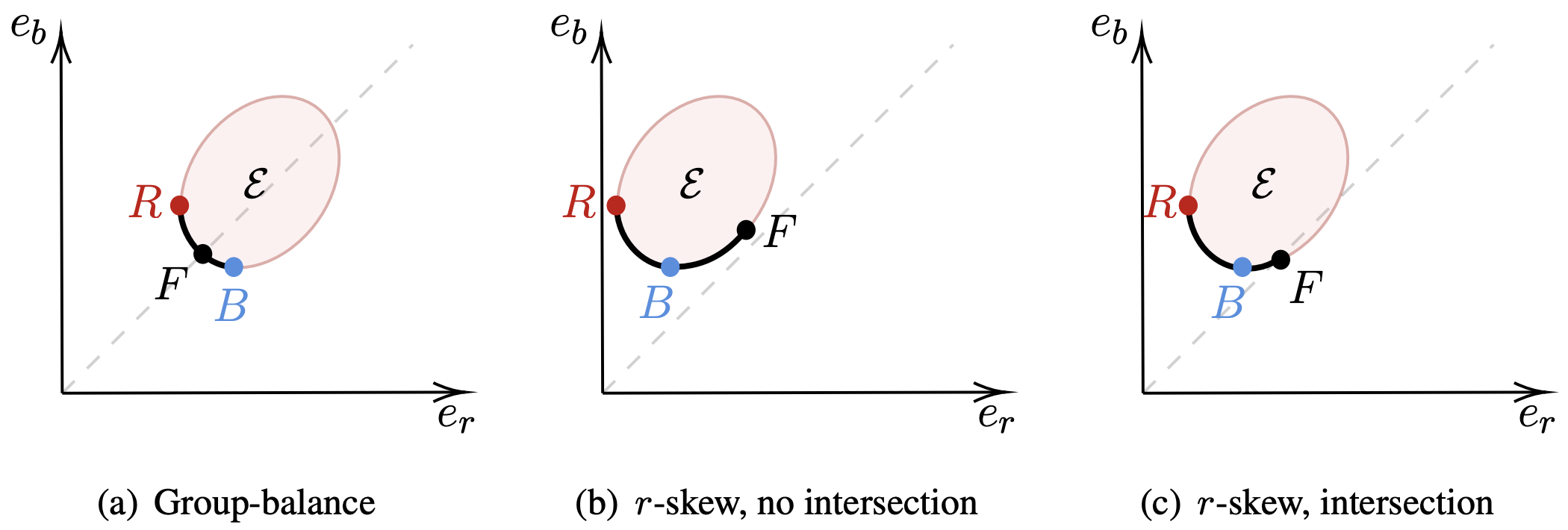}
\caption{\footnotesize{The feasible set $\E$ in pink and the frontier $\sF$ in black under different configurations of $\big(\bP, \mathcal{A}(\sX)\big)$.}}\label{fig1}
\vspace{-.35cm}
\end{figure}
}
\textbf{Notation.} We denote by $\|\cdot\|_E$, $\|\cdot\|_{L^2(\bP)}$, $\|\cdot\|_{\infty}$, respectively, the Euclidean norm, the $L^2$-norm under the probability measure $\bP$, and the $L^\infty$-norm (or sup-norm). For a vector $\mathbf{a}$, let $\|\mathbf{a}\|_{L^2(\bP)}\equiv\bigl\|\|\mathbf{a}\|_E\bigr\|_{L^2(\bP)}$ and $\|\mathbf{a}\|_{\infty}$ be the supremum over the largest component of $\mathbf{a}$. 
For a matrix $\mathbf{A}$, let $\|\mathbf{A}\|_{\max}$ denote its max norm (the maximum absolute value among its entries). 
For two sequences $a_n$ and $b_n$, $a_n\lesssim b_n$ means $a_n\leq c\cdot b_n$ for some constant $c>0$.

\section{Support Function Based Characterizations}\label{sec:sup_fun}
We leverage the convexity of the feasible set $\E$ to characterize it by its support function and express the points $\R$, $\B$, $\F$, and the FA-frontier $\sF$ in Eq.~\eqref{eq:FA_frontier} through this support function.
We begin by observing that Eq.~\eqref{eq:define_e} and the law of iterated expectations yield
\begin{align}
e_g(a) 
&=\mathbb{E}\left[a(\X)\mathbb{E}[\ell(1,\Y)\mathds{1}(\G=g)|\X]+(1-a(\X))\mathbb{E}[\ell(0,\Y)\mathds{1}(\G=g)|\X]\right]/\bP(\G=g)\notag\\
&\equiv\mathbb{E}\left[a(\X)\tfrac{\theta_1^g(\X)}{\mu_g}+(1-a(\X))\tfrac{\theta_0^g(\X)}{\mu_g}\right],\label{eq:expression:e_g}
\end{align}
where we denote $\theta_d^g(\X)\equiv\mathbb{E}[\ell(d,Y)\mathds{1}\{G=g\}|\X]$ the (measurable) conditional expectation of $\LL_d^g\equiv\ell(d,Y)\mathds{1}\{G=g\}$ given $\X$; $\mu_g\equiv \bP(\G=g)$ the population proportion of group $g\in\{r,b\}$; 
and the expectation in Eq.~\eqref{eq:expression:e_g} is taken with respect to the population marginal distribution of the covariates, $\bP(\X)$.
To make sure that Eq.~\eqref{eq:expression:e_g} is well defined, we assume:
\begin{asm}[(Moment Restrictions)]
\label{asm:moments}
\textit{For some constants $0<c_1<1$ and $0<c_2<\infty$, $\mu_g\in(c_1,1-c_1)$ and $\ess\sup_{X\in\sX}\mathbb{E}\left[\left(\LL_d^g\right)^2\big|\X\right]<c_2$, for all $d\in\{0,1\}, g\in\{r,b\}$.
}
\end{asm}
Throughout, we let $\btheta(\X)\equiv[\theta_1^r(\X) ~~ \theta_0^r(\X) ~~ \theta_1^b(\X) ~~ \theta_0^b(\X)]^\intercal$ and
\begin{align}    
\btheta_d(\X)&\equiv[\theta_d^r(\X)~~\theta_d^b(\X)]^\intercal, ~~d\in\{0,1\},\label{eq:def_theta_d}\\
\cM&\equiv\diag(1/\mu_r,1/\mu_b).\label{eq:cM}
\end{align}

\subsection{Support Function of the Feasible Set}
\label{sec:deriv-supp-func}
Given Eqs.~\eqref{eq:expression:e_g}-\eqref{eq:def_theta_d}-\eqref{eq:cM}, $\E$ can be written as
\begin{align}
\E&\equiv \left\{\bigl(\er(a),\eb(a)\bigr)\in\mathbb{R}^2:a\in\mathcal{A}(\sX)\right\}\notag\\
&=\big\{\mathbb{E}[\cM\vartheta(\X)]: \vartheta(\X)\in \conv\left(\{\btheta_0(\X), \btheta_1(\X)\}\right) \big\}=\mathbf{E}\left[\cM\linseg(\X)\right],\label{eq:E_as_Aumann}
\end{align}
with $\conv(\cdot)$ the convex hull of the set in parentheses, $\linseg(\X)\equiv \conv\left(\{\btheta_0(\X),
\btheta_1(\X)\}\right)$ a random interval, $\cM$ in Eq.~\eqref{eq:cM}, and  $\mathbf{E}\left[\cM\linseg(\X)\right]$ the \emph{Aumann expectation} of the scaled random interval $\cM\linseg(\X)$ \citep[Example 1.11 and Def. 3.1]{mol:mol18}.

As the set $\E$ is non-empty, compact, and convex, its \textit{support function} in each direction 
$\q=[\q_1~\q_2]^\intercal\in\bS^1\equiv\{\v\in\mathbb{R}^2:\, \Vert \v\Vert_E=1\}$, defined as
\begin{align*}
    \h_{\E}(\q) \equiv \max_{e\in\E} \q^\intercal e,
\end{align*}
uniquely characterizes $\E$ through the identity \cite[Chapter 13]{roc97}
\begin{align}
    \E=\bigcap_{\q\in\bS^1}\left\{z\in\mathbb{R}^2:\q^\intercal z\leq \h_\E(\q)\right\}.\label{eq:E_through_h}
\end{align}
We next provide a closed-form expression for $\h_{\E}(\q)$.
\begin{prop}
    \label{prop:sf}
    \textit{Let Assumption~\ref{asm:moments} hold. Then:
        \begin{align}
        \h_\E(\q)&=\mathbb{E}\left[\max\{(\cM\q)^\intercal\btheta_0(\X),(\cM\q)^\intercal\btheta_1(\X)\}\right]\notag\\
        &=\mathbb{E}\bigl[(\cM\q)^\intercal\bLL_0+(\cM\q)^\intercal(\bLL_1-\bLL_0)\mathds{1}\{k(\btheta(\X),\cM\q)>0\}\bigr],\label{eqn:sf-new}
    \end{align}
    where $k(\btheta,\cM\q)\equiv(\cM\q)^\intercal\bigl(\btheta_1(\X)-\btheta_0(\X)\bigr)$, $\bLL_d\equiv[\LL_d^r~~\LL_d^b]^\intercal$, and $\LL_d^g\equiv\ell(d,Y)\mathds{1}\{G=g\}$. }
\end{prop}

\noindent The support function $\h_\E(\q)$ is our key inferential tool. One can garner the intuition behind its closed-form expression by rewriting $e_g(a)=\mathbb{E}\left[\tfrac{\theta_0^g(\X)}{\mu_g}\right]+\mathbb{E}\left[a(\X)\tfrac{\theta_1^g(\X)-\theta_0^g(\X)}{\mu_g}\right]$ and
\begin{align*}
    \h_\E(\q)&=\mathbb{E}\left[\q_1\tfrac{\theta_0^r(\X)}{\mu_r}+\q_2\tfrac{\theta_0^b(\X)}{\mu_b}\right]+\max_{a\in\mathcal{A}(\sX)}\mathbb{E}\left[a(\X)k\left(\btheta(\X),\cM\q\right)\right].
\end{align*}
The maximum in the above expression is achieved by the algorithm $a^{\texttt{opt}}(\X;\q)=\mathds{1}\{k(\btheta(\X),\cM\q)>0\}$, yielding Eq.~\eqref{eqn:sf-new} upon applying the law of iterated expectations.
\begin{remark}
    We allow for \emph{randomized decision rules} and for $\mathcal{A}(\sX)$ to be unrestricted.
    If instead the family of algorithms is restricted a priori (e.g., by capacity constraints) so that $a(\X)=\Pr(D=1|\X)\in[\underline{a}(\X),\bar{a}(\X)]$, $0\le\underline{a}(\X)\le\bar{a}(\X)\le 1$, with $\underline{a}(\cdot),\bar{a}(\cdot)$ known functions, our analysis continues to apply by replacing $\{\btheta_0(\X),\btheta_1(\X)\}$ with $\{\bar{a}(\X)\btheta_0(\X)+(1-\bar{a}(\X))\btheta_1(\X),\underline{a}(\X)\btheta_0(\X)+(1-\underline{a}(\X))\btheta_1(\X)\})$.
    In Appendix~\ref{app:sec:threshold_rules}, we also show that our results continue to hold if one restricts attention to \emph{threshold rules} of the form $D = \mathds{1}\{a(\X) \ge 0\}$ for unrestricted $a:\sX\mapsto\mathbb{R}$ (and in fact $a^{\texttt{opt}}(\X;\q)$ is a threshold rule), or to \emph{linear threshold rules} where $a(\X)=[1;~\X]^\intercal\beta$ for some $\beta\in\mathbb{R}^{d_\X+1}$, provided the space of algorithms is sufficiently rich (see Assumption~\ref{asm:rich-A}).
\end{remark}

\subsection{Best Group-Specific Points on the FA-Frontier}
\label{sec:RB}
We next define the \textit{support set} of $\E$ in direction $\q\in\bS^1$:
\begin{align}
    \ss_\E(\q)\equiv\E\cap \left\{z\in\mathbb{R}^2: \q^\intercal z = \h_{\E}(\q)\right\},\label{eq:define_SupportSet}
\end{align}
i.e., $\ss_\E(\q)$ is the intersection between $\E$ and the hyperplane with normal vector $\q$ and constant $\h_{\E}(\q)$, hence collecting the extreme point(s) of $\E$ in direction $\q$. 
To derive a closed-form expression for $\ss_\E(\q)$, we impose the following assumption:
\begin{asm}[(Margin Condition)]
\label{asm:smooth-X}
\textit{There exists $0<m\leq1$ such that for any $\delta>0$,
$\sup_{\q\in \bS^1} \mathbb{P}(|k(\btheta(\X),\cM\q)|<\delta)\lesssim \delta^m$, with the probabilities taken with respect to $\bP(\X)$.}
\end{asm}

Assumption \ref{asm:smooth-X} is a margin condition that guarantees sufficient smoothness in the distribution of $\btheta_1(\X)-\btheta_0(\X)$ for us to show that $\h_\E(\q)$ is differentiable in $\q\in\bS^1$ and consequently $\ss_\E(\q)$ includes a single element in each direction $\q$ \cite[Corollary 1.7.3]{sch93}.
We further show that $\ss_\E(\q)$ equals the gradient of the support function $\h_{\E}(\cdot)$ with respect to $\q$. 
We denote by $\ss_\E(\q)$ both the singleton set and its only element.
\begin{prop}
\label{prop:ss}
\textit{
Let Assumptions~\ref{asm:moments}-\ref{asm:smooth-X} hold.
Then, 
    \begin{align}
        \nabla_\q\h_{\E}(\q)=
        \mathbb{E}\bigl[\cM\bLL_0+\cM(\bLL_1-\bLL_0)\mathds{1}\{k(\btheta(\X),\cM\q)>0\}\bigr]= \ss_\E(\q),\label{eq:ss}
\end{align}
uniformly in $\q\in\bS^1$, where $\ss_\E(\q)$ is uniformly continuous in $\q\in\bS^1$. 
}
\end{prop}
Let $\mathfrak{u}_1\equiv[-1\,\,\,0]^\intercal$ and $\mathfrak{u}_2\equiv[0\,\,\,-1]^\intercal$. The best group-specific points satisfy:
\begin{align}
    \R = \ss_\E(\mathfrak{u}_1)  
    \quad
    \text{and}\quad
    \B =  \ss_\E(\mathfrak{u}_2). 
   \label{eq:coord}
\end{align}
\begin{remark}
    \label{rem1}
    Assumption \ref{asm:smooth-X} allows for discrete covariates (see Proposition \ref{prop:no_kinks_condition}), but is violated if $\X$ includes discrete covariates \textit{only} (see Appendix~\ref{app:sec:threshold_rules}), in which case we can use data jittering to satisfy Assumption \ref{asm:smooth-X} by adding to one discrete covariate a small amount of smoothly distributed noise, thereby garbling that input. The feasible set constructed with a jittered covariate can be made arbitrarily close to the true feasible set \citep[this result can be proved by adapting arguments in][Lemma 8]{cha:che:mol:sch18}.
\end{remark}

\begin{remark}
    \label{remark:kinks}
    The argument in \citet[Supplemental Appendix B.2.3]{bon:mag:mau12} shows that $\E$ has no \emph{kinks} (i.e., no support points such that there exist at least two distinct vectors $\q$ and $\v$ satisfying $\ss_\E(\q)=\ss_\E(\v)$) if and only if for any $\q,\v\in\bS^1,\q\neq\v$, 
    \begin{align}
    \mathbb{P}\big(k(\btheta(\X),\cM\q)>0,k(\btheta(\X),\cM\v)<0\big)>0.\label{eq:no_kinks_cond}
    \end{align}
    If $(\btheta_1(\X)-\btheta_0(\X))$ admits a positive density function on a ball of positive radius that includes zero, Eq.~\eqref{eq:no_kinks_cond} is satisfied. Assumption~\ref{asm:condition_smoothX1X2} in Appendix \ref{appn:B} is an example of low level conditions yielding this result.
    The absence of kinks renders simpler limit distributions for the test statistics that we put forward in Sections \ref{sec:set_estimation}-\ref{sec:distance_F}. Nonetheless, Eq.~\eqref{eq:no_kinks_cond} is \emph{not} needed for our results to apply and we provide a full treatment allowing for the presence of kinks.
\end{remark}

\vspace{-.5cm}
\subsection{Fairest Point on the FA-Frontier}
\label{sec:fairest}

{
\begin{figure}\captionsetup[subfigure]{font=footnotesize}
\centering
\includegraphics[width=0.85\linewidth]{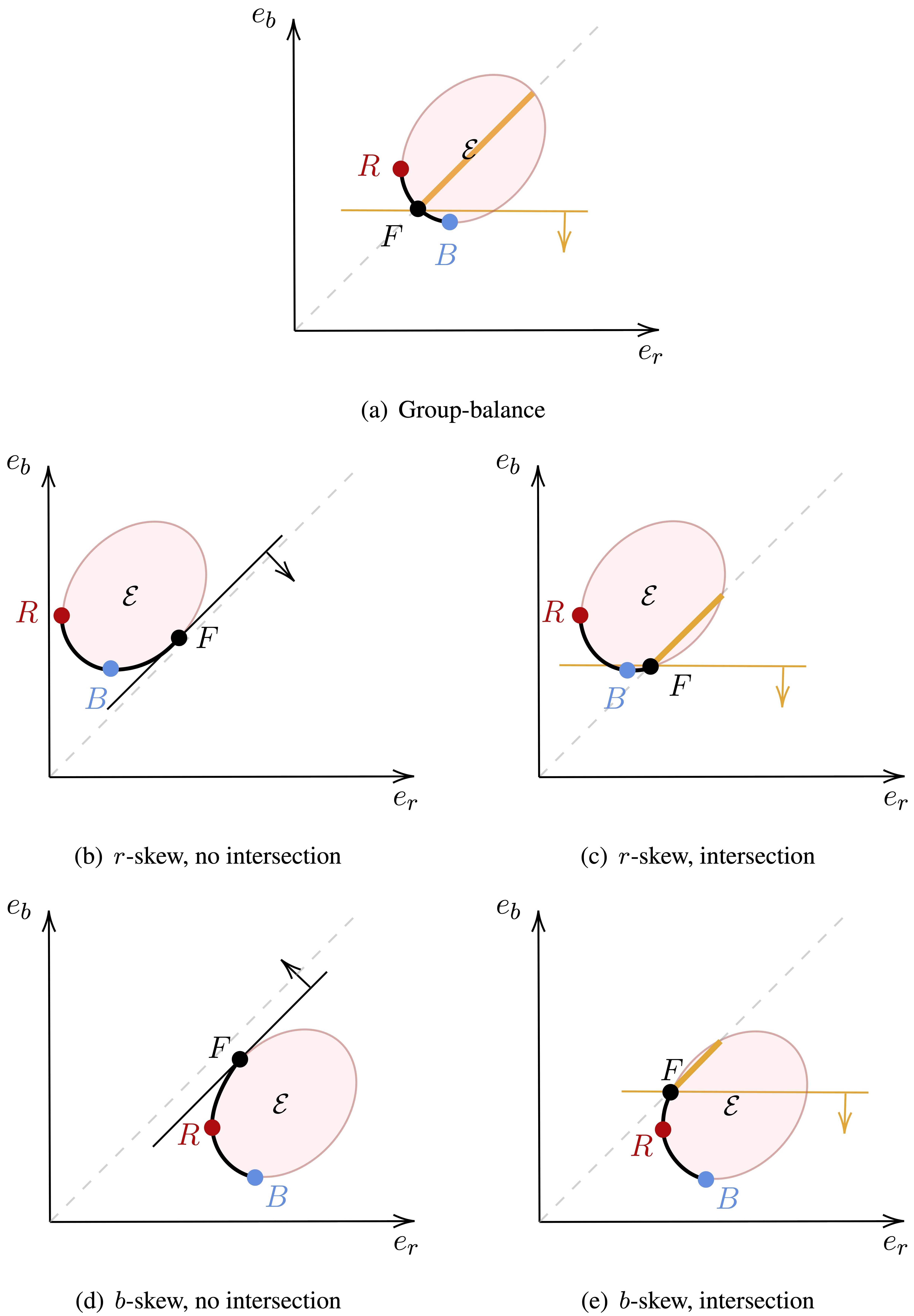}
\caption{\footnotesize{All possible locations of the feasible set $\E$ relative to the 45$^\circ$ line, $\mathcal{H}_{45}$. In panels (a), (c), and (e), $\E$ intersects with $\mathcal{H}_{45}$, and the fairest point $\F$ is the support set of $\Tilde{\E}$ in direction $\mathfrak{u}_1$, where $\Tilde{\E}$ is the intersection between $\E$ and $\mathcal{H}_{45}$ (depicted as an orange line segment). In panels (b) and (d), $\E\cap\mathcal{H}_{45}=\emptyset$, and $\F$ is the support set of ${\E}$ in direction $\mathfrak{u}_2-\mathfrak{u}_1$ for the $r$-skewed case in (b) and $\mathfrak{u}_1-\mathfrak{u}_2$ for the $b$-skewed case in (d).}}\label{fig:F}
\end{figure}
}
Determining the coordinates of the fairest point $\F$ is more laborious, as they depend on whether $\E$ lies entirely above, entirely below, or on top of the 45-degree line. Figure \ref{fig:F} illustrates all possible locations of the feasible set $\E$ relative to the 45-degree line. When $\E$ lies entirely on one side of the 45-degree line, as depicted in panels (b) and (d), $\F$ is the support set of $\E$, respectively, in directions $\mathfrak{u}_2-\mathfrak{u}_1=[1\,\,-1]^\intercal$ and $\mathfrak{u}_1-\mathfrak{u}_2=[-1\,~~\,1]^\intercal$:
\begin{align}
    F&=\ss_\E(\mathfrak{u}_2-\mathfrak{u}_1)
    ~~~~~\text{when } \E \text{ lies entirely above the 45-degree line,}\label{eq:F_above}\\
    F&=\ss_\E(\mathfrak{u}_1-\mathfrak{u}_2)
    ~~~~~\text{when } \E \text{ lies entirely below the 45-degree line.}\label{eq:F_below} 
\end{align}
Complications arise when $\E$ intersects with the 45-degree line, as depicted in panels (a), (c), and (e) of Figure \ref{fig:F}. In this case, the direction at which we can obtain $\F$ as the support set of $\E$ is difficult to determine. To circumvent this challenge, we propose a different approach.
We focus on the convex set that results when $\E$ intersects the 45-degree line:
\begin{align}
    \Tilde{\E}\equiv\E\cap \mathcal{H}_{45}, \quad \text{where} \quad \mathcal{H}_{45}\equiv\{e\in\mathbb{R}^2: e_r=e_b\}.\label{eq:H45}
\end{align}
The new set $\Tilde{\E}$ is depicted in panels (a), (c), and (e) of Figure \ref{fig:F} as an orange line segment. In these cases, $\F$ is the support set of $\Tilde{\E}$ in direction $\mathfrak{u}_1$ with identical values for its two coordinates. Hence,
\begin{align}
    \F=(\mathfrak{u}_1+\mathfrak{u}_2)\h_{\Tilde{\E}}(\mathfrak{u}_1) ~~~~~\text{when } \E \text{ intersects with the 45-degree line.}\label{eq:F_intersect}
\end{align}

We are left with providing an expression for $\h_{\Tilde{\E}}(\q)$. When $\E$ intersects $\mathcal{H}_{45}$, 
\begin{align}
    \h_{\Tilde{\E}}(\q)=\inf_{p_1,p_2\in\mathbb{R}^2: p_1+p_2=\q} \h_{\E}(p_1)+\h_{\mathcal{H}_{45}}(p_2)=\inf_{c\in\mathbb{R}} \h_\E\left(\q-c\begin{bmatrix}
        1\\
        -1
    \end{bmatrix}\right),\label{eq:h_tildeE}
\end{align}
where the first equality follows from \citet[Corollary 16.4.1]{roc97}, and the second follows from the fact that $\h_{\mathcal{H}_{45}}(p_2)$ is bounded from above only along the direction $p_2=c[1\,\,-1]^\intercal$ for any scalar $c\in\mathbb{R}$, in which case $\h_{\mathcal{H}_{45}}(p_2)=0$. 
Importantly, the last expression in Eq.~\eqref{eq:h_tildeE} is always well defined, regardless of whether $\E$ intersects with $\mathcal{H}_{45}$ or not; the infimum equals a bounded scalar in the case of intersection and $-\infty$ otherwise.\footnote{To see this, let $\q(c)\equiv\q-c[1~~-1]^\intercal$ and note that $\h_\E(\q(c))=\|\q(c)\|_E\cdot\h_\E\big(\frac{\q(c)}{\|\q(c)\|_E}\big)$. When $\E$ intersects with $\mathcal{H}_{45}$, $\h_\E\big(\frac{\q(c)}{\|\q(c)\|_E}\big)$ is bounded and nonnegative along the sequences $c\to\infty$ and $c\to-\infty$, but when $\E$ and $\mathcal{H}_{45}$ are disjoint, it takes negative value along one of these sequences, yielding $\inf_c \h_\E(\q(c))=-\infty$.}\label{ftnt:bound_inf_E_tilde}

\vspace{-.5cm}
\subsection{Support Function-Based Characterization of the FA-Frontier}
\label{sec:supp-func-FAfrontier}
We next show that the FA-frontier put forward by \citetalias{lia:lu:mu:oku24} and reproduced in our Eq.~\eqref{eq:FA_frontier} can be characterized using the support function of the feasible set $\E$ and that of an auxiliary set that we introduce in this subsection.

Given an algorithm $a^*\in\mathcal{A}(\sX)$ that induces the risk pair $e^*=[e_r^*, e_b^*]^\intercal\in\E$, let
\begin{align}
\cC(e^*)=\left\{e\in\mathbb{R}^2: e_r\leq e_r^*, e_b\leq e_b^*, |e_r-e_b|\leq |e_r^*-e_b^*|\right\},\label{eq:Cstar}
\end{align}
{
\begin{figure}\captionsetup[subfigure]{font=footnotesize}
\centering
\includegraphics[width=0.9\linewidth]{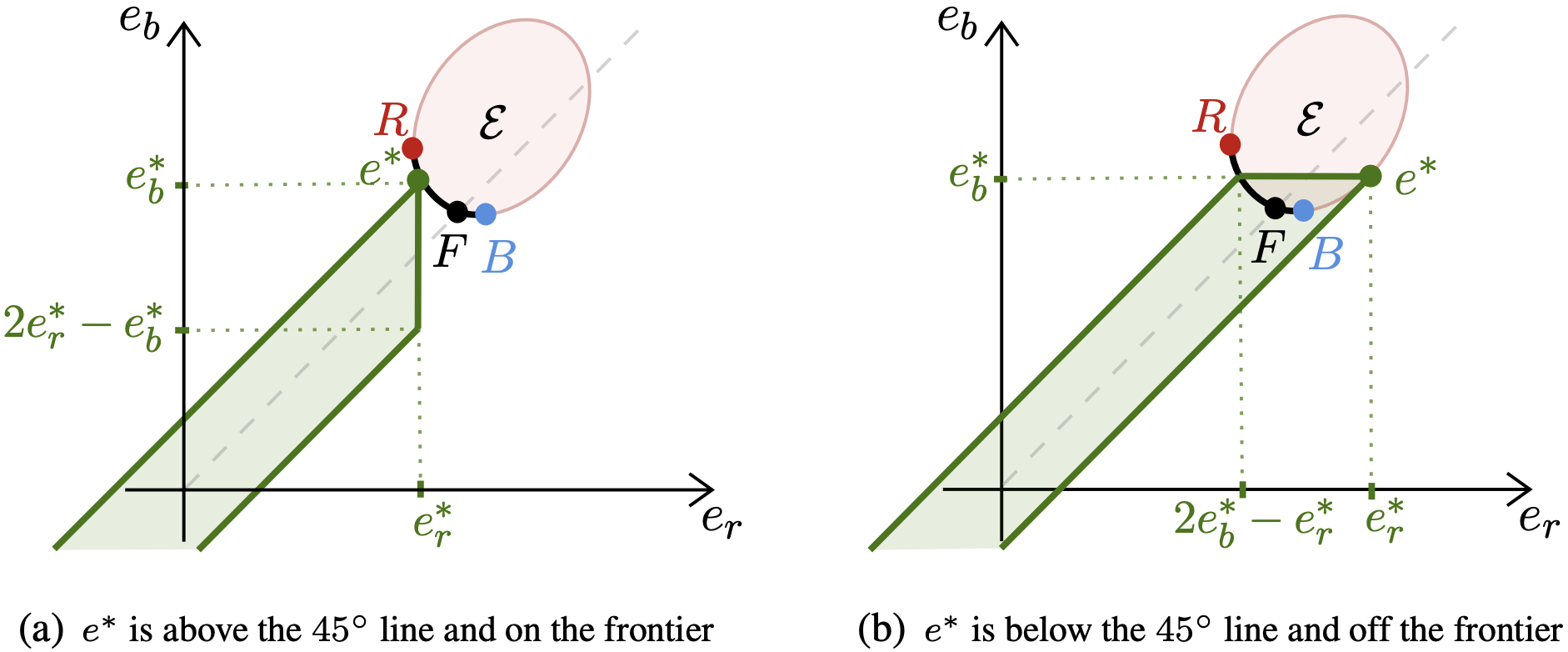}
\caption{\footnotesize{The set $\cC(e^*)$, which collects all improvements relative to $e^*$, is the region shaded in green. Its shape depends on whether $e^*$ lies above or below the $45^\circ$ line. Panel (a) shows an example of the case where $e^*$ lies on the frontier and there exists a hyperplane that properly separates $\cC(e^*)$ and $\E$, whereas in panel (b) $e^*$ is not on the frontier and no hyperplane can separate $\cC(e^*)$ and $\E$. 
}}
\label{fig2}
\end{figure}
}
\noindent denote the set of risk allocations $e\in\mathbb{R}^2$---whether or not they are feasible---that are both weakly more accurate and weakly fairer than $e^*$. 
The set $\cC(e^*)$, depicted in the two panels of Figure \ref{fig2} as the shaded green regions corresponding to two different values of $e^*$, is a closed and convex subset of $\mathbb{R}^2$.
Panel (a) depicts a case where $e^*\in\sF$, whereas panel (b) depicts a case where $e^*\notin\sF$. The key insight from the figure, which we prove can be used to characterize the FA-frontier using the support function of $\E$ and that of $\cC(e^*)$, is that for any $e^*\in\sF$, the sets $\E$ and $\cC(e^*)$ can be properly separated \citep[see, e.g.,][p.12, for a definition of proper separation]{sch93}, while in the case where $e^*\notin\sF$ they cannot.
\begin{prop}
\label{prop:sh}
\textit{
Under Assumptions~\ref{asm:moments}-\ref{asm:smooth-X}, $e^*\in\sF$ if and only if there exists a hyperplane that properly separates $\cC(e^*)$ and $\E$, i.e., there exists $\q\in\bS^1$ such that $\h_{\sC}(\q)=-\h_\E(-\q)$. 
Let $\tilde{\bS}^1\equiv\bS^1\setminus \{\q\in\bS^1:\q_1+\q_2<0\}$ and $[\,\,\cdot\,\,]_-\equiv-\min\{\,\cdot\,, 0\}$.
We then have that
\begin{align}
    \sF &= \left\{e^*\in\E:~\left[\max_{\q\in\tilde{\bS}^1}(-\h_{\cC(e^*)}(\q)-\h_\E(-\q))\right]_-=0\right\}.\label{eq:FA_frontier_supp_func}
\end{align}
}
\end{prop}
Remarkably, this characterization of the FA-frontier does not require knowledge of whether $\E$ intersects the $45^o$ line, or whether there is group balance or skew.
As we show in Sections~\ref{sec:set_estimation}-\ref{sec:test}, the characterization of the FA-frontier in Eq.~\eqref{eq:FA_frontier_supp_func} is very helpful for estimation and inference, and for testing whether there exists an LDA to a given algorithm.

\section{Support Function Estimator and Its Asymptotic Distribution}
\label{sec:estimation}

In practice, the policymaker does not have perfect knowledge of $\bP$. Hence, $\h_{\E}(\q)$ can only be estimated from a finite sample. Let a sample of size $n$, $\{(\Y_i,\G_i,\X_i)\}_{i=1}^n$, drawn independently and identically from $\bP$, be available.
Recall Eq.~\eqref{eqn:sf-new}: $\h_\E(\q)=\mathbb{E}\bigl[(\cM\q)^\intercal\bLL_0+(\cM\q)^\intercal(\bLL_1-\bLL_0)\mathds{1}\{k(\btheta(\X),\cM\q)>0\}\bigr]$ with $k(\btheta,\cM\q)\equiv(\cM\q)^\intercal\bigl(\btheta_1(\X)-\btheta_0(\X)\bigr)$, so that $\btheta(\X)\equiv[\theta_1^r(\X) ~ \theta_0^r(\X) ~ \theta_1^b(\X) ~ \theta_0^b(\X)]^\intercal$ enters the expression for $\h_\E(\q)$ only through $(\theta_1^r(\X)-\theta_0^r(\X))$ and $(\theta_1^b(\X)- \theta_0^b(\X))$.
We propose estimating $\h_{\E}(\q)$ by first estimating the finite dimensional parameters $\cM$ by sample averages and the nuisance functions $\Delta\btheta(\X)\equiv[(\theta_1^r(\X)-\theta_0^r(\X)) ~~ (\theta_1^b(\X)- \theta_0^b(\X))]^\intercal$ by flexible machine learning methods (allowing the complexity of the parameter space containing the estimator to grow with sample size), and then plugging their estimators, denoted $\hcM$ and $\widehat{\Delta\btheta}(\X)$, into the sample analogue of Eq.~\eqref{eqn:sf-new}.
Following the literature on debiased machine learning \cite[e.g., ][]{new94,
che:che:dem:duf:han:whi:rob18, sem:che21, che:esc:ich:new:rob22, ich:new22}, we show that, at the population $\cM$, the moment in Eq.~\eqref{eqn:sf-new} is Neyman-orthogonal and hence ``insensitive'' to the errors in the first-stage estimation of $\Delta\btheta$ \citep{ney79, ney95}. We use sample splitting to relax the otherwise needed Donsker condition that limits the complexity of the relevant parameter space \cite[e.g., ][]{bic82, rob:li:tch:van08, rob:li:muk:tch:van17}, and we account for the estimation error of $\cM$. 

Recall that $\bLL_d\equiv[\LL_d^r~~\LL_d^b]^\intercal$, with $\LL_d^g\equiv \ell(d,\Y)\mathds{1}\{\G=g\}$, and $\theta_d^g(\X)=\mathbb{E}[\LL_d^g|\X]$. Let $\Delta\LL^g\equiv\LL_1^g-\LL_0^g$, so that $\Delta\theta^g(\X)=\mathbb{E}[\Delta\LL^g|\X]$. To learn the nuisance parameter $\Delta\theta^g(\X)$, the effective label that, given $\X$, we train machine learners to predict is $\Delta\LL^g$. Let $\Theta$ denote the convex nuisance parameter space (a subset of a vector space with $L^2(\bP)$ norm) to which $\Delta\btheta=[\Delta\theta^g(\X)~~\Delta\theta^b(\X)]^\intercal$ belongs and $\Delta\bvartheta\equiv[\Delta\vartheta^r(\X)~~\Delta\vartheta^b(\X)]^\intercal$ be a generic element from $\Theta$ (to simplify notation, we drop the dependence of $\btheta$ on $\X$ unless explicitly needed).
For the $i$-th observation and a given $2\times2$ diagonal matrix $\mathring{\cM}$, we note that
\begin{align}
k\bigl(\bvartheta(\X_i),\mathring{\cM}\q\bigr)=\q^\intercal\mathring{\cM}\Delta\bvartheta(\X_i),\label{eq:k_Delta}
\end{align}
and using Eq.~\eqref{eq:k_Delta} to recognize that we estimate $\Delta\btheta$ while keeping the notation as close as possible to that in Eq.~\eqref{eqn:sf-new}, we define the mapping $\zeta_i(\mathring{\cM}\q;\,\cdot\,) : \Theta \to \mathbb{R}$ as
\begin{align} 
\zeta_i(\mathring{\cM}\q;\bvartheta)&\equiv(\mathring{\cM}\q)^\intercal\bLL_{0_i}+(\mathring{\cM}\q)^\intercal(\bLL_{1_i}-\bLL_{0_i})\cdot\mathds{1}\bigl\{k\bigl(\bvartheta(\X_i),\mathring{\cM}\q\bigr)>0\bigr\}.\label{eqn:influence-part}
\end{align}
By Proposition \ref{prop:sf}, $\h_{\E}(\q)=\mathbb{E}[\zeta_i(\cM\q;\btheta)]$. We next show that, when $\mathring{\cM}$ is fixed at the population $\cM$, the score function $\zeta_i(\cM\q;\bvartheta)$ is Neyman-orthogonal at $\Delta\bvartheta=\Delta\btheta$. 
\begin{prop}
\label{prop:orthogonality}
\textit{Let Assumptions~\ref{asm:moments}-\ref{asm:smooth-X} hold and $\sup_{\Delta\bvartheta\in\Theta}\|\Delta\bvartheta-\Delta\btheta\|_{L^2(\bP)}<\infty$. Then the map $\Delta\bvartheta\mapsto\mathbb{E}[\zeta_i(\cM\q;\bvartheta)]$ satisfies the Neyman orthogonality condition uniformly in $\q\in\bS^1$, i.e., for any $\Delta\bvartheta\in\Theta$ and scalar $t\in(0,1)$,
\vspace{-.25cm}
\begin{align*}
    \lim_{t\to0}\sup_{\q\in\bS^1}\biggl|\frac{1}{t}\left(\mathbb{E}\bigl[\zeta_i\bigl(\cM\q;\btheta+t(\bvartheta-\btheta)\bigr)\bigr]-\mathbb{E}\bigl[\zeta_i(\cM\q;\btheta)\bigr]\right)\biggr|=0.
\end{align*}}
\end{prop}
Intuitively, Proposition \ref{prop:orthogonality} shows that, under its maintained assumptions, the first-order mistake in the sign of $k(\btheta(\X),\cM\q)$ due to the estimation error in $\Delta\btheta(\X)$ is negligible.

We next show that, provided $n^{1/4}$-consistent first-stage estimators are available, the estimated support function using sample splitting and cross-fitting, as described in Definition \ref{def:cross-fitting} below, converges to a Gaussian process uniformly in $\q\in\bS^1$. The proof requires showing the residual in estimating the indicator functions is bounded by the quadratic rate of convergence of the nuisance parameter, which we establish under this assumption:
\begin{asm}[(Nuisance Parameter Structure)]
\label{asm:nuisance-structure}
\textit{
There is a known partition of $\X$, $$\X=(\X_1,\X_2,\X_{[3:d_\X]}),$$ where $\X_1,\X_2\in\mathbb{R}$ and $\X_{[3:d_\X]}\in\mathbb{R}^{(d_X-2)}$ are such that $(\X_1,\X_2)$ has a bounded support, the density of $|k(\btheta,\cM\q)|$ conditional on $\X_{[3:d_\X]}$ is uniformly bounded in $\q\in\bS^1$, and
\begin{align*}   
\Delta\theta^g(\X)=\alpha^g\X_1+\beta^g\X_2+\eta^g(\X_{[3:d_\X]}), 
\end{align*}
for some $\alpha^g, \beta^g\in\mathbb{R}$ satisfying 
$\alpha^b\cdot\beta^r\neq\alpha^r\cdot\beta^b$ and $\eta^g\in H$ for convex $H\subseteq\Theta$, $g\in\{r, b\}$.
}
\end{asm}

Assumption \ref{asm:nuisance-structure} requires 
$\Delta\btheta(\X)$ 
to be linearizable in a known set of covariates $(\X_1,\X_2)$ and that $\alpha^b\cdot\beta^r\neq\alpha^r\cdot\beta^b$. 
This assures that for each $\q\in\bS^1$, $k(\btheta,\cM\q)$ depends on at least one of $\X_1$ or $\X_2$, so that we can employ a proof technique similar to that in \citet[Lemma 4.1]{sem23} to show that the bias induced by errors in estimating the sign of $k(\btheta,\cM\q)$ is bounded by the desired quadratic $L^2$-rate of the nuisance estimator, accommodating a menu of flexible machine learners in the first stage. For example, one can adapt the method in \citet[pp.~935-936]{rob88} to estimate $\Delta\theta^g$.
Example machine learning methods that, under suitable conditions, are $n^{1/4}$-consistent in the $L^2$-norm include $\ell_1$-penalized methods, boosting, regression trees and forests, and neural nets \citep[see, e.g.,][and references therein, for a discussion of machine learners compatible with the rate requirement]{che:che:dem:duf:han:whi:rob18}. Assumption \ref{asm:nuisance-structure} can be eliminated at the price of using a more restrictive class of machine learning methods, by deriving rate bounds that depend on the squared $L^\infty$-rate of the first-stage estimation \citep[e.g., the Lasso, for which an $L^\infty$-rate is established for approximately sparse models, see][]{bel:che:fer:han17, sem23}. Alternatively, Assumption \ref{asm:nuisance-structure} can be eliminated by smoothing the indicators \citep[e.g.,][]{che:aus:syr23, par24} at the cost of introducing an additional tuning parameter that controls the degree of smoothing.\footnote{An earlier version of this paper \citep{liu:mol24v1} obtains $\sqrt{n}$-Gaussianty of the second-stage estimator without imposing Assumption \ref{asm:nuisance-structure}, but under the classical Donsker condition that limits $\Theta$'s complexity.} 
Importantly, Assumption \ref{asm:nuisance-structure} implies the margin condition in Assumption \ref{asm:smooth-X} with $m=1$ whenever $(\X_1,\X_2)$ is continuously distributed with a bounded density and $\X_{[3:d_\X]}$ can be either continuous or discrete; see Assumption \ref{asm:condition_smoothX1X2} and Proposition \ref{prop:no_kinks_condition}.

We next define cross-fitting, adapting Definition 3.2 in \citet{che:che:dem:duf:han:whi:rob18}:
\begin{defn}[Cross-Fitting]
\label{def:cross-fitting}
\textit{(i) Randomly partition the size-$n$ sample with observations indexed by $i\in[n]\equiv\{1,...,n\}$ to $K\geq2$ subsamples, each of size $n/K$ (assumed to be an integer), where $K$ is a fixed integer. (ii) For each partition $k\in[K]\equiv\{1,...,K\}$ with observations indexed by the set $I_k\subset[n]$, estimate $\Delta\btheta$ by $\widehat{\Delta\btheta}_k\equiv[(\widehat{\Delta\theta}^r)_k~~(\widehat{\Delta\theta}^b)_k]^\intercal$, where each $(\widehat{\Delta\theta}^g)_k=(\widehat{\alpha}^g)_k\X_1+(\widehat{\beta}^g)_k\X_2+(\widehat{\eta}^g)_k(\X_{[3:d_\X]})$ is estimated
using only observations from $I_k^c\equiv[n]\backslash I_k$. For $i\in I_k$, let $\widehat{\Delta\btheta}(\X_i)\equiv\widehat{\Delta\btheta}_k(\X_i)$. (iii) Let $\hcM=\diag(1/\widehat{\mu}_r,1/\widehat{\mu}_b)$, with $\widehat{\mu}_g\equiv\frac{1}{n}\sum_{i=1}\mathds{1}\{G_i=g\}$, and construct the second-stage estimator as 
\begin{align}
\widehat{\h}_{\E}(\q; \widehat{\btheta})&=\frac{1}{K}\sum_{k\in[K]}\left(\frac{1}{n/K}\sum_{i\in I_k}\zeta_i(\hcM\q;\widehat{\btheta}_k)\right)\equiv\frac{1}{n}\sum_{i=1}^n\zeta_i(\hcM\q;\widehat{\btheta}).\label{eqn:cross-fit-second-stage}
\end{align}
}
\end{defn}
In Eq.~\eqref{eqn:cross-fit-second-stage}, to simplify notation and keep it as close as possible to that in Eq.~\eqref{eqn:sf-new} and $\h_{\E}(\q)=\mathbb{E}[\zeta_i(\cM\q;\btheta)]$, we use the shorthand notation $\widehat{\Delta\btheta}(\X_i)\equiv\widehat{\Delta\btheta}_k(\X_i)$ for $i\in I_k$, suppress the dependence of $\widehat{\h}_{\E}(\q; \widehat{\btheta})$ on $\hcM$, and adapt Eqs.~\eqref{eq:k_Delta}-\eqref{eqn:influence-part} to let 
\begin{align}
    \zeta_i(\hcM\q;\widehat{\btheta})&=
    (\hcM\q)^\intercal\bLL_{0_i}+(\hcM\q)^\intercal(\bLL_{1_i}-\bLL_{0_i})\cdot\mathds{1}\bigl\{k\bigl(\widehat{\btheta}(\X_i),\hcM\q\bigr)>0\bigr\},\label{eq:zeta_estimated}\\
    k\bigl(\widehat{\btheta}(\X_i),\hcM\q\bigr)&=\q^\intercal\hcM\widehat{\Delta\btheta}(\X_i)\label{eq:k_estimated}
\end{align}

Our main asymptotic result shows that $\widehat{\h}_{\E}(\q; \widehat{\btheta})$ converges to a Gaussian process uniformly in the direction $\q\in\bS^1$, where the score $\zeta_i(\cM\q;\bvartheta)$ in Eq.~\eqref{eqn:influence-part}  evaluated at $\Delta\bvartheta=\Delta\btheta$ is the influence function that governs the part of the asymptotic distribution of $\widehat{\h}_{\E}(\q; \widehat{\btheta})$ due to $\widehat{\Delta\btheta}$, and the the remaining part is attributed to estimating $\cM$:
\begin{theorem}
    \label{thm:gaussian}
\textit{
    Let Assumptions \ref{asm:moments}-\ref{asm:smooth-X}-\ref{asm:nuisance-structure} hold and $\{(\Y_i,\G_i,\X_i)\}_{i=1}^n$ be a random sample from $\bP$.
    Define a shrinking neighborhood around $\Delta\btheta$ as
    \begin{align*}
        &\Theta_n\equiv\bigl\{\Delta\bvartheta\in\Theta: 
        \forall g\in\{0,1\}, \Delta\vartheta^g(\X)=\tilde{\alpha}^g\X_1+\tilde{\beta}^g\X_2+\tilde{\eta}^g(\X_{3:d_\X}),\\
        &\hspace{3cm}\max\{|\tilde{\alpha}^g-\alpha^g|, |\tilde{\beta}^g-\beta^g|, \|\tilde{\eta}^g-\eta^g\|_{L^2(\bP)}\}=o(n^{-1/4})\bigr\}.
    \end{align*}
    Let $\widehat{\Delta\btheta}_k\in \Theta_n$ with probability approaching 1, $\forall k\in[K]$. 
    Then, for $\widehat{\h}_{\E}(\q; \widehat{\btheta})$ in Eq.~\eqref{eqn:cross-fit-second-stage},
    \begin{align*}
    \sqrt{n}\biggl(\widehat{\h}_{\E}(\q; \widehat{\btheta})-\h_{\E}(\q)\biggr)=\mathbb{G}[ \zeta_i^*(\cM\q;\btheta)]+o_p(1) \quad\text{ in }\,\, \ell^\infty(\bS^1),
    \end{align*}
    where $\mathbb{G}[\zeta_i^*(\cM\q;\btheta)]$ is a Gaussian process in $\ell^\infty(\bS^1)$ indexed by
    \begin{align*}
        \zeta_i^*(\cM\q;\btheta)\equiv\zeta_i(\cM\q;\btheta)+(\cM_i^*\q)^\intercal\cM^{-1}\ss_\E(\q), ~~\text{for}~\cM_i^*\equiv\diag\left(\frac{\mathds{1}\{\G_i=r\}}{-\mu_r^2}, \frac{\mathds{1}\{\G_i=b\}}{-\mu_b^2}\right)
    \end{align*}
    with $\zeta_i(\cM\q;\btheta)$ defined in Eq.~\eqref{eqn:influence-part} and the covariance function of $\mathbb{G}[\zeta_i^*(\cM\q;\btheta)]$ equal to
\begin{align}
\Omega(\q, \Tilde{\q})=\mathbb{E}[\zeta_i^*(\cM\q;\btheta)\zeta_i^*(\cM\Tilde{\q};\btheta)]-\mathbb{E}[\zeta_i^*(\cM\q;\btheta)]\mathbb{E}[\zeta_i^*(\cM\Tilde{\q};\btheta)].\label{eq:omega}
\end{align}
If $Var(\bLL_d|\X)$ is positive definite, then $Var(\mathbb{G}[\zeta_i^*(\cM\q;\btheta)])>0$ for each $\q\in\bS^1$.
}
\end{theorem}

\section{Estimation and Inference for the Frontier}
\label{sec:set_estimation}

\subsection{Estimation and Inference for the Feasible Set}
\label{subsec:estimate_feasible_set}
We propose an estimator of the set $\E$ based on $\hn_\E(\q)$ and Eq.~\eqref{eq:E_through_h}, given by
\begin{align}
    \widehat{\E}\equiv\bigcap_{\q\in\bS^1}\left\{z\in\mathbb{R}^2:\q^\intercal z\leq \hn_\E(\q;\widehat{\btheta})\right\},\label{eq:Lambda_hat}
\end{align}
which is convex, almost surely compact, and non-empty with probability approaching one if $\E$ has a non-empty interior. As the Hausdorff distance between two non-empty convex and compact sets $A,B\in\mathbb R^d$, denoted $\mathbf{d}_H(A,B)$, equals the uniform distance between their support functions \citep[][p.~101]{mol:mol18}, when $\widehat{\E}$ is non-empty we have
\begin{align*}
    \mathbf{d}_H(\widehat{\E},\E)=\sup_{\q\in\bS^1}\left|\hn_\E(\q;\widehat{\btheta})-\h_{\E}(\q; \btheta)\right|,
\end{align*}
By Theorem \ref{thm:gaussian} and the continuous mapping theorem, $\mathbf{d}_H(\widehat{\E},\E)\xrightarrow[]{p} 0$ and $\sqrt{n}\mathbf{d}_H(\widehat{\E},\E)\xrightarrow[]{d}\sup_{\q\in\bS^1}|\mathbb{G}[  \zeta_i^*(\cM\q;\btheta)]|$.
Hence, asymptotically valid tests of hypotheses about $\E$, and confidence sets covering it, can be obtained as in \citet[Section 2]{ber:mol08}.

\subsection{Estimation and Inference for the FA-Frontier}
\label{subsec:FA_frontier}
As shown in \citetalias{lia:lu:mu:oku24} (Theorem 1), when $\big(\bP, \mathcal{A}(\sX)\big)$ is group-balanced, $\sF$ is the curve connecting $\R$ and $\B$ and coincides with the Pareto frontier (panel (a) in Figure \ref{fig:F}).
However, when $\big(\bP, \mathcal{A}(\sX)\big)$ is $g$-skewed, $\sF$ is the curve connecting $\F$ with the feasible point that minimizes the risk for group $g$ (panels (b)-(e) in Figure \ref{fig:F}).
A further challenge is that, while it is simple to express $\F$ through $\ss_\E$ when $\E$ is fully contained in one of the two half-spaces defined by the 45-degree line $\mathcal{H}_{45}$, as shown in Eqs.~\eqref{eq:F_above}-\eqref{eq:F_below} (panels (b) and (d) in Figure \ref{fig:F}), characterizing $\F$ is not as straightforward when $g$-skew occurs but $\E\cap\mathcal{H}_{45}\neq\emptyset$.

We therefore leverage the characterization of the FA-frontier $\sF$ in Proposition \ref{prop:sh}, whereby $\sF = \left\{e\in\E:~\left[\max_{\q\in\tilde\bS^1}(-\h_{\cC(e)}(\q)-\h_\E(-\q))\right]_-=0\right\}$, and the definition of the set $\cC(\cdot)$ in Eq.~\eqref{eq:Cstar} to sidestep these difficulties. We propose to estimate $\sF$ using
\begin{align}
\small
    \widehat{\sF} = \left\{e\in\bB_C:~\left[\max_{\q\in\bS^1}(\q^\intercal e-\widehat{\h}_\E(\q;\widehat{\btheta}))\right]_{+}+\left[\max_{\q\in\tilde\bS^1}(-\h_{\cC(e)}(\q)-\hn_\E(-\q;\widehat{\btheta}))\right]_-\le\frac{\kappa_n}{\sqrt{n}}\right\},\label{eq:Fhat}
\end{align}
where $[\,\,\cdot\,\,]_+\equiv\max\{\,\cdot\,, 0\}$, $\kappa_n=o(\sqrt{n})$ is a sequence that diverges to infinity, and $\bB_C\equiv\{e\in\mathbb{R}^2:\Vert e\Vert_E \le C\}$, with $\E\subset \bB_C$ by Assumption~\ref{asm:moments} 
and $C<\infty$ a constant pinned down by $c_1,c_2$ defined in Assumption~\ref{asm:moments}. 
The first maximization problem in Eq.~\eqref{eq:Fhat} is the sample analog to the requirement that $e\in\E$; the second is the sample analog to the requirement that $\E$ and $\cC(e)$ can be properly separated.
To implement this estimator, we derive a closed-form expression for the support function of $\cC(e)$ in direction $\q=[\q_1,~\q_2]^\intercal$. Only two points are ``active'' for evaluating $\h_{\cC(e)}(\q)$: $[\min\{e_r,2e_b-e_r\},\,\,e_b]^\intercal$ and $[e_r,\,\,\min\{e_b, 2e_r-e_b\}]^\intercal$, which correspond to the points $e$ and $[e_r,\,\,2e_r-e_b]^\intercal$ (respectively, $[2e_b-e_r,\,\,e_b]^\intercal$ and $e$) when $e$ is above (respectively, below) the 45-degree line as shown in Figure \ref{fig2}-panel (a) (respectively, panel (b)). By Proposition~\ref{prop:sh}, it is without loss of generality to focus on $\q\in\tilde{\bS}^1$. Hence, the support function of $\cC(e)$ at any $\q\in\tilde{\bS}^1$ equals
\begin{align}
    \h_{\cC(e)}(\q)=\max\biggl\{\q_1\min\{e_r,2e_b-e_r\}+\q_2e_b, \,\,\,\q_1 e_r + \q_2\min\{e_b, 2e_r-e_b\}\biggr \}.\label{eqn:sf-C}
\end{align}

We next propose a test statistic for the null hypothesis $\HH_0:~e\in\sF$, against the alternative $\HH_A:~e\notin\sF$. 
Our test statistic is given by
\begin{align}
    T_n^\sF(e)\equiv
    \sqrt{n}\left(\left[\max_{\q\in\bS^1}(\q^\intercal e-\widehat{\h}_\E(\q;\widehat{\btheta}))\right]_{+} +\left[\max_{\q\in\tilde\bS^1}(-\h_{\cC(e)}(\q)-\hn_\E(-\q;\widehat{\btheta}))\right]_-\right).\label{eq:Tn_sF}
\end{align}
As shown in the proof of Proposition~\ref{prop:consistency_coverage_sF}, if Eq.~\eqref{eq:no_kinks_cond} is satisfied and consequently $\E$ has no kinks, the large sample distribution of $T_n^\sF(e)$, denoted $\psi^{\sF}(e)$, simplifies to
\begin{align}
    \psi^{\sF}(e)= \left|\mathbb{G}\left[{\zeta_i^*(\cM\q^*_{\bS^1}(e);\btheta)}\right]\right|,\label{psi_FA_no_kinks}
\end{align}
where $\q^*_{\bS^1}(e)\equiv\arg\max_{\q\in\bS^1}\q^\intercal e -\h_\E(\q)$.
The quantiles of this distribution can be estimated by standard methods.
We build a confidence set for the elements of $\sF$ by test inversion:
\begin{align}
    \mathcal{CS}_n(\sF) = \left\{e\in\bB_C:T_n^\sF(e)\le c_{1-\alpha}^{\sF}(e)\right\},\label{eq:CS_sF}
\end{align}
where for any $\beta\in(0,1)$, $c_\beta^{\sF}$ is the $\beta$-quantile of $\psi^{\sF}$.
When Eq.~\eqref{eq:no_kinks_cond} is not assumed and $\E$ has kinks, the expression for $\psi^{\sF}(e)$  is more complex and given in Eq.~\eqref{psi-FA}. 
In this case, restrictions that would guarantee uniform continuity and strict increasing properties for $\psi^{\sF}(e)$ are harder to verify, and we follow \citet[p.625]{and:shi13} to replace the confidence set in Eq.~\eqref{eq:CS_sF} with $\mathcal{CS}_n(\sF) = \left\{e\in\bB_C:T_n^\sF(e)\le c_{1-\alpha+\varsigma}^{\sF}(e)+\varsigma\right\}$, for $\varsigma>0$ an arbitrarily small constant.
In this case, the critical value can be approximated through bootstrap methods, as in Procedure~\ref{proc:bootstrap} in Section~\ref{sec:test-LDA}.

We next establish consistency of our estimator $\widehat{\sF}$ and validity of the confidence set, building respectively on \citet{che:hon:tam07} and \citet{fan:san19}.
\begin{prop}\label{prop:consistency_coverage_sF}
\textit{
    Let the assumptions of Theorem~\ref{thm:gaussian} hold and let $\kappa_n\to\infty$ with $\kappa_n=o(\sqrt{n})$. Then, as $n\to\infty$, 
    \begin{align}
        \mathbf{d}_H(\widehat{\sF},\sF)&\xrightarrow[]{p} 0\label{eq:consistency_F_hat}\\
        \liminf_{n\to\infty}\mathbb{P}\left(e\in\mathcal{CS}_n(\sF)\right)&\ge 1-\alpha~~~\text{for all}~e\in\sF.\label{eq:validity_CS(F)}
    \end{align}
    }
\end{prop}

\subsection{Estimation and Inference for the Pareto Frontier}
\label{subsec:pareto_frontier}
The Pareto Frontier ($\sPF$) is the lower boundary of the feasible set $\E$ connecting points $R$ and $B$ (see Figure \ref{fig1}).
Denoting $\bQ\equiv\left\{\q\in\bS^1:\q=[\cos\gamma~~\sin\gamma]^\intercal,~\gamma\in[\pi,3/2\pi]\right\}$, we can express $\sPF$ through the support set $\ss_\E(\cdot)$, or equivalently through the support function:
\begin{subequations}
\begin{align}
    \sPF&\equiv\left\{\ss_\E(\q):\q\in\bQ\right\}\label{eq:PFs}\\
    &=\left\{ e\in\mathbb{R}^2: \left[\max_{\q\in\bS^1}(\q^\intercal e - \h_\E(\q))\right]_+= 0,\,\left[\max_{\q\in\bQ}(\q^\intercal e - \h_\E(\q))\right]_-=0\right\},\label{eq:PFh}
\end{align}
\end{subequations}
where the first condition in Eq.~\eqref{eq:PFh} enforces that $e\in\E$ and the second that it belongs to the supporting hyperplane of $\E$ in a direction $\q\in\bQ$.
Knowing the directions that determine the support points comprising $\sPF$ simplifies the expression for it in Eq.~\eqref{eq:PFh} relative to Eq.~\eqref{eq:FA_frontier_supp_func}.
It also allows for an estimator, put forward in Proposition~\ref{prop:PFconsistency} below, based on the support set characterization in Eq.~\eqref{eq:PFs} that is free from the tuning parameter $\kappa_n$, which instead is needed for the estimator of $\sF$ in Eq.~\eqref{eq:Fhat}.
Let $\bzeta_{\ss,i}(\hcM\q;{\widehat{\btheta}})\equiv\hcM\bLL_{0_i}+\hcM(\bLL_{1_i}-\bLL_{0_i})\cdot\mathds{1}\bigl\{k\bigl({\widehat{\btheta}(\X_i)},\hcM\q\bigr)>0\bigr\}$, with $k\bigl(\widehat{\btheta}(\X_i),\hcM\q\bigr)=\q^\intercal\hcM\widehat{\Delta\btheta}(\X_i)$ as in Eq.~\eqref{eq:k_estimated} and\footnote{As in Eq.~\eqref{eqn:cross-fit-second-stage}, this expression is a shorthand for $\widehat{\ss}_\E(\q;\widehat{\btheta})\equiv\frac{1}{K}\sum_{k\in[K]}\left(\frac{1}{n/K}\sum_{i\in I_k}\bzeta_{\ss,i}({\hcM}\q;\widehat{\btheta}_k)\right)$.}
\begin{align}
    \widehat{\ss}_\E(\q;\widehat{\btheta})=\frac{1}{n}\sum_{i=1}^n\bzeta_{\ss,i}({\hcM}\q;\widehat{\btheta}).\label{eq:hat_ss}
\end{align}
Proposition \ref{prop:PFconsistency} delivers an estimator for $\sPF$ based on $\widehat{\ss}_\E(\q;\widehat{\btheta})$ in Eq.~\eqref{eq:hat_ss} and establishes its Hausdorff-consistency.
\begin{prop}\label{prop:PFconsistency}
\textit{
    Let the assumptions of Theorem~\ref{thm:gaussian} hold. Let $\widehat{\sPF}\equiv\left\{\widehat{\ss}_\E(\q;\widehat{\btheta}):\q\in\bQ\right\}$. Then, as $n\to\infty$, 
    \begin{align}
        \max_{\q\in\bQ}\Vert \widehat{\ss}_\E(\q;\widehat{\btheta})-\ss_\E(\q)\Vert_E &\xrightarrow[]{p} 0,\label{eq:consistency_S_hat}\\
        \mathbf{d}_H(\widehat{\sPF},\sPF)&\xrightarrow[]{p} 0.\label{eq:consistency_PF_hat}
    \end{align}
}
\end{prop}

We next provide a method to test, for a given $e\in\mathbb{R}^2$,
\begin{align}
    \HH_0:~e\in\sPF~~~\text{against}~~~\HH_A:~e\notin\sPF,\label{eq:H0_PF}
\end{align}
using the characterization of $\sPF$ in Eq.~\eqref{eq:PFh}.
We first propose a test statistic and derive its asymptotic distribution in Proposition~\ref{prop:PFlimit_distr} below, building on results in \citet{kai16}. 
\begin{remark}\label{remark:est-supp-set}
We use different characterizations of $\sPF$ for estimation and for inference due to difficulties with DML estimation of $\ss_\E(\q)$ that we explain here. Using Eq.~\eqref{eq:ss} we can write the first (second) coordinate of $\ss_\E(\q)$ as $\mathbb{E}\big[(\cM\v)^\intercal\bLL_0+(\cM\v)^\intercal(\bLL_1-\bLL_0)\mathds{1}\{k(\btheta(\X),\cM\q)>0\}\big]$ for $\v=[1~~0]^\intercal$ ($\v=[0~~1]^\intercal$). As $k(\btheta(\X),\cM\v)=\mathbb{E}[(\cM\v)^\intercal(\bLL_1-\bLL_0)|\X]$, the proof of Theorem~\ref{thm:gaussian} shows that, if $\v=\q$, the first-stage estimation error in the sign of $k(\btheta(\X),\cM\q)$ is controlled by the size of the error itself (because in case of sign disagreement, $|k(\btheta(\X),\cM\q)|\le|k(\widehat{\btheta}(\X),\cM\q)-k(\btheta(\X),\cM\q)|$), with $k(\widehat{\btheta}(\X),\cM\q)$ as in Eq.~\eqref{eq:k_estimated}. However, when $\v\neq\q$, which necessarily occurs for at least one coordinate of $\ss_\E(\q)$, sign errors are not controlled for.  
Consequently, we switch to the moment inequality characterization in Eq.~\eqref{eq:PFh} that involves $\h_\E(\q)$ only. If one uses a parametric estimator for $\Delta\btheta(\X)$, the asymptotic distribution of $\widehat{\ss}_\E(\cdot;\widehat{\btheta})$ can be obtained and inference is simplified, as a special case of the treatment in \citet{liu:mol24v1}, who work with a sieve nonparametric estimator of $\Delta\btheta(\X)$ under Donsker conditions.
\end{remark}

\begin{prop}\label{prop:PFlimit_distr}
\textit{
    Let $Q_{\mathbb{T}}^*(e)\equiv\arg\max_{\q\in\mathbb{T}}\q^\intercal e -\h_\E(\q),~\mathbb{T}\in\{\bS^1,\bQ\}$.
    Then, under the null in Eq.~\eqref{eq:H0_PF} and the assumptions of Theorem~\ref{thm:gaussian},
    \begin{align}
        T_n^\sPF(e)&\equiv \sqrt{n}\left(\left[\max_{\q\in\bS^1}\q^\intercal e-\widehat{\h}_\E(\q;\widehat{\btheta})\right]_{+} + \left[\max_{\q\in\bQ}\q^\intercal e-\widehat{\h}_\E(\q;\widehat{\btheta})\right]_-\right)\label{eq:Tn_PF}\\
        &\xrightarrow[]{d}\left[\sup_{\q\in Q^*_{\bS^1}(e)}\mathbb{G}[-{\zeta_i^*(\cM\q;\btheta)}]\right]_+ + \left[ \sup_{\q\in Q^*_{\bQ}(e)}\mathbb{G}[-{\zeta_i^*(\cM\q;\btheta)}]\right]_-.\label{eq:Tn_PF_limit}
    \end{align}
    If $Var(\bLL_d|\X)$ is positive definite for each $d\in\{0,1\},~\X-a.s.$, the limit law in Eq.~\eqref{eq:Tn_PF_limit} is absolutely continuous with respect to Lebesgue measure on $\mathbb{R}_{++}$.
}
\end{prop}
If Eq.~\eqref{eq:no_kinks_cond} is satisfied and the set $\E$ has no kinks (see Remark~\ref{remark:kinks}), $Q_{\bS^1}^*(e)$ and $Q_{\bQ}^*(e)$ are singletons that can be consistently estimated through standard methods, and the limit distribution in Eq.~\eqref{eq:Tn_PF_limit} coincides with that in Eq.~\eqref{psi_FA_no_kinks}.
If instead kinks are not ruled out for $\q\in\bQ$, $Q_{\bS^1}^*(e)$ and $Q_{\bQ}^*(e)$ may not be singletons. In this case we can consistently estimate these sets \citep[Lemma D.5]{kai16} as
\begin{align}
    \widehat{Q}_{\mathbb{T}}^*(e)&\equiv\left\{\q\in\mathbb{T}:\q^\intercal e - \hn_\E(\q;\widehat{\btheta})\geq\sup_{\tilde{\q}\in\mathbb{T}} \tilde{\q}^\intercal e - \hn_\E(\tilde{\q};\widehat{\btheta})-\kappa_n/\sqrt{n}\right\},\label{eq:QstarSet}
\end{align}
where $\mathbb{T}\in\{\bS^1,\bQ\}$ and $\kappa_n=o(\sqrt{n})$ is a sequence that diverges to infinity. 
We use Procedure~\ref{proc:bootstrap}-Step 1 to obtain a valid bootstrap-based approximation to the Gaussian process in Theorem \ref{thm:gaussian}. Denote $\mathbb{P}^*$ this bootstrap distribution conditional on $\{(\Y_i,\G_i,\X_i)\}_{i=1}^n$. Let
\begin{align}
    \small\hat{c}^\sPF_{1-\alpha}(e)\equiv
    \inf\Bigg\{c:
    \mathbb{P}^*\left(\left[\sup_{\q\in Q^*_{\bS^1}(e)}\mathbb{G}[-{\zeta_i^*(\cM\q;\btheta)}]\right]_+ + \left[ \sup_{\q\in Q^*_{\bQ}(e)}\mathbb{G}[-{\zeta_i^*(\cM\q;\btheta)}]\right]_->c\right)=\alpha\Bigg\}\label{eq:hat_c_PF}
\end{align} 
When $Var(\bLL_d|\X)$ is positive definite, it follows that if $\alpha\in(0,0.5)$, 
\begin{align*}
\limsup_{n\to\infty}\mathbb{P}\left(T_n^\sPF(e)>\hat{c}^\sPF_{1-\alpha}(e)\right)
\begin{cases}
    \le\alpha & \text{if}~e\in\sPF \\
    =1 & \text{if}~e\notin\sPF
\end{cases}
\end{align*}
\citep[Corollary 3.2]{kai16}. A confidence set that covers each $e\in\sPF$ with asymptotic probability at least equal to $(1-\alpha)$ can be obtained by test inversion.
\begin{remark}
    The result in Proposition~\ref{prop:PFlimit_distr} can be adapted to testing whether $e^*\in\sPF$ when $e^*$ needs to be estimated, by adjusting the covariance function of the limit Gaussian process in Theorem~\ref{thm:gaussian} and using the test statistic in Eq.~\eqref{eq:Tn_PF}. If the limit law in Eq.~\eqref{eq:Tn_PF_limit} is not guaranteed to be absolutely continuous on $\mathbb{R}_{++}$, one can use infinitesimal adjustments to the critical value to maintain asymptotic validity, as in \citet{kai16}.
\end{remark}

\subsection{Algorithms Yielding a Risk Allocation on the Frontier}
\label{subsec:algorithms:frontier}
An algorithm designer or a regulator may wonder if one can characterize the algorithm yielding a specific point on the FA-frontier $\sF$ or the Pareto frontier $\sPF$.
It turns out that, using our support function approach, the answer to this question is affirmative, and particularly simple for points in $\sPF$.

Suppose we have two data sets: one for training the algorithm and the other for evaluating it.
We denote the training sample as $\{(\tilde{\Y}_i,\tilde{\G}_i,\tilde{\X}_i)\}_{i=1}^{n_1}$ and the evaluation sample as $\{(\Y_j,\G_j,\X_j)\}_{j=1}^{n_2}$, with $\frac{n_1}{n_2}\to c$ for some positive constant $c$ and both samples drawn from the same distribution $\bP$.
Use the training data to estimate $\Delta\btheta$ via machine learning and $\cM$ by sample averages as described in Section~\ref{sec:estimation} and denote the resulting estimators $\widehat{\Delta\btheta}_{n_1}$ and $\hcM_{n_1}$, where the subscript $n_1$ indicates that it is based on the training sample.
Let 
\begin{align}
    \widehat{a}_{n_1}(\X_j;\q)=\mathds{1}\bigl\{k\bigl(\widehat{\btheta}_{n_1}(\X_j),\hcM_{n_1}\q\bigr)>0\bigr\},\label{eq:a_hat}
\end{align}
with $k\bigl(\widehat{\btheta}_{n_1}(\X_j),\hcM_{n_1}\q\bigr)$ as in Eq.~\eqref{eq:k_estimated}.
Note that $\widehat{a}_{n_1}(\X_j;\q)$ only takes as input the covariates, $\X_j$, and does not depend on group identity $\G_j$. Consequently, individuals with the same covariates are assigned the same treatment irrespective of their group. Nonetheless, information on group identity contained in the training data is used in the prediction model (to separately estimate $\Delta\theta^g$, $g\in\{r,b\})$, thereby offering a compromise between a utilitarian perspective as in \citet{man:mul:ven23} which advocates for group-aware decisions, and proponents of group-blind decision making.

{Then, for any $q\in\bS^1$, algorithm $\widehat{a}_{n_1}(\,\cdot\,;\q)$ leads to a loss for each observation $i$ in the evaluation sample equal to $\ell(0,\Y_j)+(\ell(1,\Y_j)-\ell(0,\Y_j))\cdot\mathds{1}\bigl\{k\bigl(\widehat{\btheta}_{n_1}(\X_j),\q\bigr)>0\bigr\}$.
Therefore, the average losses for the $r$ group and for the $b$ group in the evaluation sample equal
\begin{align}
\begin{bmatrix}
\frac{\sum_{j:\G_j=r}\ell(0,\Y_j)+(\ell(1,\Y_j)-\ell(0,\Y_j))\cdot\mathds{1}\bigl\{k\bigl(\widehat{\btheta}_{n_1}(\X_j),\q\bigr)>0\bigr\}}{\sum_{j=1}^{n_2}\mathds{1}\{\G_j=r\}}\\
\frac{\sum_{j:\G_j=b}\ell(0,\Y_j)+(\ell(1,\Y_j)-\ell(0,\Y_j))\cdot\mathds{1}\bigl\{k\bigl(\widehat{\btheta}_{n_1}(\X_j),\q\bigr)>0\bigr\}}{\sum_{j=1}^{n_2}\mathds{1}\{\G_j=b\}}
\end{bmatrix}
    =\widehat{\ss}_{\E,n_2}(\q;\widehat{\btheta}_{n_1})\label{eq:hat_ss_alg}
\end{align}
}
By the same argument as in the proof of Proposition~\ref{prop:PFconsistency}, it follows that 
\begin{align*}
    \max_{q\in\bQ}\left\Vert\widehat{\ss}_{\E,n_2}(\q;\widehat{\btheta}_{n_1})-\ss_\E(\q)\right\Vert \xrightarrow[]{p} 0~~~\text{as}~ n_1,n_2\to\infty.
\end{align*}
Hence, the algorithm in Eq.~\eqref{eq:a_hat} with $q\in\bQ$ gives consistent estimators of points on $\sPF$.

Algorithms that return consistent estimators of points on $\sF$ (other than those on $\sPF$) are harder to obtain, as one needs to determine which direction $\q$ corresponds to a point $e\in\sF$ and plug that direction in the algorithm in Eq.~\eqref{eq:a_hat}.
Suppose Eq.~\eqref{eq:no_kinks_cond} is satisfied and $\E$ has no kinks. 
Use the same DML construction 
in Section~\ref{sec:estimation} applied to the training sample to obtain an estimator of $\h_\E(\q)$, denoted $\hn_{\E,n_1}(\q;\widehat{\btheta}_{n_1})$, as in Eq.~\eqref{eqn:cross-fit-second-stage}, and an estimator of $\sF$, denoted $\widehat{\sF}_{n_1}$, as in Eq.~\eqref{eq:Fhat}.
{Select $\widehat{e}_{n_1}\in\widehat{\sF}_{n_1}$ through a projection method detailed in the proof of Proposition~\ref{prop:algorithm_consistency} so that $\Vert\widehat{e}_{n_1}-e\Vert=o_p(1)$ for some $e\in\sF$ and let}
\begin{align}
    \widehat{\q}^*_{n_1}(\widehat{e}_{n_1})=\arg\max_{\q\in\bS^1}\q^\intercal \widehat{e}_{n_1}-\hn_{\E,n_1}(\q;\widehat{\btheta}_{n_1}). \label{eq:hat_q_star}
\end{align}
Let $\widehat{\ss}_{\E,n_2}(\widehat{\q}^*_{n_1}(\widehat{e}_{n_1});\widehat{\btheta}_{n_1})$ be as in Eq.~\eqref{eq:hat_ss_alg} but with $\widehat{\q}^*_{n_1}(\widehat{e}_{n_1})$ replacing $q$.
The next proposition establishes that $\widehat{\ss}_{\E,n_2}(\widehat{\q}^*_{n_1}(\widehat{e}_{n_1});\widehat{\btheta}_{n_1})$ is a consistent estimator of $\ss_\E(\q^*_{\bS^1}(e))\in\sF$.
\begin{prop}\label{prop:algorithm_consistency}
\textit{Let the assumptions of Theorem~\ref{thm:gaussian} hold. Then, as $n_1,n_2\to\infty$,
\begin{align*}
    \left\Vert\widehat{\ss}_{\E,n_2}(\widehat{\q}^*_{n_1}(\widehat{e}_{n_1});\widehat{\btheta}_{n_1})-\ss_\E(\q^*_{\bS^1}(e))\right\Vert \xrightarrow[]{p} 0.
\end{align*}
}    
\end{prop}

Regardless of whether one aims at obtaining points in $\sPF$ or the entire $\sF$, the direction $\q$ can be interpreted as the vector of weights that the agent choosing the algorithm puts on each group's risk.
In other words, one may think of the agent as evaluating group risks according to the welfare loss function $U(e;\q)\equiv\q_1 e_r(a)+\q_2 e_b(a)$. 
For example, the more the agent cares about group $r$, the closer $\q$ is to $\mathfrak{u}_1$.
We note that $\widehat{a}_{n_1}(\X;\q)$ is an empirical success rule, and leave its statistical decision theory analysis to future research.

\section{Hypothesis Testing}
\label{sec:test}
\noindent In this Section we propose hypothesis tests to answer the following policy questions: 

(1) Should the policymaker consider banning group identity as an input to the algorithm? 

(2) Is there a less discriminatory alternative (LDA) to an existing algorithm?

\noindent We show how to express the first policy question in terms of restrictions on $\h_\E(\q)$, and we leverage Proposition \ref{prop:sh} to do the same for the second policy question.
We then establish asymptotic validity of the corresponding testing procedures.

\subsection{How to Test Whether Group Identity Should be Banned}
\label{sec:test-group-balance}
\citetalias{lia:lu:mu:oku24} (Proposition 6) show that using $\X$ only instead of $(\X,\G)$ as algorithmic input uniformly worsens the frontier if $\R$ and $\B$, obtained when only $\X$ is used as input, are strictly separated by the 45-degree line.\footnote{“Uniformly worsening the frontier” means that, under the preference relations defined in Eq.~\eqref{pref}, every point on the frontier $\sF\big(\bP,\mathcal{A}(\sX)\big)$ is dominated by a point on the frontier $\sF\big(\bP,\mathcal{A}(\sX\times\{r,b\})\big)$.} We therefore aim at testing the null hypothesis that $\R$ and $\B$ lie weakly on the same side of the 45 degree line, i.e., that the difference in the two coordinates of $\R$ has the same sign as that of $\B$, against the alternative that they are strictly separated by the 45-degree line:
\begin{align}
\HH_0&: \bigg((\mathfrak{u_1}-\mathfrak{u_2})^\intercal \R\bigg)\bigg((\mathfrak{u_1}-\mathfrak{u_2})^\intercal \B\bigg)\geq0,\label{eq:null-weak-GB}\\
\HH_A&: \bigg((\mathfrak{u_1}-\mathfrak{u_2})^\intercal \R\bigg)\bigg((\mathfrak{u_1}-\mathfrak{u_2})^\intercal \B\bigg)<0.\notag
\end{align}
If the null in Eq.~(\ref{eq:null-weak-GB}) is rejected, the policymaker should not ban group identity $\G$ as algorithm's input. A Type-I error amounts to the case where one concludes that there is strict group-balance, while instead weak group-skew holds. 
As a consequence, one does not ban $G$ as input to the algorithm, thinking that banning $G$ is uniformly welfare-reducing, while instead depending on the preferences of the designer it might not be the case.
Another interpretation of this test amounts to determining, based on whether $\HH_0$ is rejected or not, if one can justify implementing algorithms that lead to Pareto-dominated risks based on the designer's preference over fairness and accuracy. When $\HH_0$ holds true, this justification is possible, as the frontier includes an upward-sloping segment (e.g., Panels (b)-(c) of Figure \ref{fig1}). On the other hand, when $\HH_A$ holds true, such justification is untenable, as the frontier coincides with the Pareto frontier (e.g., Panel (a) of Figure \ref{fig1}).
Hence, a Type-I error can also be interpreted as a case where one concludes that only Pareto-optimal risks should be implemented, while fairness considerations may justify Pareto-dominated risks.

{As discussed in Remark~\ref{remark:est-supp-set}, carrying out inference if we use DML to directly estimate both coordinates of the points $\R$ and $\B$ is difficult.} Yet, using Eqs.~\eqref{eq:define_SupportSet} and \eqref{eq:coord}, we can represent these points through moment equalities and inequalities that involve $\h_\E(\q)$ only:
\begin{align}
\R: \Bigg\{\begin{array}{lr}
        \h_\E(\mathfrak{u}_1)-\mathfrak{u}_1^\intercal \R=0,\\
        \h_\E(\q)\,-\,\q^\intercal \R~\geq 0, ~\forall \q\in\bS^1,
        \\
        \end{array} 
        ~~~~~~~~\,\B: \Bigg\{\begin{array}{lr}
        \h_\E(\mathfrak{u}_2)-\mathfrak{u}_2^\intercal \B=0,\\
        \h_\E(\q)\,-\,\q^\intercal \B~\geq 0, ~\forall \q\in\bS^1,
        \\
        \end{array}\label{eq:moment-ineq}
\end{align}
where in Eq.~\eqref{eq:moment-ineq}, the equality constraint for $\R$ restricts it to have horizontal coordinate equal to that of $\ss_\E(\mathfrak{u}_1)$, as per Eq.~\eqref{eq:define_SupportSet}, and the continuum of inequality constraints indexed by $\q\in\bS^1$ restricts $\R$ to be an element of $\E$. The moment constraints that define $\B$ are interpreted similarly. Because the support set $\ss_\E(\cdot)$ in any direction is a singleton by Proposition \ref{prop:ss}, the moments in Eq.~\eqref{eq:moment-ineq} yield points $\R$ and $\B$ that coincide with Eq.~\eqref{eq:coord}. 

We test the null in Eq.~\eqref{eq:null-weak-GB} at a given significance level $\alpha\in(0,1)$ based on our procedure to test, for given $e\in\mathbb{R}^2$, whether $e\in\sPF$:
\vspace{-.25cm}
\begin{proc}[Testing Weak Group-Skew]
\label{proc:test-GB}
\textit{
\begin{enumerate}
    \item Build a $(1-\alpha)$-level confidence set for $(\R, \B)$ by
    \begin{align}
        \mathcal{CS}_n(\R,\B)\equiv\big\{(\Tilde{\R},\Tilde{\B})\in\bB_C\times\bB_C: T_n(\Tilde{\R},\Tilde{\B})\leq \hat{c}_{1-\alpha}(\Tilde{\R},\Tilde{\B})\big\},\label{eq:CS-R}
    \end{align}
    where for the support function estimator $\widehat{\h}_{\E}(\cdot\,; \widehat{\btheta})$ in Theorem \ref{thm:gaussian}, $T_n(\Tilde{\R},\Tilde{\B})$ adapts the test statistic in Eq.~\eqref{eq:Tn_PF}:
    \begin{align*}
        T_n(\Tilde{\R},\Tilde{\B})\equiv\hspace{-.6cm}\sum_{\substack{(e, \mathfrak{u}_j)\\\in\{(\Tilde{\R}, \mathfrak{u}_1),(\Tilde{\B}, \mathfrak{u}_2) \}}}\hspace{-.8cm}\sqrt{n}\left(\left[\max_{\q\in\bS^1}\q^\intercal e-\widehat{\h}_\E(\q;\widehat{\btheta})\right]_{+} + \left[\mathfrak{u}_j^\intercal e-\widehat{\h}_\E(\mathfrak{u}_j;\widehat{\btheta})\right]_-\right).
    \end{align*}
    The critical value $\hat{c}_{1-\alpha}(\Tilde{\R},\Tilde{\B})$ is obtained similarly to $\hat{c}^\sPF_{1-\alpha}(e)$ in Eq.~\eqref{eq:hat_c_PF}.
    \item Reject $\HH_0$ in Eq. \eqref{eq:null-weak-GB} if
    \vspace{-.3cm}
    \begin{align}
        \varphi_n^{\texttt{skew}}\equiv\mathds{1}\left\{\sup_{(\Tilde{\R}, \Tilde{\B})\in \mathcal{CS}_n(\R,\B)}\bigg((\mathfrak{u_1}-\mathfrak{u_2})^\intercal \Tilde{\R}\bigg)\bigg((\mathfrak{u_1}-\mathfrak{u_2})^\intercal \Tilde{\B}\bigg)<0\right\}=1.\label{eq:rule-weak-GB}
    \end{align}
\end{enumerate}
}
\end{proc}

We note that the test in Eq.~\eqref{eq:rule-weak-GB} may be conservative as it is based on projection.
\begin{prop}
\label{prop:weak-GB}
    \textit{Let the assumptions in Theorem \ref{thm:gaussian} hold. Then
    \vspace{-.3cm}
    \begin{align}
        \limsup_{n\to\infty}\mathbb{E}\left[\varphi_n^{\texttt{skew}}\right]\leq \alpha.\label{eq:testRB-validity}
    \end{align} 
}

    \vspace{-.3cm}
\end{prop}

\subsection{How to Test for the Existence of an LDA}\label{sec:test-LDA}
Given an algorithm $a^*\in\mathcal{A}(\sX)$ that induces the risk pair $e^*=(e_r^*, e_b^*)\in\E$, call another algorithm that yields a feasible risk pair $e=(e_r,e_b)\in\E$ an LDA if it is at least as accurate as $a^*$ for both groups and at least as fair, with one of these inequalities strict. It follows from the characterization in Proposition~\ref{prop:sh} and Eq.~\eqref{eq:Cstar} that no LDA to $a^*$ exists if and only if $\E$ can be properly separated from
\begin{align*}
\cC(e^*)\equiv\left\{e\in\mathbb{R}^2: e_r\leq e_r^*, e_b\leq e_b^*, |e_r-e_b|\leq |e_r^*-e_b^*|\right\}.
\end{align*}
Recall that the closed form expression for the support function of $\sC\equiv\cC(e^*)$ in direction $\q=[\q_1,~\q_2]^\intercal\in\tilde{\bS}^1$ is given in Eq.~\eqref{eqn:sf-C}.
We then test the null hypothesis
\begin{align}
\HH_0: \max_{\q\in\tilde{\bS}^1}(-\h_{\cC^*}(\q)-\h_\E(-\q))=0\label{eqn:null-LDA}
\end{align}
against the alternative that $\HH_0$ is false. Rejecting the null in Eq.~\eqref{eqn:null-LDA} means that $e^*\notin\sF$ and there exists an LDA. We propose estimating $\h_{\sC}(\q)$ for $\q\in\tilde{\bS}^1$ by
\begin{align*}
\hn_{\sC}(\q)&=\max\biggl\{\q_1\min\{\widehat{e}_r^*,2\widehat{e}_b^*-\widehat{e}_r^*\}+\q_2\widehat{e}_b^*, \,\,\,\q_1 \widehat{e}_r^* + \q_2\min\{\widehat{e}_b^*, 2\widehat{e}_r^*-\widehat{e}_b^*\}\biggr \},
\end{align*}
where for $g\in\{r,b\}$, $e_g^*$ is estimated by sample means,
\begin{align}
\widehat{e}_g^*=\frac{1}{n}\sum_{i=1}^n \frac{Z_i^g}{\widehat{\mu}_g}, \quad \text{where }Z_i^g\equiv\mathds{1}\{\G_i=g\}\bigl(a^*(\X_i)\ell(1,\Y_i)+(1-a^*(\X_i))\ell(0,\Y_i)\bigr)\label{ehat}
\end{align}
and $\widehat{\mu}_g\equiv\frac{1}{n}\sum_{i=1}\mathds{1}\{G_i=g\}$. We propose the following test statistic:
\begin{align}
T_n^{\texttt{LDA}}\equiv\sqrt{n}\left(\left[\max_{\q\in\bS^1}(\q^\intercal \widehat{e}^*-\widehat{\h}_\E(\q;\widehat{\btheta}))\right]_{+} +\left[\max_{q\in\tilde{\bS}^1}\left(-\hn_{\sC}(\q)-\hn_\E(-\q)\right)\right]_-\right).\label{eq:Tn_LDA}
\end{align}
$T_n^{\texttt{LDA}}$ differs from $T_n^\sF$ in Eq.~\eqref{eq:Tn_sF} only in that $\widehat{e}^*$ is estimated in the former.

\begin{prop}
\label{prop:lda}
\textit{
Let the assumptions in Theorem \ref{thm:gaussian} hold. Then, for any pre-specified significance level $\alpha\in(0,1)$, the test below has asymptotically correct size control:
$$
\text{Reject the null in Eq.~\eqref{eqn:null-LDA} if} \quad T_n^{\texttt{LDA}} > c_{1-\alpha+\varsigma}^{\texttt{LDA}}+\varsigma,
$$
where $\varsigma>0$ is an arbitrarily small positive constant and for any $\beta\in(0,1)$ the critical value $c_\beta^{\texttt{LDA}}$ is the $\beta$-quantile of $\psi^{\texttt{LDA}}$, for $\psi^{\texttt{LDA}}$ a random variable defined in Eq.~\eqref{psi-inf}.
}
\end{prop}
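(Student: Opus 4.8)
The plan is to combine Proposition \ref{prop:sh} with Theorem \ref{thm:gaussian} and a continuous-mapping/directional-differentiability argument. Under the null \eqref{null2}, Proposition \ref{prop:sh} tells us that the population object $\inf_{\q\in\tilde\bS^1}(\h_{C^*}(\q)+\h_\E(-\q))^2$ is exactly zero, and the infimum is attained at some $\q^*$ for which $\h_{C^*}(\q^*)=-\h_\E(-\q^*)$. First I would establish the joint weak convergence of the bivariate process $\q\mapsto\bigl(\sqrt{n}(\widehat{\h}_{C^*}(\q)-\h_{C^*}(\q)),\ \sqrt{n}(\widehat{\h}_\E(-\q)-\h_\E(-\q))\bigr)$ in $\ell^\infty(\tilde\bS^1)\times\ell^\infty(\tilde\bS^1)$ to a tight mean-zero Gaussian process. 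The second coordinate is handled directly by Theorem \ref{thm:gaussian} (with $\v=\q$ or by reading off the support-set evaluation, since $\h_\E(-\q)$ is recovered from $\h_{\ss_\E(\cdot)}(\cdot)$). For the first coordinate, since $\widehat{\h}_{C^*}(\q)$ is a piecewise-linear function of the sample means $(\widehat{e}_r^*,\widehat{e}_b^*)$ from \eqref{ehat} with coefficients that are fixed (given $\q$) continuous functions of $\q$, a standard CLT for the i.i.d.\ averages $Z_i^g/\widehat\mu_g$ together with the delta method for the $\min\{\cdot,\cdot\}$ operations (which is Hadamard directionally differentiable) gives tight Gaussian convergence; joint convergence with the $\E$-process follows from the Cramér–Wold device applied to the stacked influence functions, exactly as in the proof of Theorem \ref{thm:gaussian}.

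Next I would pass through the map $\Psi: (f,g)\mapsto \inf_{\q\in\tilde\bS^1}(f(\q)+g(-\q))^2$. This functional is Hadamard directionally differentiable at the null configuration: writing $\Psi_0=\Psi(\h_{C^*},\h_\E(-\cdot))=0$ and letting $\Psi'$ denote the directional derivative, the derivative is the infimum of a squared linear functional over the (nonempty, compact) set of minimizers of $\q\mapsto(\h_{C^*}(\q)+\h_\E(-\q))^2$. Applying the functional delta method for directionally differentiable maps \citep[as in][]{fan:san19}, $T_n^{LDA}=\sqrt{n}\,\Psi(\widehat{\h}_{C^*},\widehat{\h}_\E(-\cdot))$ converges in distribution under the null to $\Psi'(\mathbb{G}^{C^*},\mathbb{G}^{\E})=:\psi^{\inf}$, the random variable referenced in \eqref{psi-inf}; because the map is convex-like (a squared term), $\psi^{\inf}\ge 0$ and its distribution is typically non-Gaussian. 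The $\varsigma$-inflation of the critical value, $c_{\alpha-\varsigma}^{LDA}+\varsigma$, is the usual device to restore uniform size control in the presence of a directionally (rather than fully) differentiable limit and of the nonuniformity created by the set of minimizers: at configurations where $\Psi_0=0$ but the contact set is a single point versus a larger set, the limit distribution is upper semicontinuous in the underlying parameter, so adding $\varsigma$ and using the slightly smaller level $\alpha-\varsigma$ yields $\limsup_n \mathbb{P}(T_n^{LDA}>c_{\alpha-\varsigma}^{LDA}+\varsigma)\le\alpha$ for every null configuration.

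The main obstacle I anticipate is twofold. First, handling the infimum over $\tilde\bS^1$ rigorously when the contact set (the argmin of $\h_{C^*}(\q)+\h_\E(-\q)$) may fail to be a singleton: this is where the directional derivative $\Psi'$ genuinely depends on the geometry of that set, and one must verify the set is compact and that the relevant one-sided derivative of $(\cdot)^2$ composed with the empirical fluctuation is handled correctly (the derivative of $z^2$ at $z=0$ is degenerate, so the square must be expanded as $(\sqrt{n}\,\text{fluctuation})^2/n$-type terms and the $\sqrt{n}$ rate tracked carefully — in fact $T_n^{LDA}$ is $\sqrt{n}$ times a squared $O_p(n^{-1/2})$ object only on the contact set, so one shows the infimum localizes there). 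Second, establishing that the delta-method limit is valid \emph{uniformly} over the null so that the $\varsigma$-correction delivers honest size — this requires an equicontinuity/anti-concentration argument in the spirit of the generalized moment-inequality literature, controlling how the contact set and the covariance kernel vary across $P_0$ in the null. Everything else (tightness of the two processes, the piecewise-linear algebra for $\widehat{\h}_{C^*}$, consistency of a bootstrap or plug-in estimator of the quantile $c_\beta^{LDA}$) is routine given Theorem \ref{thm:gaussian}.
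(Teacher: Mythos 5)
You have the right scaffolding --- joint weak convergence of the two support-function processes, then a directional delta method through $\inf$ --- and your observation that the contact set may not be a singleton is sound. But your characterization of $\Psi'$ as ``the infimum of a squared linear functional over the set of minimizers'' cannot be correct as the Hadamard directional derivative of $\Psi:(f,g)\mapsto\inf_{\q}(f(\q)+g(-\q))^2$ at the null configuration, because a Hadamard directional derivative is positively homogeneous of degree one, while an infimum of squared linear functionals is homogeneous of degree two. This is exactly the degeneracy tension you flag yourself (``the derivative of $z^2$ at $z=0$ is degenerate'') but then leave unresolved: under the $\sqrt{n}$ scaling used in $T_n^{LDA}$ the relevant delta-method limit is degenerate, and the nondegenerate quadratic object you describe would arise only under $n$-scaling. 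You cannot simultaneously assert that $T_n^{LDA}$ is ``$\sqrt{n}$ times a squared $O_p(n^{-1/2})$ object'' and that it converges in distribution to a nondegenerate infimum of squares.

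The paper's own proof avoids this conflation by applying the delta method in explicit stages. It first establishes joint convergence of $\sqrt{n}\bigl(\widehat{\h}_\E(\q)-\h_\E(\q),\ \widehat{e}_r^*-e_r^*,\ \widehat{e}_b^*-e_b^*\bigr)$ via Theorem \ref{thm:gaussian} and the influence function $Z_i^{g,*}$ from \eqref{ehat*}; then rewrites the $\min$ terms in $\h_{C^*}$ using $\min\{s,2t-s\}=(2t-s)-\max\{2(t-s),0\}$ so that the Hadamard directional differentiability of $\max\{\cdot,0\}$ from \citet[Example 2.1]{fan:san19} yields a process limit $\psi(\q)$ for $\sqrt{n}(\widehat{\h}_{C^*}(\q)-\h_{C^*}(\q))$; then applies the \emph{ordinary} delta method to the squaring, producing a multiplicative factor $2\mathfrak{\h}(\q)$; and only then applies the directional derivative of $\inf$ from \citet[Theorems 2.1 and 2.2]{car:cue:alb20}, giving $\psi^{\inf}=\phi_{2,\mathfrak{\h}^2}'\bigl(2\mathfrak{\h}(\cdot)(\psi(\cdot)+\mathbb{G}[\zeta_{ik}(-\cdot)])\bigr)$. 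Since this $\inf$-derivative localizes to the contact set $\{\q:\mathfrak{\h}(\q)=0\}$, on which $2\mathfrak{\h}(\q)$ vanishes, the paper's $\psi^{\inf}$ is a very different random variable from the one you posit, and your claim that your $\Psi'$ equals the paper's \eqref{psi-inf} is false. A smaller point: your reading of the $\varsigma$-inflation as a uniformity device against contact-set geometry does not match the paper, which introduces $\varsigma$ following \citet{and:shi13} solely to sidestep continuity conditions on the distribution of $\psi^{\inf}$; the size argument is then the one-liner $\mathbb{P}(\psi^{\inf}>c^{LDA}_{\alpha-\varsigma}+\varsigma)\le\alpha-\varsigma\le\alpha$ once the distributional convergence is in hand.
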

When Eq.~\eqref{eq:no_kinks_cond} holds and $Var(\bLL_d|\X)$ is positive definite for $d\in\{0,1\},~\X$-a.s., one can take $\varsigma=0$ in Proposition~\ref{prop:lda} and the expression for $\psi^{\texttt{LDA}}$ simplifies to that in Eq.~\eqref{psi-inf_no_kink}. 
When kinks might be present and the limit distribution is not guaranteed to be continuous and strictly increasing, we take $\varsigma>0$ as in \citet{and:shi13}.
The derivation of $\psi^{\texttt{LDA}}$ uses the fact that $T_n^{\texttt{LDA}}$ is formed by compositions of the $\max$ and $\min$ functions in Eq.~\eqref{eqn:sf-C} and the $\max$ function in Eq.~\eqref{eqn:null-LDA}---all of which are Hadamard directionally differentiable, as shown in \citet{fan:san19} and \citet{car:cue:alb20}; since compositions preserve directional differentiability \citep[Proposition 3.6]{Shapiro90}, an extension to the functional Delta method can be applied to Theorem \ref{thm:gaussian}. However, standard bootstraps are inconsistent due to the lack of full differentiability \citep[see Section 3.2 of][]{fan:san19}.  As such, we leverage results in \citet{fan:san19} to approximate the distribution of $\psi^{\texttt{LDA}}$ and its quantiles via a modified multiplier bootstrap procedure detailed below, similar to that in \citet{sem23}, where we denote $\phi$ any generic Hadamard directionally differentiable function, $\widehat{he^*}=\widehat{he^*}(\widehat{\btheta})\equiv[\hn_\E(\q;\widehat{\btheta}),~\widehat{e}_r^*,~\widehat{e}_b^*]^\intercal$ the vector of estimators, and $he^*\equiv[\h_\E(\q),~e_r^*,~e_b^*]^\intercal$ the vector of truths.

\begin{proc}[Bootstrap for the Quantiles of $\sqrt{n}\{\phi\big(\widehat{he^*}\big)-\phi\big(he^*\big)\}$]
\label{proc:bootstrap}

    \begin{enumerate}
    \textit{   \item Draw $\{W_i\}_{i=1}^n$ i.i.d. from the exponential distribution with mean $1$ independent of the sample $\{(\Y_i,\G_i,\X_i)\}_{i=1}^n$ and construct the bootstrap analogue of $\widehat{he^*}$:
        \vspace{-.1cm}
        \begin{align}
        \widetilde{he^*}=\widetilde{he^*}(\widehat{\btheta})\equiv[\widetilde{\h}_\E(\q;\widehat{\btheta}),~\widetilde{e}_r^*,~\widetilde{e}_b^*]^\intercal,\label{eq:bs-anolog-he}
        \end{align}
}

    \vspace{-.1cm}
\noindent\textit{where $\widetilde{\h}_\E(\q;\widehat{\btheta})\equiv\frac{1}{n}\sum_{i=1}^n\frac{W_i}{\overline{W}}\zeta_i(\tcM\q;\widehat{\btheta})$  for $\overline{W}\equiv\frac{1}{n}\sum_{i=1}^n W_i$, $\tcM\equiv\diag(1/\widetilde{\mu}_r, 1/\widetilde{\mu}_b)$, $\widetilde{\mu}_g\equiv\frac{1}{n}\sum_{i=1}^n\frac{W_i}{\overline{W}}\mathds{1}\{\G_i=g\}$, and $\widetilde{e}_g^*\equiv\frac{1}{n}\sum_{i=1}^n \frac{W_i}{\overline{W}}\frac{Z_i^g}{\widetilde{\mu}_g}$.
    \item Numerically approximate $\phi'_{he^*}(\cdot)$, the directional derivative of $\phi(\cdot)$ at $he^*$, by
    \begin{align*}
        \widehat{\phi'}_{he^*}(	\Ddot{he})=\frac{1}{s_n}\left(\phi\left(\widehat{he^*}+s_n(\Ddot{he})\right)-\phi\left(\widehat{he^*}\right)\right),
    \end{align*}
    where $\Ddot{he}\in\ell^\infty(\bS^1)\times\mathbb{R}^2$ is a candidate direction at which we evaluate $\phi'_{he^*}(\cdot)$ and $s_n$ is a vanishing sequence of step sizes such that $\sqrt{n}s_n\to\infty$.
    \item Obtain $\widehat{\phi'}_{he^*}\big(\sqrt{n}\{\widetilde{he^*}-\widehat{he^*}\}\big)$ and  
    \begin{align*}  \widehat{c}_\beta\equiv\inf\bigg\{c:\bP\bigg(\left.\widehat{\phi'}_{he^*}\big(\sqrt{n}\{\widetilde{he^*}-\widehat{he^*}\}\big)\leq c \,\,\right|\, \{(\Y_i,\G_i,\X_i)\}_{i=1}^n\bigg)\geq \beta\bigg\},
\end{align*}
as estimators, respectively, for the limit distribution of $\sqrt{n}\{\phi\big(\widehat{he^*}\big)-\phi\big(he^*\big)\}$ and its $\beta$-quantile, denoted as $c_\beta$.}
    \end{enumerate}
\end{proc}
\noindent The consistency of the bootstrap outlined in Procedure \ref{proc:bootstrap} is stated in the following result.

\begin{prop}
    \label{prop:bootstrap}
\textit{Under the assumptions of Theorem \ref{thm:gaussian},
\begin{align*}
\sup_{f\in\mathcal{BL}_1}\left|\mathbb{E}\big[f\big(\widehat{\phi'}_{he^*}\big(\sqrt{n}\{\widetilde{he^*}-\widehat{he^*}\}\big)\big)\,\big|\,\{(\Y_i,\G_i,\X_i)\}_{i=1}^n\big]-\mathbb{E}\big[f\big(\phi_{he^*}'(\mathbb{G}_{he^*})\big)\big]\right|=o_p(1),
\end{align*}
where $\mathcal{BL}_1$ is the set of $1$-Lipschitz functions $f:\mathbb{R}\to\mathbb{R}$ such that $|f|_\infty\leq1$ and $\mathbb{G}_{he^*}$ is the Gaussian limit process of $\sqrt{n}(\widehat{he^*}-he^*)$ given in Eq. \eqref{eq:e-h-joint} in Appendix \ref{appn:A}. If the cdf of $\phi_{he^*}'(\mathbb{G}_{he^*})$ is continuous and
increasing at its $\beta$-quantile, denoted $c_\beta$, then $\widehat{c}_\beta=c_\beta+o_p(1)$.
}
\end{prop}

The proof of Proposition \ref{prop:lda} shows that $\psi^{\texttt{LDA}}=\phi_{he^*}'(\mathbb{G}_{he^*})$ for a particular $\phi_{he^*}'(\cdot)$ that is the composition of the directional derivatives of the $\min$, $\max$, and $\inf$ functions that constitute $T_n^{\texttt{LDA}}$. The expression of $\phi_{he^*}'(\cdot)$, given in Eq. \eqref{psi-inf}, is complex and hence we recommend the numerical approximation approach in Step 2 of Procedure \ref{proc:bootstrap}. Under Proposition \ref{prop:bootstrap}, we can estimate $c_\beta^{\texttt{LDA}}$, the $\beta$-quantile of $\psi^{\texttt{LDA}}$, by going through the steps in Procedure \ref{proc:bootstrap}, where we replace $\phi(\cdot)$ by the composition of the above mentioned directionally differentiable functions accordingly. The same bootstrap procedure can be used to consistently estimate the critical values put forward to carry out inference in Section~\ref{sec:distance_F}. The use of the infinitesimal constant $\varsigma$ in Proposition~\ref{prop:lda} and Procedure~\ref{proc:test-F} accounts for the possibility that the cdf of $\phi_{he^*}'(\mathbb{G}_{he^*})$ may not be continuous and increasing at $c_{1-\alpha}$.

\subsubsection{Algorithms Yielding an LDA}
\label{subsec:algorithm:LDA}
{
\begin{figure}\captionsetup[subfigure]{font=footnotesize}
\centering
\includegraphics[width=0.75\linewidth]{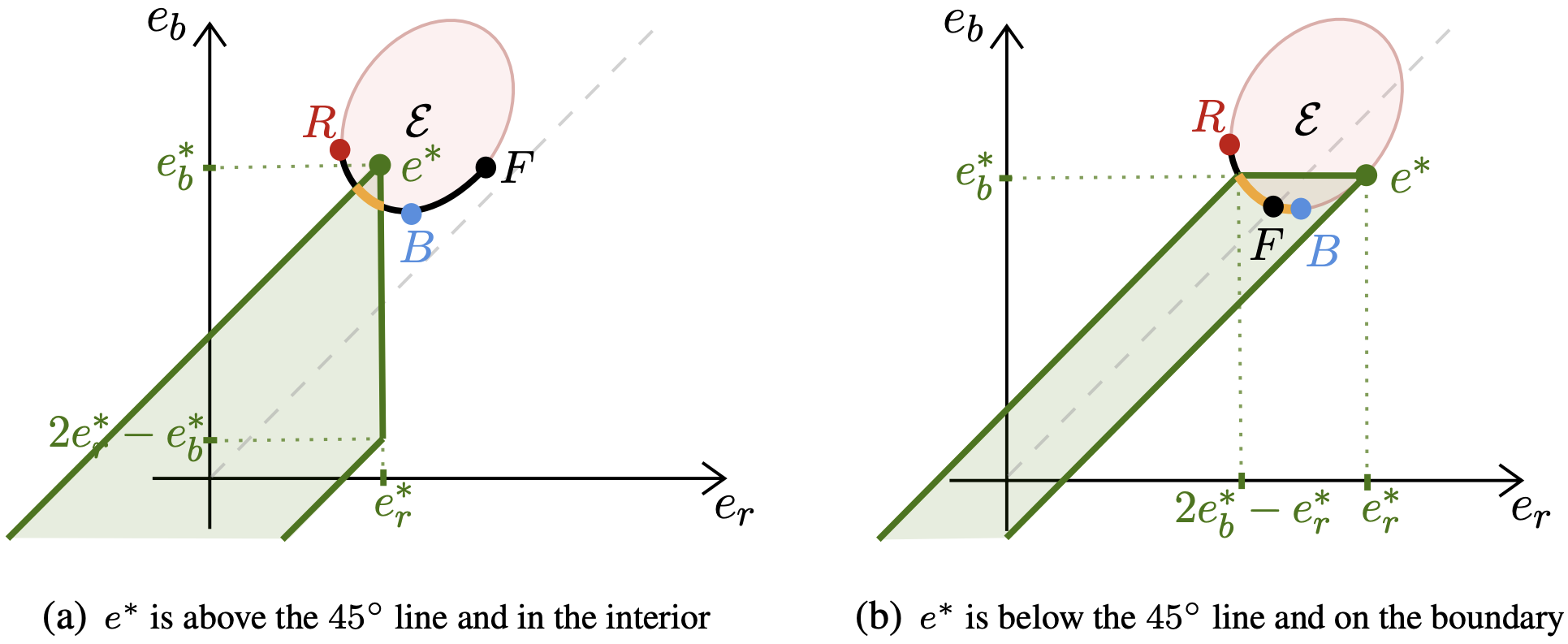}
\caption{\footnotesize{The set $\sF^*$, marked in orange, is the portion of the FA-frontier yielding risk pairs preferred to $e^*$.}
}
\label{fig:frontier-star}
\end{figure}
}
An algorithm designer or a regulator may wonder if one can characterize the algorithms yielding LDAs to a given algorithm $e^*$.
It turns out that it is possible to do so, by combining the algorithm that we put forward in Eq.~\eqref{eq:a_hat} with a careful use of our characterization of $\sF$ in Proposition~\ref{prop:sh}.
We illustrate the idea for the case that Eq.~\eqref{eq:no_kinks_cond} holds and $\E$ has no kinks.
For a given risk pair $e^*$ induced by an algorithm $a^*\in\mathcal{A}$ such that $e^*\notin\sF$, by definition the set of risk pairs $e\in\sF$ such that $e>_{FA}e^*$ are:
\begin{align}
    \sF^*\equiv\sF\cap\cC(e^*)
    =\Bigg\{e\in\E:~&\left[\max_{\q\in\tilde{\bS}^1}(-\h_{\cC(e)}(\q)-\h_\E(-\q))\right]_-=0,\notag\\
    &\left[\max_{\q\in\{\mathfrak{u}_1-\mathfrak{u}_2,\mathfrak{u}_2-\mathfrak{u}_1,-\mathfrak{u}_1,-\mathfrak{u}_2\}}\left(\q^\intercal e-\h_{\cC(e^*)}(\q)\right)\right]_+=0 \Bigg\}.\label{eq:F_star}
\end{align}
The set $\sF^*$ is depicted in Figure~\ref{fig:frontier-star} as the orange portion of $\sF$ that intersects with $\cC(e^*)$.
Using the same notation as in Section~\ref{subsec:algorithms:frontier}, we can use a training sample $\{(\tilde{\Y}_i,\tilde{\G}_i,\tilde{\X}_i)\}_{i=1}^{n_1}$ and a construction that mimics the procedure to build the estimator of $\sF$ in Eq.~\eqref{eq:Fhat} to obtain a consistent estimator $\widehat{\sF}^*_{n_1}$ of $\sF^*$ (the consistency of this estimator can be established through the same steps as in the proof of Proposition~\ref{prop:consistency_coverage_sF}).
{Select $\widehat{e}_{n_1}\in\widehat{\sF}_{n_1}$ through a projection method detailed in the proof of Proposition~\ref{prop:algorithm_consistency} and denote by $e\in\sF^*$ the point to which it converges.}
Let $\widehat{\q}^*_{n_1}(\widehat{e}_{n_1})$ be the consistent estimator of $\q^*_{\bS^1}(e)$ defined in Eq.~\eqref{eq:hat_q_star}.
Let $\widehat{\ss}_{\E,n_2}(\widehat{\q}^*_{n_1}(\widehat{e}_{n_1});\widehat{\btheta}_{n_1})$ be defined as in Eq.~\eqref{eq:hat_ss_alg} but with $\widehat{\q}^*_{n_1}(\widehat{e}_{n_1})$ replacing $q$. Then $\widehat{\ss}_{\E,n_2}(\widehat{\q}^*_{n_1}(\widehat{e}_{n_1});\widehat{\btheta}_{n_1})$ is a consistent estimator of $\ss_\E(\q^*_{\bS^1}(e))\in\sF^*$, by the same argument used to establish Proposition~\ref{prop:algorithm_consistency}.

\section{Distance to the Fairest Point}\label{sec:distance_F}
In this section, we propose a method to build a confidence interval for the distance between the risk $e^*$ induced by a given algorithm and the fairest point $\F$ on the frontier, denoted $\rho(e^*,\F)$, where $\rho$ is a Hadamard directionally differentiable distance function (e.g., the Euclidean distance, Manhattan distance, Chebyshev distance, etc.). 
This can inform a decision maker of the relative merits in promoting equity and achieving business efficiency of different algorithms by comparing the confidence intervals on their distance to $\F$. 

Recall $\mathcal{H}_{45}\equiv\{e\in\mathbb{R}^2:e_r=e_b\}$ denotes the 45-degree line; let $\mathcal{H}_{45}^+\equiv\{e\in\mathbb{R}^2:e_r<e_b\}$ and $\mathcal{H}_{45}^-\equiv\{e\in\mathbb{R}^2:e_r>e_b\}$ denote, respectively, the open halfspace above and below the 45-degree line. As shown in Section \ref{sec:fairest}, the coordinates of $\F$ depend on whether $\E$ intersects with $\mathcal{H}_{45}$.
When $\Tilde{\E}\equiv\E\cap\mathcal{H}_{45}\neq\emptyset$, as shown in Eqs.~\eqref{eq:F_intersect}-\eqref{eq:h_tildeE} we have 
$\F=\mathfrak{u}\cdot\h_{\Tilde{\E}}\left(\mathfrak{u}_1\right)$, with $\h_{\Tilde{\E}}\left(\mathfrak{u}_1\right)=\inf_{c\in\mathbb{R}}\h_\E\left(\mathfrak{u}_1(c)\right)$, $\mathfrak{u}_1(c)\equiv\mathfrak{u}_1-c[1~~-1]^\intercal$, and $\mathfrak{u}\equiv(\mathfrak{u}_1+\mathfrak{u}_2)$. Note that $\h_{\Tilde{\E}}\left(\cdot\right)$ is a Hadamard directionally differentiable function of $\h_\E(\cdot)$, and its composition with the distance function $\rho$ is again directionally differentiable. Hence, we use our DML estimator, Theorem \ref{thm:gaussian}, and the results in \citet{fan:san19} to directly obtain the limit distribution of an estimator for $\rho(e^*, \F)$. 
However, when $\E\subset \mathcal{H}_{45}^+$ (respectively, $\E\subset \mathcal{H}_{45}^-$), $\F$ is given by the support set of $\E$ in direction $(\mathfrak{u}_2-\mathfrak{u}_1)$ (respectively, $(\mathfrak{u}_1-\mathfrak{u}_2)$); see Eqs.~\eqref{eq:F_above}-\eqref{eq:F_below}. {As discussed in Remark~\ref{remark:est-supp-set}, carrying out inference if we use DML to directly estimate both coordinates of a support point is difficult,} and therefore we use Eq.~\eqref{eq:define_SupportSet} to represent $\F$ through moments that involve $\h_\E(\cdot)$ only; see Eqs.~\eqref{eq:MMI-F-above}-\eqref{eq:MMI-F-below} below. 

Observe that $\E\subset \mathcal{H}_{45}^+$ if and only if $\F \in \mathcal{H}_{45}^+$, and $\E\subset \mathcal{H}_{45}^-$ if and only if $\F \in \mathcal{H}_{45}^-$, as illustrated in Panels (b) and (d) of Figure \ref{fig:F}. 
Hence, we partition the parameter space $\bB_C$, to which $\F$ belongs Assumption~\ref{asm:moments}, 
into three sets: $\bB_C^+\equiv\bB_C\cap\mathcal{H}_{45}^+$, $\bB_C^-\equiv\bB_C\cap\mathcal{H}_{45}^-$, and $\bB_C^{45}\equiv\bB_C\cap\mathcal{H}_{45}$.
We then have the following expressions for $\rho(e^*, \F)$:
\begin{enumerate}
\item If $\F\in\bB_C^{45}$, $\rho(e^*, \F)=\rho(e^*,\mathfrak{u}\cdot\h_{\Tilde{\E}}\left(\mathfrak{u}_1\right))$.
\item If $\F\in\bB_C^+$, $\rho(e^*, \F)=\rho(\Tilde{e}, \Tilde{\F})$ for $(\Tilde{e}, \Tilde{\F})$ satisfying:  
\vspace{-.25cm}
    \begin{align}
\left\{\begin{array}{lr}
        \Tilde{e}-e^*=0,\\
        \h_{\E}\big((\mathfrak{u}_2-\mathfrak{u}_1)/\sqrt{2}\big)-\big((\mathfrak{u}_2-\mathfrak{u}_1)/\sqrt{2}\big)^\intercal\Tilde{\F}=0,\\
        \h_\E(\q)\,-\,\q^\intercal \Tilde{\F}~\geq 0, ~\forall \q\in\bS^1.
        \end{array}\right.\label{eq:MMI-F-above}
\end{align}
\item If $\F\in\bB_C^-$, $\rho(e^*, \F)=\rho(\Tilde{e}, \Tilde{\F})$ for $(\Tilde{e}, \Tilde{\F})$ satisfying:
\vspace{-.25cm}
\begin{align}
\left\{\begin{array}{lr} 
          \Tilde{e}-e^*=0,\\
          \h_{\E}\big((\mathfrak{u}_1-\mathfrak{u}_2)/\sqrt{2}\big)-\big((\mathfrak{u}_1-\mathfrak{u}_2)/\sqrt{2}\big)^\intercal\Tilde{\F}=0,\\
        \h_\E(\q)\,-\,\q^\intercal \Tilde{\F}~\geq 0, ~\forall \q\in\bS^1. 
        \end{array}\right.\label{eq:MMI-F-below}
\end{align}
\end{enumerate}
In each of Eqs. \eqref{eq:MMI-F-above}-\eqref{eq:MMI-F-below}, the first condition pins down $\Tilde{e}$ to equal $e^*$; the second and third conditions restrict, respectively, $\tilde{\F}$ to be a point on the supporting hyperplane of $\E$ in the appropriate direction and $\tilde{\F}\in\E$, which together restricts $\tilde{\F}$ to be the support set of $\E$ in this direction.
We propose the following testing procedure:
\begin{proc}[Confidence Interval for  $\rho(e^*, \F)$]
\label{proc:test-F}
\textit{
    \begin{enumerate}
    \item Construct estimators $\widehat{e}^*$ as in Eq.~\eqref{ehat} and  $\widehat{\h}_\E\big(\cdot;\widehat{\btheta}\big)$ as in Theorem \ref{thm:gaussian}.
    \item Emulate the construction in Step 1 of Procedure \ref{proc:test-GB} to obtain two $(1-\alpha)$-level confidence sets for $(e^*, \F)$: let  $\mathcal{CS}_n^+(e^*, \F)$ denote the one for the moments in Eq.~\eqref{eq:MMI-F-above} (in the analog of Eq.~\eqref{eq:CS-R}, replace $\bB_C$ with $\cl{\bB_C^+}$), and $\mathcal{CS}_n^-(e^*, \F)$ the one for the moments in Eq.~\eqref{eq:MMI-F-below} (in the analog of Eq.~\eqref{eq:CS-R}, replace $\bB_C$ with $\cl{\bB_C^-}$).\footnote{For a given set $\bB$, we denote by $\cl{\bB}$ its closure.}
    \item For a given $(\Tilde{e}, \Tilde{\F})\in\bB_C\times\bB_C^{45}$ and $\widehat{\h}_{\Tilde{\E}}\left(\mathfrak{u}_1;
    \widehat{\btheta}\right)\equiv\inf_{c\in\mathbb{R}}\widehat{\h}_\E\big(\mathfrak{u}_1(c);
    \widehat{\btheta}\big)$, let:
    \begin{align}
    T_n^{45}(\rho(\Tilde{e},\Tilde{\F}))\equiv\sqrt{n}\left|\rho\bigg(\widehat{e}^*,\mathfrak{u}\cdot\widehat{\h}_{\Tilde{\E}}(\mathfrak{u}_1;
    \widehat{\btheta})\bigg)-\rho(\Tilde{e}, \Tilde{\F})\right|.\label{eq:test-stat-delta}
    \end{align}
    Let $\psi^{45}$ denote the random variable to which $T_n^{45}(\rho(\Tilde{e},\Tilde{\F}))$ converges in distribution for $\Tilde{e}=e^*$ and $\Tilde{\F}=\F_{45}\equiv\mathfrak{u}\cdot\h_{\Tilde{\E}}\left(\mathfrak{u}_1\right)$.
    Let $c_\beta^{45}$ denote the $\beta$-quantile of $\psi^{45}$ and $\varsigma>0$ an infinitesimal uniformity factor.  Use test inversion to construct the confidence set:
    \begin{align*}
        \mathcal{CS}_n^{45}(\rho(e^*, \F))=\left\{\rho(\tilde{e},\tilde{\F}): (\Tilde{e}, \Tilde{\F})\in\bB_C\times\bB_C^{45}, T_n^{45} \le c_{1-\alpha+\varsigma}^{45}+\varsigma\right\}.
    \end{align*}
    The expression for $\psi^{45}$ is complex and given in Eq.~\eqref{eq:psi-delta}, with a simpler expression provided in Eq.~\eqref{eq:psi-delta_no_kink} for the case that Eq.~\eqref{eq:no_kinks_cond} is satisfied and $\E$ has no kinks.
\item Obtain a confidence interval for $\rho(e^*,\F)$ as $$
\mathcal{CS}_n^{\rho(e^*,\F)}\equiv\left\{\rho(\tilde{e}, \tilde{\F}): (\tilde{e}, \tilde{\F})\in\mathcal{CS}_n^+(e^*,\F)\bigcup\mathcal{CS}_n^-(e^*,\F)\right\}\bigcup\left\{\mathcal{CS}_n^{45}(\rho(e^*,\F))\right\}.
$$
\end{enumerate}}
\end{proc}
Intuitively, this construction inverts a test that jointly assesses the location of $\E$ relative to $\mathcal{H}_{45}$ \textit{and} the value of $\rho(e^*,\F)$. For example, if $\F\in\mathcal{H}_{45}^-$, then both $\mathcal{CS}_n^+(e^*,\F)$ and $\mathcal{CS}_n^{45}(\rho(e^*,\F))$ are empty with probability approaching one (recall from Section \ref{sec:fairest} that $\inf_{c\in\mathbb{R}}\h_\E\left(\mathfrak{u}_1(c)\right)$ is unbounded when $\F\notin\mathcal{H}^{45}$).
Our next result shows that Procedure \ref{proc:test-F} delivers an asymptotically valid confidence interval. 
\begin{prop}
    \label{prop:dist-F}
    \textit{Let $\rho$ be a Hadamard directionally differentiable distance function and the assumptions in Theorem~\ref{thm:gaussian} hold.
    Then the confidence interval constructed following Procedure \ref{proc:test-F} asymptotically covers the true $\rho(e^*,\F)$ with probability at least $1-\alpha$.}
\end{prop}
As we show in the proof, $\psi^{45}$ is again a composition of Hadamard directionally differentiable functions that define $T_n^{\rho(\Tilde{e}, \Tilde{\F})}$ in Eq.~\eqref{eq:test-stat-delta}. We can therefore employ the same bootstrap method detailed in Procedure \ref{proc:bootstrap} to consistently estimate the quantiles $c_\beta^{45}$ for $\beta\in(0,1)$, where we replace $\phi(\cdot)$ by the composition of the directionally differentiable functions, including $\inf$ and $\rho$, that define $T_n^{45}$, whose exact expression we relegate to Appendix \ref{appn:A}.

\section{Monte Carlo Experiments and Empirical Illustration}\label{sec:MC_and_empirical}
\subsection{Monte Carlo Simulations}\label{subsec:MC}
We evaluate the finite sample properties of the tests introduced in Sections \ref{sec:test}-\ref{sec:distance_F} using two distinct data generating processes (DGPs).
For both DGPs, the covariates $\X\equiv[\X_1,\dots,\X_{20}]^\intercal\in\mathbb{R}^{20}$ and group identity $\G$ are drawn from the following distributions:
\begin{align*}
    \X_2 \stackrel{d}{\sim} Unif(0,1),~\X_3\stackrel{d}{\sim}Beta(2,2),~G \stackrel{d}{\sim} Bern\left(0.6\right);\\
    \text{ for } j \in \{1,4,...,20\},~\X_j \stackrel{d}{\sim} \mathcal{N}(0,1) \text{ truncated to } [-3,3]. 
\end{align*}
We consider two different ways of generating the outcome $\Y$:
\begin{enumerate}
    \item Group-balanced DGP: $\Y \,|\, \G, \X\,\,\stackrel{d}{\sim} Bern\left(\frac{\G}{1+e^{-(\X_1+\X_2+0.5\X_3)}}+\frac{(1-\G)}{1+e^{-(-\X_1-0.5\X_2+\X_4)}}\right)$;
    \item $r$-skewed DGP: $\Y \,|\, \G, \X\,\,\stackrel{d}{\sim} Bern\left(\frac{\G}{1+e^{-2(\X_1+\X_2+\X_3)}}+\frac{(1-\G)}{1+e^{-0.7(\X_1+0.5\X_2+0.6\X_4)}}\right)$,
\end{enumerate}
where the group-balanced DGP is such that $\X$ is informative about $\Y$ in opposite directions for group $r$ ($\G=1$) and group $b$ ($\G=0$), but its predictive power is similar across groups. In contrast, the $r$-skewed DGP is such that $\X$ is systematically more informative about the group-$r$ outcome. We take the loss function to be the classification error, $\ell(d,y)=\mathds{1}\{d\neq y\}$. 
We also construct a status quo algorithm $a^*$, that we fix in the simulations, by training once a logistic regression on a sample of size $10,000$, with $5,000$ observations from the balanced DGP and $5,000$ observations from the $r$-skewed DGP.\footnote{We train the logistic regression on a mixture of group-balanced and $r$-skewed data so that we test for existence of an LDA to the same algorithm $a^*$ in both DGPs. DGP-specific logistic regressions trained on $10,000$ observations drawn form that DGP for each case yield group risks $e^*$ very close to the ones plotted in Figure~\ref{fig:simDGP}.}
{
\begin{figure}\captionsetup[subfigure]{font=footnotesize}
\centering
\includegraphics[width=0.9\linewidth]{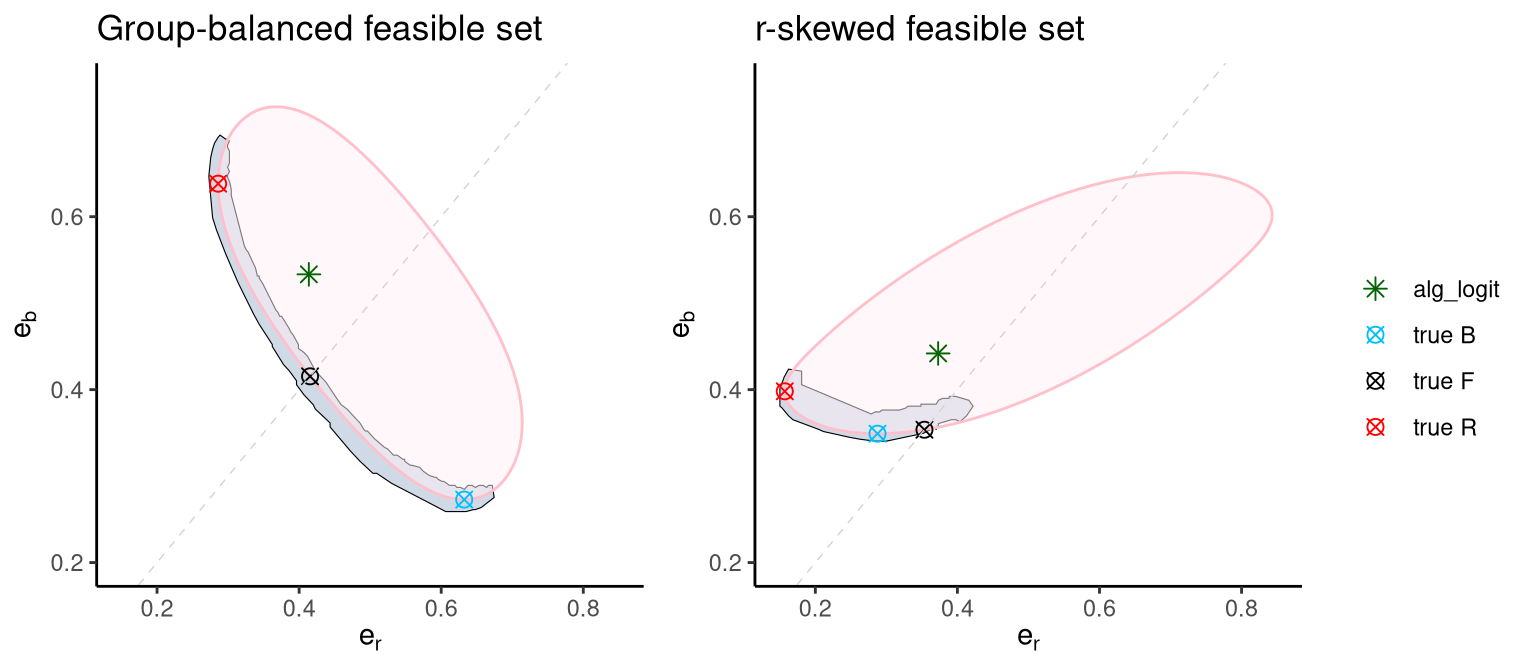}
\caption{\footnotesize{True feasible set $\E$ for group-balanced DGP (left) and $r$-skewed DGP (right), based on evaluating $\ss_\E(\q)$ in Eq.~\eqref{eq:ss} at true $\btheta$ in $500$ directions, with expectations approximated by averaging over $10^7$ observations drawn from the respective DGP. True $\R$ and $\B$ obtained similarly, using Eq.~\eqref{eq:coord}. True $\F$ based on evaluating  Eq.~\eqref{eq:h_tildeE} at true $\btheta$ and minimizing over $c$ via stochastic gradient descent. True risk $e^*$ (green asterisk) induced by the logit algorithm based Eq.~\eqref{eq:expression:e_g}. Shaded region: $95\%$ confidence set for $\sF$ built using $10,000$ observations drawn from the respective DGP and test inversion detailed in Section \ref{subsec:FA_frontier} with $\Delta\btheta$ estimated by logit lasso.
}}\label{fig:simDGP}
\end{figure}
}

Figure \ref{fig:simDGP} depicts the population feasible set $\E$ corresponding to each DGP (pink region), along with a $95\%$ confidence set for the frontier $\sF$ (shaded grey region), constructed using a random sample of $10,000$ observations drawn from the respective DGPs, and the risk $e^*$ induced by the status-quo algorithm $a^*$ (green asterisk).
Throughout Section \ref{subsec:MC}, we fit the nuisance parameter $\Delta\btheta$ using logit lasso and $5$-fold sample splitting. 
To assess the finite sample properties of the LDA test in Section \ref{sec:test-LDA} and the distance-to-$\F$ test in Section \ref{sec:distance_F}, we test whether $e^*$ is on the frontier and whether it is at a specific distance from $\F$.

Table \ref{tbl:mc-results} reports the simulation results. Overall, the Monte Carlo exercise suggests good finite sample properties for our proposed tests, especially as the sample size increases.
The top panel corresponds to the test for weak group skew (Section~\ref{sec:test-group-balance}), where the third column reports the frequency with which the $95\%$ confidence set for the vector $(\R, \B)$ (Eq.~\ref{eq:CS-R}) based on the available sample fails to cover the true value of $(\R,\B)$.
While the $r$-skewed DGP exhibits some over-rejection at $n=1{,}000$, which is a small sample size relative to the complexity of DML estimation with 20 covariates and estimating the support function across directions, this over-rejection quickly disappears as sample size increases. On the other hand, the weak group skew test (fourth column) rejects the null of weak group skew in the balanced DGP and fails to reject it in the $r$-skewed DGP essentially with probabilities 1 and 0, respectively.
This is not surprising because the simulation DGPs are far from the boundary of the null, and because the test is conservative due to the projection step.
{
\begin{table}
\scriptsize
\vspace{-.5cm}
\caption{\vspace{-.25cm}Rejection Rates (1000 Monte Carlo Simulations, $\alpha=0.05$)\label{tbl:mc-results}}
\vspace{.1cm}
\centering
\resizebox{.8\columnwidth}{!}{
\begin{tabular}
{@{\extracolsep{3pt}}l@{}c@{}@{}c@{}@{}c@{}@{}c@{}@{}c@{}@{}} 

\multicolumn{6}{c}{\cellcolor{pink!20}\textbf{Test \textit{$\HH_0:$ Weak Group Skew}}}\\
\cline{1-6}
\multicolumn{1}{c}{$n$} & \multicolumn{1}{c}{DGP} & \multicolumn{3}{c}{$(\R,\B)\notin\mathcal{CS}_n(\R,\B)$} & \multicolumn{1}{c}{\hspace{-.5cm}$\HH_0$ rejected} \\
\cline{1-6}
\multicolumn{1}{c}{\multirow{2}{*}{$1,000$}} & balance & \multicolumn{3}{c}{ $0.035$} & \multicolumn{1}{c}{$\hspace{-.5cm}0.999$} \\ 
\cline{2-6}
& $r$-skew & \multicolumn{3}{c}{ $0.12$} &  \multicolumn{1}{c}{$\hspace{-.5cm}0$}\\
\cline{1-6}
\multicolumn{1}{c}{\multirow{2}{*}{$5,000$}} & balance & \multicolumn{3}{c}{$0.012$} &  \multicolumn{1}{c}{$\hspace{-.5cm}1$}\\ 
\cline{2-6}
& $r$-skew & \multicolumn{3}{c}{$0.021$} & \multicolumn{1}{c}{$\hspace{-.5cm}0$} \\
\cline{1-6}
\multicolumn{1}{c}{\multirow{2}{*}{$10,000$}} & balance & \multicolumn{3}{c}{$0.01$} &  \multicolumn{1}{c}{$\hspace{-.5cm}1$}\\ 
\cline{2-6}
& $r$-skew & \multicolumn{3}{c}{$0.012$} & \multicolumn{1}{c}{$\hspace{-.5cm}0$} \\
\cline{1-6}
\multicolumn{6}{c}{\cellcolor{pink!20}\textbf{Test \textit{$\HH_0:$ There Is No LDA to $\tilde{e}$}}}\\
\cline{1-6}
\multicolumn{1}{c}{$n$} & \multicolumn{1}{c}{DGP} & \multicolumn{1}{c}{$\hspace{.25cm}\tilde{e}=\R$} & \multicolumn{1}{c}{$\hspace{.25cm}\tilde{e}=\B$} & \multicolumn{1}{c}{$\tilde{e}=${\tiny$(\R+\B)/2$}} & \multicolumn{1}{c}{$\hspace{-.4cm}\tilde{e}=e^*$} \\
\cline{1-6}
\multicolumn{1}{c}{\multirow{2}{*}{$1,000$}} & balance & $\hspace{.25cm}0.046$ & $\hspace{.25cm}0.104$ & $0.145$ & $\hspace{-.5cm}0.397$\\ 
\cline{2-6}
& $r$-skew & $\hspace{.25cm}0.051$ & $\hspace{.25cm}0.236$ & $0.073$ & $\hspace{-.5cm}0.162$\\
\cline{1-6}
\multicolumn{1}{c}{\multirow{2}{*}{$5,000$}} & balance & $\hspace{.25cm}0.026$ & $\hspace{.25cm}0.037$ & $0.993$ & $\hspace{-.5cm}1$\\ 
\cline{2-6}
& $r$-skew & $\hspace{.25cm}0.018$ & $\hspace{.25cm}0.043$ & $0.059$ & $\hspace{-.5cm}1$ \\
\cline{1-6}
\multicolumn{1}{c}{\multirow{2}{*}{$10,000$}} & balance & $\hspace{.25cm}0.026$ & $\hspace{.25cm}0.029$ & $1$ & $\hspace{-.5cm}1$ \\ 
\cline{2-6}
& $r$-skew & $\hspace{.25cm}0.007$ & $\hspace{.25cm}0.039$ & $0.264$ & $\hspace{-.5cm}1$ \\
\cline{1-6}
\multicolumn{6}{c}{\cellcolor{pink!20}\textbf{Test \textit{$\HH_0: \rho(\tilde{e}, \F)=\delta$ for Constant $\delta$ Equal to the True Distance}}}\\
\cline{1-6}
\multicolumn{1}{c}{$n$} & \multicolumn{1}{c}{DGP} & \multicolumn{1}{c}{$\hspace{.25cm}\tilde{e}=\R$} & \multicolumn{1}{c}{$\hspace{.25cm}\tilde{e}=\B$} & \multicolumn{1}{c}{$\tilde{e}=${\tiny$(\R+\B)/2$}} & \multicolumn{1}{c}{$\hspace{-.4cm}\tilde{e}=e^*$} \\
\cline{1-6}
\multicolumn{1}{c}{\multirow{2}{*}{$1,000$}} & balance & $\hspace{.25cm}0.242$ & $\hspace{.25cm}0.273$ & $0.251$ & $\hspace{-.5cm}0.159$\\ 
\cline{2-6}
& $r$-skew & $\hspace{.25cm}0.063$ & $\hspace{.25cm}0.048$ & $0.052$ & $\hspace{-.5cm}0.204$\\
\cline{1-6}
\multicolumn{1}{c}{\multirow{2}{*}{$5,000$}} & balance & $\hspace{.25cm}0.081$ & $\hspace{.25cm}0.097$ & $0.088$ & $\hspace{-.5cm}0.051$\\ 
\cline{2-6}
& $r$-skew & $\hspace{.25cm}0.019$ & $\hspace{.25cm}0.027$ & $ 0.017$ & $\hspace{-.5cm}0.077$ \\
\cline{1-6}
\multicolumn{1}{c}{\multirow{2}{*}{$10,000$}} & balance & $\hspace{.25cm}0.069$ & $\hspace{.25cm}0.078$ & $0.073$ & $\hspace{-.5cm}0.046$\\ 
\cline{2-6}
& $r$-skew & $\hspace{.25cm}0.012$ & $\hspace{.25cm}0.024$ & $0.013$ & $\hspace{-.5cm}0.034$ \\
\cline{1-6}
\end{tabular}
}
\begin{tablenotes}
\footnotesize
\item Population values for the balanced DGP are $\R=[0.286, 0.638]^\intercal,\B=[0.632, 0.273]^\intercal, \F=[0.415, 0.415]^\intercal$, $e^*=[0.414, 0.533]^\intercal$; for the $r$-skewed DGP are $\R=[0.157, 0.398]^\intercal,\B=[0.288, 0.349]^\intercal,\F=[0.354, 0.354]^\intercal$, $e^*=[0.373, 0.442]^\intercal$. We take $\rho$ to be the squared Euclidean distance.
\end{tablenotes}
\end{table}
}

The middle panel reports the LDA test results (Section~\ref{sec:test-LDA}). The third (fourth) column shows the frequency with which the population point $\R$ ($\B$), which by definition belongs to $\sF$, is rejected by the test of the null that it belongs to $\sF$. The fifth (sixth) column reports the frequency with which $(\R+\B)/2$ (the risk $e^*$ associated with the logit algorithm $a^*$), which by construction does not belong to $\sF$, is rejected by the test as an element of $\sF$.
While at small sample size ($n=1,000$), the test exhibits some over-rejection for the point $\B$, this quickly disappears as sample size grows. At small and medium sample sizes ($n=1,000$ and $n=5,000$), the rejection probability for the false null that $(\R+\B)/2\in\sF$ is low for the $r$-skewed DGP, while it is high at all sample sizes for the balanced DGP. This is justifiable in light of Figure~\ref{fig:simDGP}, which shows that in the $r$-skewed DGP the chord between $\R$ and $\B$ is close to the frontier and lies largely inside its 95\% confidence set.
For $n=10,000$, the test detects the false nulls in columns five and six with substantially higher probability. 

The bottom panel reports the results for the distance-to-$\F$ test (Section~\ref{sec:distance_F}). This test is more delicate because its implementation requires solving an optimization problem to estimate $\F$. Here we observe more substantial over-rejection for $n=1,000$, but the distortion gets markedly reduced as sample size increases.

\subsection{Empirical Illustration}\label{subsec:empirical}
We revisit the analysis in \citet*[OPVM henceforth]{obe:pow:vog:mul19}, who analyze properties of the algorithm used by a research hospital to determine if a patient should be automatically enrolled in a high-risk care management program.
The algorithm used by the research hospital aims at predicting patients' health needs based on total medical expenditures (the label on which the algorithm is trained). It produces a health risk score and automatically enrolls a patient in the high-risk care management program if that patient's risk score exceeds the $97^{\text{th}}$ percentile of all predicted scores. 
We reassess this algorithm through our testing procedures.
To do so, we use the synthetic data made available by \citet*{li:lin:obe19} at \href{https://gitlab.com/labsysmed/dissecting-bias}{GitLab} to replicate all analyses in \citetalias{obe:pow:vog:mul19}.
The data include $48,784$ patient observations, of which $5,582$ self report as Black and the others self report as White ($g\in\{\textsf{bl},\textsf{wh}\}$). The data include $149$ covariates, such as age, gender, comorbidity and medication variables, costs, and biomarkers, and provide information about each patient's number of active chronic conditions in the subsequent year, viewed as the true measure of health needs of a patient.
Following \citetalias{lia:lu:mu:oku24}, we let $\ell(d,\Y)=\mathds{1}\{\Y\neq d\}$ be the classification loss and, unless explicitly stated otherwise, $\Y_i=\mathds{1}\{$patient $i$ has 6 or more chronic conditions$\}$, with the choice of 6 driven by the fact that it is the $97^{\text{th}}$ percentile of active chronic condition numbers across patients in the sample.
We use random forests with $5,000$ trees to estimate $\Delta\btheta$ as described in Section~\ref{sec:estimation}.
\begin{figure}[t]\captionsetup[subfigure]{font=footnotesize}
    \centering
    \includegraphics[width=0.9\linewidth]{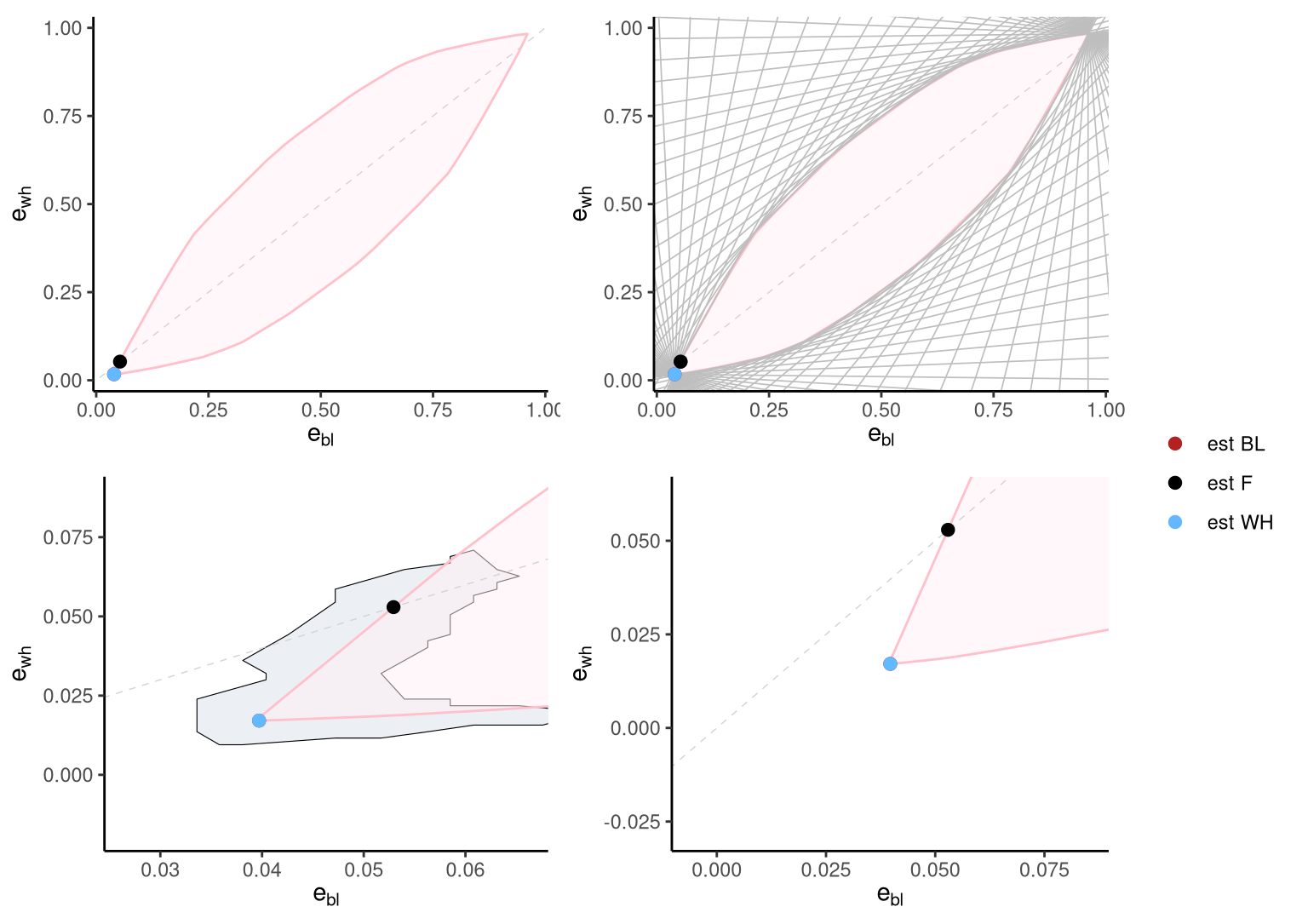}
    \caption{Top-left panel: $\widehat{\E}$; top-right panel: $\widehat{\E}$ along with one hundred supporting hyperplanes; bottom-left panel: zoom-in to $\widehat{\sF}$ and the $95\%$ confidence set around this frontier; bottom-right panel: further zoom-in to the best group-specific points $\textsf{BL}$ and $\textsf{WH}$, and the fairest point $\F$.}
    \label{fig:application:estimated:sets}
\end{figure}

\subsubsection{Feasible Set Estimation and Inference for the Frontier}
\label{subsec:emp-feasible-set-frontier-est}
We report in Figure~\ref{fig:application:estimated:sets} an estimate of the feasible set $\E$ based on Eq. \eqref{eq:Lambda_hat} using $1,000$ directions (top-left panel), along with 100 estimated supporting hyperplanes (top-right panel), where across all panels the horizontal (vertical) axis is the risk for Blacks (Whites), denoted as $e_{\textsf{bl}}$ ($e_{\textsf{wh}}$). We zoom in to show the estimated FA-frontier $\widehat\sF$ and its $95\%$ confidence set (bottom-left panel), and zoom in further to show the best risk achievable for the Black patients (bottom-right panel, red point labeled $\textsf{BL}$), which coincides with and is overlaid by the best risk for the White patients (bottom-right panel, blue point labeled $\textsf{WH}$).

\subsubsection{Hypothesis Testing}\label{subsec:hyp-test}
We next report the weak group skew test results in Figure~\ref{fig:tests}-Panel (a), where both $\textsf{BL}$ and $\textsf{WH}$ are below the $45^\circ$ degree line and the feasible set is $\textsf{wh}$-skewed; the test fails to reject the null of weak group skew, suggesting that implementing Pareto-dominated algorithms could be justified based on the designer's preferences over fairness and accuracy.
In Figure~\ref{fig:tests}-Panel (b) we plot $\widehat{\sF}$, along with its $95\%$ confidence set. 
Figure~\ref{fig:tests}-Panel (b) also plots the estimated group risks associated with the original algorithm used by the hospital (a green asterisk) and three alternative algorithms that \citetalias{obe:pow:vog:mul19} experiment with to assess whether different label choices yield decision rules that are more accurate and fairer than the algorithm currently used by the hospital. One of these algorithms is trained to predict total cost (hollow diamond with a cross in Figure~\ref{fig:tests}-Panel (b)), one to predict avoidable costs (filled diamond), and the other one to predict the number of active chronic conditions (hollow diamond). All our tests take into account the finite sample estimation error of the group risks induced by these four algorithms. The top panel of Table \ref{tbl:LDA-results} reports the values of the LDA test statistics and the associated critical values that we compute for the four algorithms considered by \citetalias{obe:pow:vog:mul19}. The hypothesis that the original algorithm yields group risks on the frontier is rejected, and so is the same hypothesis for group risks associated with the algorithm trained to predict total costs.
On the other hand, we fail to reject that the algorithms trained to predict avoidable costs and the number of active chronic conditions yield group risks on the FA-frontier.
Regarding the three algorithms proposed by \citetalias{obe:pow:vog:mul19}, if one were to plot the set $\sC$ corresponding to the original algorithm, it would be immediate to see that one cannot reject the hypothesis that all other algorithms considered by \citetalias{obe:pow:vog:mul19} improve upon it, both in terms of fairness and accuracy.

\begin{figure}[t]\captionsetup[subfigure]{font=footnotesize}
\centering
\subfigure[Candidate values for $(\textsf{BL}, \textsf{WH})$]{
\includegraphics[width=0.46\linewidth]{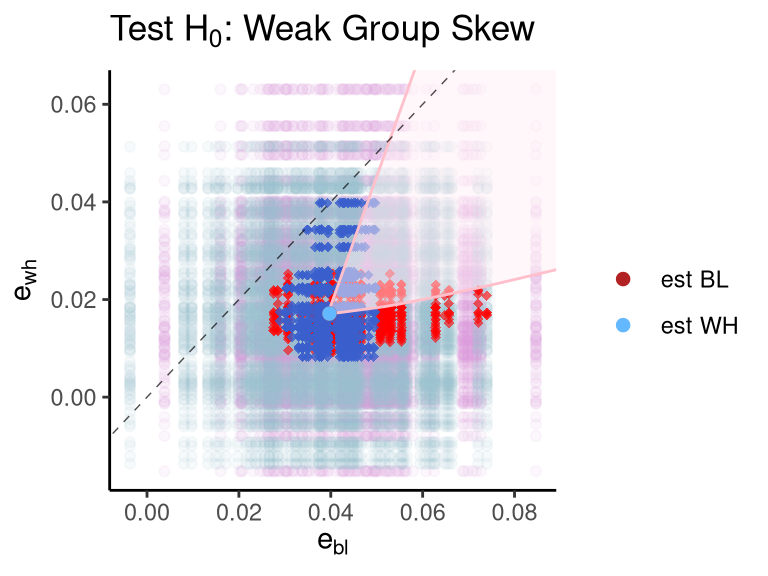}
}
\hspace{1em}
\subfigure[$3$ experimental algorithms]
{\includegraphics[width=0.46\linewidth]{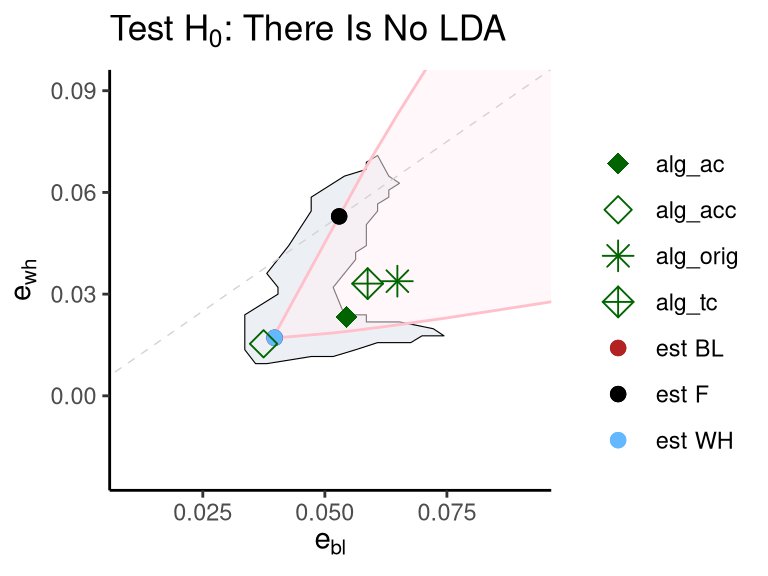}
}
    \caption{Panel (a): Plum-colored (respectively, light-blue colored) circles correspond to candidate values for the best point for Blacks $\textsf{BL}$ (Whites $\textsf{WH}$) sampled from a normal distribution centered at the estimated $\textsf{BL}$ ($\textsf{WH}$), and red (blue) diamonds correspond to non-rejected values.
    Panel (b): $\widehat{\sF}$ along with its $95\%$ confidence set and the estimated group risks for four algorithms considered by \citetalias{obe:pow:vog:mul19}: the original algorithm used by the hospital (asterisk); one that predicts total cost (hollow diamond with a cross); one that predicts avoidable costs (filled diamond); and one that predicts the number of active chronic conditions (hollow diamond).}
    \label{fig:tests}
\end{figure}

The bottom panel of Table~\ref{tbl:LDA-results} shows that the closest risk to $\F$ is the one associated with the algorithm predicting total costs. However, all confidence intervals overlap.

{
\begin{table}[t]
\scriptsize
\caption{\vspace{-.25cm}Results for the LDA Test and Confidence Sets for the Distance to $\F$ (for $\alpha=0.05$)\label{tbl:LDA-results}}
\vspace{.1cm}
\centering
\resizebox{0.83\columnwidth}{!}{
\begin{tabular}
{@{\extracolsep{3pt}}c@{}c@{}@{}c@{}@{}c@{}@{}c@{}@{}c@{}@{}} 

\multicolumn{5}{c}{\cellcolor{pink!20}\textbf{Test \textit{$\HH_0:$ There Is No LDA}}}\\
\cline{1-5}
\multicolumn{1}{c}{} & 
 \multicolumn{1}{c}{Original} & \multicolumn{1}{c}{Total Costs} & \multicolumn{1}{c}{Avoid. Costs} & \multicolumn{1}{c}{Act. Chr. Cond.} \\
\cline{1-5}
Estimated Risks & $ (0.065, 0.034)$ & $(0.059, 0.033)$ & $(0.054, 0.023)$ & $(0.037, 0.015)$ \\
Test Statistic & $3.767$ & $2.753$ & $1.428$ & $0.541$\\
Critical Value & $1.885$ & $1.948$ & $1.732$ & $1.642$ \\
Conclusion & Rejected & Rejected & Not Rejected & Not Rejected \\ 
\cline{1-5}
\multicolumn{5}{c}{\cellcolor{pink!20}\textbf{Distance to $\F=(0.049, 0.049)$}}\\
\cline{1-5}
Estimated Distance & $0.0005$ & $0.0004$ & $0.0007$ & $0.0013$ \\
Confidence Set & $(0.000, 0.001)$ & $(0.000, 0.001)$ & $(0.000, 0.002)$ & $(0.000, 0.003)$\\
\cline{1-5}
\end{tabular}
}
\begin{tablenotes}
\footnotesize
\item Top panel: LDA test statistics and $0.05$-level critical values associated with the original algorithm and the three experimental algorithms (predicting, respectively, total costs; avoidable costs; number of active chronic conditions) analyzed by \citetalias{obe:pow:vog:mul19}. Bottom panel: estimated squared-Euclidean distance to the $\F$ point and corresponding confidence set for this distance.
\end{tablenotes}

\end{table}
}

\subsubsection{Performance of Algorithms Constructed to Be on a Constrained Frontier}\label{subsec:build-alg}
Finally, we compare various outcomes associated with the algorithms considered by \citetalias{obe:pow:vog:mul19} with those of decision rules resulting from the algorithms that we propose in Section~\ref{subsec:algorithms:frontier}. To do so, we randomly split the sample into two halves, and use one half (training data) to estimate the nuisance parameter $\Delta\btheta$, and implement Eq.~\eqref{eq:a_hat} on the other half (evaluation data) for the following choices of $\q$:
\begin{enumerate}
    \item[(i)] $\q=[-1~~~0]^\intercal$, yielding an algorithm that (asymptotically) achieves the best point on $\sF$ for Black patients, which we call Rawlsian because $e_\textsf{bl}>e_\textsf{wh}$ at $\textsf{BL}$.
    \item[(ii)] $\q=[0~~-1]^\intercal$, yielding an algorithm that (asymptotically) achieves the best point on $\sF$ for White patients, which we call Majority as Whites are the majority of the sample.
    \item[(iii)] $\widehat{\q}(\widehat\F)$, yielding an algorithm based on a direction estimated using the entire sample that (asymptotically) achieves $\F$, the fairest point on $\sF$, as established in Proposition~\ref{prop:algorithm_consistency}, which we call Egalitarian.
    \item[(iv)] $\q=-\frac{\sqrt{2}}{2}[1~~1]^\intercal$, yielding an algorithm that weighs both groups equally, which we call Utilitarian as it can be expressed as a generalization of the Utilitarian rule.
\end{enumerate}
Throughout, we recognize that to compare algorithms we should enforce a global capacity constraint on the total percentage of patients assigned to the high-risk case management program, so that variation across algorithms is not confounded with a possibly more generous care program.
To enforce the capacity constraint, we need to require the algorithms to satisfy $\int a(x)d\bP_\X\le \bar{a}$ for some known constant $\bar{a}$.
For example, $\bar{a}=0.03$ when comparing with the algorithm currently used by the hospital which assigns only patients with risk score above the $97^\text{th}$ percentile to the high-risk care program.
Recall from the discussion following Proposition~\ref{prop:sf} that without capacity constraints, we have
\begin{align}
    \h_\E(\q)&=\mathbb{E}\left[\q_1\tfrac{\theta_0^r(\X)}{\mu_r}+\q_2\tfrac{\theta_0^b(\X)}{\mu_b}\right]+\max_{a\in\mathcal{A}(\sX)}\mathbb{E}\left[a(\X)k\left(\btheta(\X),\cM\q\right)\right].\label{eq:h:a_based}
\end{align}
When $\mathcal{A}(\sX)$ is constrained to only include algorithms such that $\int a(x)d\bP_\X\le \bar{a}$, maximization in Eq.~\eqref{eq:h:a_based} is achieved by setting
\begin{align}
    a^\text{opt}(\X;\q)=\mathds{1}\{k\left(\btheta(\X),\cM\q\right)>\max(0,\mathrm{quant}_{k(\btheta(\X),\cM\q)}(1-\bar{a}))\},\label{eq:a_constrained}
\end{align}
for $\mathrm{quant}_{k(\btheta(\X),\cM\q)}(\alpha)$ the $\alpha$-quantile of $k(\btheta(\X),\cM\q)$ and $k(\btheta(\X),\cM\q)$ as in Eq.~\eqref{eq:k_Delta}.\footnote{In this case, the \emph{constrained} feasible set is $\E^\texttt{co}_{\bar{a}}\equiv \left\{\bigl(\er(a),\eb(a)\bigr)\in\mathbb{R}^2:a\in\mathcal{A}(\sX)~\text{and}~\int a(x)d\bP_\X\le \bar{a}\right\}$, a convex subset of $\E$, and the algorithms in Eq.~\eqref{eq:a_constrained} is on its frontier.}
In words, for our example with $\bar{a}=0.03$, high-risk care management is assigned to those patients with positive values of $k(\btheta(\X),\cM\q)$ that exceed its $97^\text{th}$ percentile.

The results of our first exercise are reported in Figure~\ref{fig:application:build:alg}.
The top panel replicates Figure 1-(a) in \citetalias{obe:pow:vog:mul19}, and shows that at each percentile of the original algorithm's risk score (the horizontal axis), Black patients have a substantially larger number of active chronic conditions (the vertical axis) than White patients.\footnote{The top panel of Figure \ref{fig:application:build:alg} is not identical to Figure 1-(a) of \citetalias{obe:pow:vog:mul19} because it is plotted using the synthetic data, instead of the real data, and because the code used to generate Figure 1-(a) of \citetalias{obe:pow:vog:mul19} had an error that was later corrected after the publication of \citetalias{obe:pow:vog:mul19}; see the documentation of \citet{li:lin:obe19} for more details.}
Even among patients automatically enrolled in the high-risk care management program (those with a risk score above the $97^{\text{th}}$ percentile), Black patients appear to be in worse health than White patients.

We ask whether the algorithms that we propose are able to select for treatment patients that, in fact, exhibit substantially worse health outcomes one year later.
The bottom four panels in Figure~\ref{fig:application:build:alg} plot, for each of the algorithms described at the beginning of this section, the number of active chronic conditions for (a) Black patients that our algorithm does not assign to the high-risk care program (dash-dotted, light-purple line); (b) White patients that our algorithm does not assign to the high-risk care program (solid salmon line); (c) Black patients that our algorithm assigns to the high-risk care program (dash-dotted, dark-purple line); (b) White patients that our algorithm assigns to the high-risk care program (solid orange line).
For comparability  with the top panel, on the horizontal axis we continue to report the risk score produced by the algorithm used by the hospital.

\begin{figure}\captionsetup[subfigure]{font=footnotesize}
    \centering
    \includegraphics[width=0.95\linewidth]{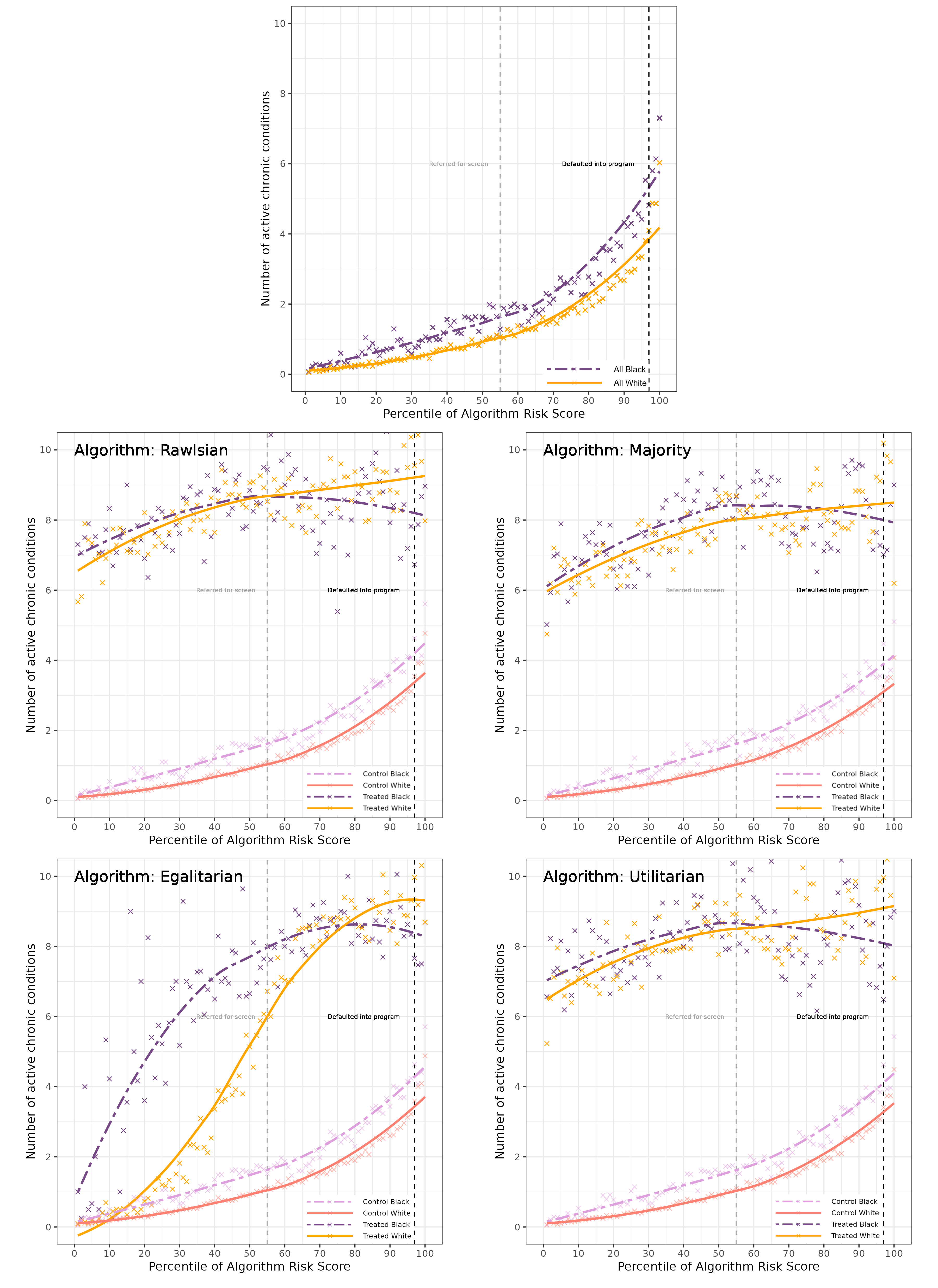}
    \caption{Average number of active chronic conditions within each risk-score percentile bin by treatment group under the alternative algorithms on the FA frontier subject to $3\%$ capacity constraint, averaged across $20$ replications of the $50$-$50$ split.}
    \label{fig:application:build:alg}
\end{figure}

The main takeaway from Figure~\ref{fig:application:build:alg} is that our four proposed algorithms, which use group identity to estimate $\Delta\btheta$ but not for treatment assignment, are successful at selecting for treatment patients who, one year later, experience substantially worse health outcomes.
Moreover, with the Ralwsian, Majority, and Utilitarian algorithms, Black and White patients assigned to treatment have similar numbers of chronic conditions.
Only the Egalitarian algorithm shows notable disparities for patients with hospital risk scores below the $70^{\text{th}}$ percentile.
We think this result might be due to two factors: estimation of $\F$ is challenging and hence the direction $\widehat{\q}(\widehat{\F})$ might be imprecisely estimated; and the feasible set is $\textsf{wh}$-skewed and hence the Egalitarian algorithm leads to a Pareto-dominated outcome.

Our last exercise aims at further assessing the extent to which our proposed algorithms may reduce the substantial disparities between Black and White patients in the current program screening practices documented by \citetalias{obe:pow:vog:mul19}.
To illustrate the potential for improvement over the hospital's algorithm, \citetalias{obe:pow:vog:mul19} simulate ``a counterfactual world with no gap in health conditional on risk'' (p.3).
They construct an infeasible, \emph{couterfactual} algorithm that uses group identity and patients' ex-post active chronic health conditions to find the sickest Black patient with health risk score just below a threshold (the ``inframarginal Black patient'') and the healthiest White patient with health risk score just above the same threshold (the ``supramarginal White patient''). If the number of chronic health conditions of the inframarginal Black patient is larger than that of the supramarginal White patient, they iteratively swap them until the number of chronic health conditions of the inframarginal Black patient equals that of the supramarginal White patient.
\citetalias{obe:pow:vog:mul19} find that at all risk thresholds $\alpha$ above the median, the counterfactual algorithm increases the fraction of Black patients treated. 
Table~\ref{tbl:frac-bl-trt}, columns 2-3, show that across the various thresholds for automatic enrollment in the program that we consider, the fraction of Black patients would rise between 6 and 41 percentage points.\footnote{In columns 2-3, the fractions reported are based on a denominator that equals the total number of patients with risk scores above a certain percentile of risk scores, e.g., the $55^{\text{th}}$ percentile, with this threshold viewed as the threshold above which the patient is automatically enrolled in the high-risk care program, and a numerator that equals the number of Black patients with risk score above that percentile.}

We ask how much of this gap could be filled if one were to use our algorithms, which are feasible and do not rely on ex-post knowledge of the number of active chronic conditions nor on using group identity for assignment to the high-risk care program.
The results, reported in Table~\ref{tbl:frac-bl-trt}, columns 4-7,\footnote{For each algorithm, the fraction of Black patients treated at each capacity threshold is computed as the following ratio.
The denominator equals the total number of patients treated under a given algorithm, e.g., the Rawlsian algorithm, subject to the constraint that at each threshold $\bar{a}\in[0.55,0.97]$ at most $100(1-\bar{a})\%$ of the evaluation sample is treated, and for each threshold $\bar{a}$ the outcome $Y$ equals the indicator of whether the number of active chronic conditions exceeds the $\bar{a}$-quantile of the distribution of the number of active chronic conditions.}
show that each of our four algorithm yields an increase in the fraction of Black patients treated at all capacity thresholds. The Rawlsian and the Utilitarian algorithms yield the largest increases, ranging between 4 and 18 percentage points across the various capacity thresholds.
This shows that these two feasible, easy to implement algorithms can close between $26\%-69\%$ of the gap between the algorithm that the hospital uses and the counterfactual, infeasible algorithm simulated by \citetalias{obe:pow:vog:mul19}.

{
\begin{table}
\scriptsize
\vspace{-.5cm}
\caption{\vspace{-.25cm}Fraction of Black Patients Treated among All Treated \label{tbl:frac-bl-trt}}
\vspace{.1cm}
\centering
\resizebox{0.95\columnwidth}{!}{
\begin{tabular}{c|c@{\hspace{-.5cm}}c@{\hspace{-.5cm}}|cccc}
\cline{1-7}
\multicolumn{1}{c}{\cellcolor{pink!20}} & \multicolumn{2}{c}{\cellcolor{pink!20}\hspace{-.1cm}\textbf{Algorithms from \citeauthor{obe:pow:vog:mul19}}} & \multicolumn{4}{c}{\cellcolor{pink!20}\textbf{Algorithms on the FA-Frontier}}\\ 
\cline{1-7}
Capacity Threshold & $\hspace{.5cm}$Original & Counterfactual & Rawlsian & Majority & Egalitarian & Utilitarian \\
\cline{1-7}
$55$ & $\hspace{.5cm}0.120$ & $0.184$ & $0.164$ & $0.164$ & $0.160$ & $0.164$\\
$69$ & $\hspace{.5cm}0.128$ & $0.255$ & $0.185$ & $0.184$ & $0.163$ & $0.185$\\
$82$ & $\hspace{.5cm}0.138$ & $0.327$ & $0.206$ & $0.207$ & $0.160$ & $0.206$\\
$89$ & $\hspace{.5cm}0.151$ & $0.407$ & $0.218$ & $0.223$ & $0.166$ & $0.219$\\ 
$94$ & $\hspace{.5cm}0.167$ & $0.498$ & $0.286$ & $0.262$ & $0.190$ & $0.273$\\
$97$ & $\hspace{.5cm}0.184$ & $0.592$ & $0.368$ & $0.300$ & $0.253$ & $0.338$ \\
\cline{1-7}
\end{tabular}
}
\begin{tablenotes}
\footnotesize
\item The distribution of the number of active chronic conditions is such that the $55^\text{th}$ to the $68^\text{th}$ percentiles all correspond to $1$ active chronic condition, the $69^\text{th}$-$81^\text{st}$ correspond to $2$, the $82^\text{nd}$-$88^\text{th}$ correspond to $3$, the $89^\text{th}$-$92^\text{nd}$ correspond to $4$, the $94^\text{th}$-$95^\text{th}$ correspond to $5$, and the $96^\text{th}$-$97^\text{th}$ correspond to $6$. 
\end{tablenotes}
\end{table}
}

We conclude by noting that results based on random forests trained using the \texttt{grf} package are not guaranteed to reproduce exactly across platforms, even with the same seed (all simulations and estimation are in \texttt{R}). This is a known feature of \texttt{grf} (see the \href{https://grf-labs.github.io/grf/REFERENCE.html\#forests-predict-different-values-depending-on-the-platform-even-though-the-seed-is-the-same}{reference manual}).
Similarly, tests based on optimization via stochastic gradient descent implemented by the \texttt{torch} package are not exactly reproducible, even after seed setting, which affects the $\F$ estimator and thus the distance-to-$\F$ test (see the \href{https://github.com/mlverse/torch/issues/1311}{repository discussion}).
To assess sensitivity of our results to these features, we repeat the entire empirical exercise 20 times with different seeds. Appendix \ref{appn:C} reports results and includes a robustness check using logit lasso for estimating the nuisance function.\footnote{For the simulations in Section~\ref{subsec:MC}, we expect cross-platform variability to be negligible, as results are averaged over 1000 Monte Carlo replications.}
While the results exhibit some nontrivial variation across seeds (although the qualitative results are unchanged), we view this as expected, considering that the nonparametric estimation step involves 149 covariates against a total sample size of $48,784$ and a minority group of size $5,582$.

\section{Conclusion}\label{sec:conclude}
We provide a consistent nonparametric estimator for a theoretical fairness-accuracy frontier proposed by \citetalias{lia:lu:mu:oku24} and algorithms that attain points on this frontier. We obtain the estimator through judicious use of the separating hyperplane theorem and the support function of the (convex) feasible set of expected losses associated with all possible algorithms, a portion of whose boundary coincides with the FA-frontier. We provide a DML estimator of the support function and show it converges to a tight Gaussian process as sample size increases. We formulate important policy-relevant hypotheses that have received much attention in the fairness literature as restrictions on the support function and construct valid test statistics. We provide an estimator for the distance between a given algorithm and the fairest point on the frontier.
We carry out a Monte Carlo exercise that illustrates the good finite sample properties of our method, and demonstrate its practical relevance by revisiting the empirical analysis in \citetalias{obe:pow:vog:mul19}. Our results show that the algorithm that a research hospital employs to screen patients for high-risk care is not on the frontier. Our proposed algorithms substantially improve over this status quo in terms of both fairness and accuracy.

\bibliographystyle{ecta-fullname} 
\bibliography{draft_LM_arXiv.bib}  

\begin{appendix}

\section{Proofs of Main Results}\label{appn:A}
\subsection{Proofs for Section \ref{sec:sup_fun}}
\begin{proof}[Proof of Proposition \ref{prop:sf}]
As shown in the discussion leading to Eq.~\eqref{eq:E_as_Aumann}, $\E=\left\{\mathbb{E}[\cM\vartheta(\X)]: \vartheta(\X)\in \linseg(\X) \right\}=\mathbf{E}\left[\cM\linseg(\X)\right]$, where $\linseg(\X) \equiv \conv\left(\{\btheta_0(\X), \btheta_1(\X)\}\right)$, with $\btheta_d(\X)$ defined in (\ref{eq:def_theta_d}) for $d\in\{0,1\}$. Hence, $\cM\linseg(\X)$ is a random compact interval \citep[Example 1.11]{mol:mol18}, and $\mathbf{E}\left[\cM\linseg(\X)\right]$ is its \emph{Aumann expectation} \citep[Def. 3.1]{mol:mol18}, which is well defined because $\linseg(\X)$ is an integrable random convex set owing to $\mathbb{E}[|\theta_d^g(\X)|]\leq \mathbb{E}[|\theta_d^g(\X)|^2]^{1/2} \leq \mathbb{E}\big[(\LL_d^g)^2\big]^{1/2}<{\sqrt{c_2}}<\infty$ for any $d\in\{0,1\},g\in\{r,b\}$ by Assumption~\ref{asm:moments}. 
It follows that $\h_{\E}(\q)=\h_{\mathbf{E}\left[\cM\linseg(\X)\right]}(\q)=\mathbb{E}[\h_{\cM\linseg(\X)}(\q)]$ \citep[][Theorem 3.11]{mol:mol18}. Observe that
\begin{align}
   \h_{\cM\linseg(\X)}(\q)\equiv\max_{\vartheta(\X) \in \linseg(\X)} (\cM\q)^\intercal\vartheta(\X)=\max\{(\cM\q)^\intercal\btheta_0(\X),(\cM\q)^\intercal\btheta_1(\X)\}\label{eqn:supp-func-lineseg},
\end{align}
where the last equality in Eq.~\eqref{eqn:supp-func-lineseg} is well-known in the literature \citep[e.g.,][p.~105]{roc97}.
Taking the expectation with respect to $\bP(\X)$ yields the first line of Eq.~\eqref{eqn:sf-new}, which we can re-write as
\begin{align*}
\h_\E(\q)=\mathbb{E}\bigg[(\cM\q)^\intercal\btheta_0(\X)+(\cM\q)^\intercal\big(\btheta_1(\X)-\btheta_0(\X)\big)\mathds{1}\big\{(\cM\q)^\intercal\big(\btheta_1(\X)-\btheta_0(\X)\big)>0\big\}\bigg].
\end{align*}
The law of iterated expectations yields the expression in the second line of Eq.~\eqref{eqn:sf-new}.
\end{proof}

\begin{proof}[Proof of Proposition \ref{prop:ss}] 
The following proof closely follows the argument from \citet[Lemma 3]{cha:che:mol:sch18}. Take any $\|\delta\|_E\to 0$,
\vspace{-.25cm}
\begin{align*}
&\frac{1}{\|\delta\|_E}\Biggl(\mathbb{E}\big[\big(\cM(\q+\delta)\big)^\intercal\btheta_0 +k\big(\btheta,\cM(\q+\delta)\big)\mathds{1}\big\{k\big(\btheta,\cM(\q+\delta)\big)>0\big\}\big]\\
&\hspace{6cm}-\mathbb{E}\big[(\cM\q)^\intercal\btheta_0 +k(\btheta,\cM\q)\mathds{1}\{k(\btheta,\cM\q)>0\}\big]\Biggr)
\\
=&\frac{\delta^\intercal}{\|\delta\|_E}\mathbb{E}\big[\cM\btheta_0 +\cM(\btheta_1-\btheta_0)\mathds{1}\{k(\btheta,\cM\q)>0\}\big]+\frac{1}{\|\delta\|_E}\mathbb{E}[R(\q,\delta)],
\end{align*}
where
\vspace{-0.5cm}
\begin{align*}
R(\q,\delta) &\equiv\big(\cM(\q+\delta)\big)^\intercal(\btheta_1-\btheta_0)\mathds{1}\big\{k(\btheta,\cM\q)\leq 0< k\big(\btheta,\cM(\q+\delta)\big)\big\}\\
&\hspace{.5cm}-\big(\cM(\q+\delta)\big)^\intercal(\btheta_1-\btheta_0)\mathds{1}\big\{k(\btheta,\cM\q) > 0 \geq k\big(\btheta,\cM(\q+\delta)\big)\big\}\\
\implies
&\sup_{q\in\bS^1} \mathbb{E}[|R(\q,\delta)|]
\lesssim \|\delta\|_E\cdot
\bigl\|\btheta_1-\btheta_0\bigr\|_{L^2(\bP)}\cdot\sup_{q\in\bS^1} \biggl
\|\mathds{1}\bigl\{\left|k(\btheta,\cM\q)\right|<\left|k(\btheta,\cM\delta)\right|\bigr\}\biggr\|_{L^2(\bP)}\\
&\lesssim\|\delta\|_E\cdot\sup_{\q\in\bS^1} \mathbb{P}\bigl(\left|k(\btheta,\cM\q)\right|<\left|k(\btheta,\cM\delta)\right|\bigr)^{1/2}\lesssim \|\delta\|_E\big(\|\delta\|_E^{m/2}+\|\delta\|_E^{1/2}\big)^{1/2}
\end{align*}
where the first inequality follows from Hölder's inequality and that, for each $\X$,  conditional on the event  $\left|k(\btheta,\cM\q)\right|<\left|k(\btheta,\cM\delta)\right|$,
we have $|\big(\cM(\q+\delta)\big)^\intercal(\btheta_1-\btheta_0)|\leq 2|\cM\delta^\intercal(\btheta_1-\btheta_0)|\leq2\|\cM\delta\|_E\|\btheta_1-\btheta_0\|_E\lesssim\|\delta\|_E\|\btheta_1-\btheta_0\|_E$;
the second inequality follows from $\bigl\|\btheta_1-\btheta_0\bigr\|_{L^2(\bP)}<\infty$ by Assumption~\ref{asm:moments}.  
The last inequality follows from
\begin{align}
    \sup_{\q\in\bS^1} \mathbb{P}\bigl(\left|k(\btheta,\cM\q)\right|<\left|k(\btheta,\cM\delta)\right|\bigr)&\leq \sup_{\q\in\bS^1} \mathbb{P}\bigl(\left|k(\btheta,\cM\q)\right|<\|\delta\|_E^{1/2}\bigr)+\mathbb{P}\bigl(\left|k(\btheta,\cM\delta)\right|\geq\|\delta\|_E^{1/2}\bigr)\notag\\
    &\lesssim\|\delta\|_E^{m/2}+\|\delta\|_E^{1/2},\label{eq:markov-ineq}
\end{align}
where the last line follows because $\sup_{\q\in\bS^1} \mathbb{P}\bigl(\left|k(\btheta,\cM\q)\right|\bigr)\lesssim\|\delta\|_E^{m/2}$ by Assumption \ref{asm:smooth-X} and $\mathbb{P}\bigl(\left|k(\btheta,\cM\delta)\right|\geq\|\delta\|_E^{1/2}\bigr)\leq\frac{\|\cM\delta\|_E\mathbb{E}[\|\btheta\|_E]}{\|\delta\|_E^{1/2}}\lesssim\|\delta\|_E^{1/2}$ by Markov's inequality and Assumption~\ref{asm:moments}. Hence,
$\sup_{q\in\bS^1}\frac{1}{\|\delta\|_E}\mathbb{E}[R(\q,\delta)]\leq \sup_{q\in\bS^1}\frac{1}{\|\delta\|_E}\mathbb{E}[|R(\q,\delta)|]\lesssim(\|\delta\|_E^{m/2}+\|\delta\|_E^{1/2})^{1/2} \to 0$ and Eq. \eqref{eq:ss} follows from applying the law of iterated expectations.

Next, the claim that $\nabla_\q\h_\E(\q)=\ss_\E(\q)$ follows from \citet[Corollary 1.7.3]{sch93}. Finally, uniform continuity of $\ss_\E(\q)$ follows from continuity over compact $\bS^1$.
\end{proof}

\begin{proof}[Proof of Proposition \ref{prop:sh}]
By definition of $\sF$ and $\cC(e^*)$, $e^*\in\sF$ if and only if $\cC(e^*)\cap\E=\{e^*\}$. Suppose $e^*\in\sF$. Then $\text{relint}(\cC(e^*))\cap\text{relint}(\E)=\emptyset$. Since both $\cC(e^*)$ and $\E$ are nonempty convex sets, by \citet[Theorem 1.3.8]{sch93} $\cC(e^*)$ and $\E$ are properly separated. Hence, there exists $\q\in\bS^1$ and $z\in\mathbb{R}$ such that
$$
\forall\,\Tilde{e}\in \cC(e^*), \,\,\,\Tilde{e}^\intercal\q \leq z\quad\text{and}\quad\forall\, e\in\E,\,\,\, e^\intercal(-\q)\leq -z.
$$
Since $e^*\in \cC(e^*)\cap\E$, we have ${e^*}^\intercal\q=z$ and
$
\h_{\cC(e^*)}(\q)=-\h_\E(-\q)=z.
$
For the other direction, suppose there exists $\q\in\bS^1$ such that $\h_{\cC(e^*)}(\q)=-\h_\E(-\q)$. By definition, $e^*$ is feasible and $e^*\in \cC(e^*)\cap\E$.
If there exists $e'\in \cC(e^*)\cap\E$ but $e'\neq e^*$, then $\q^\intercal{e'}=\q^\intercal {e^*}=-\h_\E(-\q)$. This means that the support set of $\E$ in direction $-\q$ is not a singleton, contradicting Assumption \ref{asm:smooth-X}, under which $\E$ has a smooth boundary.
To obtain the characterization of $\sF$ in Eq.~\eqref{eq:FA_frontier_supp_func}, for $\bB^1\equiv\{\q:\Vert\q\Vert_E\le 1\}$, note that the optimization problem $\max_{\q\in\bB^1}(-\h_{\cC(e^*)}(\q)-\h_\E(-\q))$ is dual to the primal problem $\min_{\tilde{e}\in\cC(e^*),e\in\E}\Vert\tilde{e}-e\Vert_E$, which measures the distance between $\cC(e^*)$ and $\E$. When $e^*\in\E$, this distance is zero, and weak duality along with $\bS^1\subset\bB^1$ imply $\max_{\q\in\bS^1}(-\h_{\cC(e^*)}(\q)-\h_\E(-\q))\leq 0$.
Hence, when $e^*\notin\sF$ we must have $\sup_{\q\in\bS^1}(-\h_{\cC(e^*)}(\q)-\h_\E(-\q))< 0$.
Since $\h_{\cC(e^*)}(\q)=\sup_{e\in\cC(e^*)}\q^\intercal e$ is unbounded for any $\q$ such that $\q_1+\q_2<0$, it is without loss of generality to focus on $\q\in\tilde{\bS}^1$ and to use the criterion $\left[\max_{\q\in\tilde{\bS}^1}(-\h_{\cC(e^*)}(\q)-\h_\E(-\q))\right]_-=0$.
\end{proof}

\subsection{Proofs for Section \ref{sec:estimation}}
In the proofs that follow, recall 
\begin{align*}
    \zeta_i(\mathring{\cM}\q;\bvartheta)&\equiv(\mathring{\cM}\q)^\intercal\bLL_{0_i}+(\mathring{\cM}\q)^\intercal(\bLL_{1_i}-\bLL_{0_i})\cdot\mathds{1}\bigl\{k\bigl(\bvartheta(\X_i),\mathring{\cM}\q\bigr)>0\bigr\}.
\end{align*}
defined in Eq.~\eqref{eqn:influence-part}, where
$k\bigl(\bvartheta(\X_i),\mathring{\cM}\q\bigr)=\q^\intercal\mathring{\cM}\Delta\bvartheta(\X_i)$ is given in Eq.~\eqref{eq:k_Delta} for generic $\Delta\bvartheta(\X_i)\in\Theta$ and $\mathring{\cM}$. For $\widehat{\Delta\btheta}$ and $\widehat{\cM}$ estimated as per Definition \ref{def:cross-fitting}, recall
\begin{align*}
    \zeta_i(\hcM\q;\widehat{\btheta})=
    (\hcM\q)^\intercal\bLL_{0_i}+(\hcM\q)^\intercal(\bLL_{1_i}-\bLL_{0_i})\cdot\mathds{1}\bigl\{k\bigl(\widehat{\btheta}(\X_i),\hcM\q\bigr)>0\bigr\}
\end{align*}
for $k\bigl(\widehat{\btheta}(\X_i),\hcM\q\bigr)=\q^\intercal\hcM\widehat{\Delta\btheta}(\X_i)$ as in Eqs.~\eqref{eq:zeta_estimated}-\eqref{eq:k_estimated}.

\begin{proof}[Proof of Proposition \ref{prop:orthogonality}]
Apply the law of iterated expectations,
\begin{align}
    &\mathbb{E}\bigl[\zeta_i\bigl(\cM\q;\btheta+t(\bvartheta-\btheta)\bigr)\bigr]-\mathbb{E}\bigl[\zeta_i(\cM\q;\btheta)\bigr]\notag\\
    =&\mathbb{E}\small{\Bigg[\Biggl(\mathds{1}\Bigl\{(\cM\q)^\intercal(\Delta\btheta+t(\Delta\bvartheta-\Delta\btheta))\geq0\Bigr\}-\mathds{1}\Bigl\{(\cM\q)^\intercal\Delta\btheta\geq0\Bigr\}\Biggr)(\cM\q)^\intercal\Delta\btheta\Bigg]}.\label{eqn:gateux-diff}
\end{align}
In what follows, we show Eq.~\eqref{eqn:gateux-diff} is bounded in absolute value by $t(t^{m/2}+t^{1/2})^{1/2}$ for $m$ in Assumption \ref{asm:smooth-X}, uniformly in $\q\in\bS^1$. It then follows that, for all $\q\in\bS^1$, 
\begin{align*}
    \lim_{t\to0}\frac{1}{t}\left|\left(\mathbb{E}\bigl[\zeta_i\bigl(\cM\q;\btheta+t(\bvartheta-\btheta)\bigr)\bigr]-\mathbb{E}\bigl[\zeta_i(\cM\q;\btheta)\bigr]\right)\right|\lesssim \lim_{t\to0}\frac{1}{t}\cdot t(t^{m/2}+t^{1/2})^{1/2}=0.
\end{align*}
The term in Eq.~\eqref{eqn:gateux-diff} is non-zero if and only if the indicator involving $\Delta\btheta+t(\Delta\bvartheta-\Delta\btheta)$ equals $1$ and that involving $\Delta\btheta$ equals $0$, or vice versa; this happens on a subset of events
\begin{align*}
   \small\hspace{-.25cm}\biggl\{(\cM\q)^\intercal\Delta\btheta\leq0<(\cM\q)^\intercal(\Delta\btheta+t(\Delta\bvartheta-\Delta\btheta))\biggr\}\bigcup\biggl\{(\cM\q)^\intercal\Delta\btheta>0\geq (\cM\q)^\intercal(\Delta\btheta+t(\Delta\bvartheta-\Delta\btheta))\biggr\},
\end{align*}
which implies the event $\bigl\{|(\cM\q)^\intercal\Delta\btheta|<t|(\cM\q)^\intercal(\Delta\bvartheta-\Delta\btheta)|\bigr\}$. It then follows that
\begin{align*}
    |\text{Eq.}~\eqref{eqn:gateux-diff}|&\leq \mathbb{E}\left[\mathds{1}\biggl\{|(\cM\q)^\intercal\Delta\btheta|<t|(\cM\q)^\intercal(\Delta\bvartheta-\Delta\btheta)|\biggr\}\big|(\cM\q)^\intercal\Delta\btheta\big|\right]\\
    &\leq \mathbb{E}\left[\mathds{1}\biggl\{|(\cM\q)^\intercal\Delta\btheta|<t|(\cM\q)^\intercal(\Delta\bvartheta-\Delta\btheta)|\biggr\}\cdot t\big|(\cM\q)^\intercal(\Delta\bvartheta-\Delta\btheta)\big|\right] \\
    &\lesssim t(t^{m/2}+t^{1/2})^{1/2},
\end{align*}
where the last line follows from Hölder's inequality, Assumption \ref{asm:smooth-X}, and $\sup_{\Delta\bvartheta\in\Theta}\|\Delta\bvartheta-\Delta\btheta\|_{L^2(\bP)}<\infty$, using a similar argument to that of Eq. \eqref{eq:markov-ineq}.
\end{proof}

\begin{proof}[Proof of Theorem \ref{thm:gaussian}]
\textbf{Part 1.} We begin by showing that for any fixed $\q\in\bS^1$,
\begin{align}
     \sqrt{n}\biggl(\widehat{\h}_{\E}(\q; \widehat{\btheta})-\h_{\E}(\q; \btheta)\biggr)=\mathbb{G}_n[\zeta_i^*(\cM\q;\btheta)]+o_p(1),\label{eqn:emprical-process}
\end{align}
where for a generic measurable function $t\in\mathcal{T}$ of random variable $O_i$, $\mathbb{G}_n[t(O_i)]\equiv n^{-1/2}\sum_{i=1}^n \bigl(t(O_i)-\mathbb{E}[t(O_i)]\bigr)$ denotes the empirical process indexed by the function class $\mathcal{T}$. For $2\times2$ matrices $\check{\cM}$ and $\mathring{\cM}$, let
\begin{align}
    \bxi_{\ss,i}(\check{\cM};\mathring{\cM}\q;\bvartheta)\equiv \check{\cM}\bLL_0+\check{\cM}(\bLL_1-\bLL_0)\mathds{1}\{k(\bvartheta,\mathring{\cM}\q)>0\}.\label{eq:bxi}
\end{align} 
To show Eq.~\eqref{eqn:emprical-process}, fix $\q\in\bS^1$ and decompose
\begin{align*}
\sqrt{n}\biggl(\widehat{\h}_{\E}(\q; \widehat{\btheta})-\h_{\E}(\q)\biggr)
    =\underbrace{\sqrt{n}\bigg(\frac{1}{n}\sum_{i=1}^n\q^\intercal\big(\hcM\cM^{-1}\big)\big(\bxi_{\ss,i}(\cM;\hcM\q;\widehat{\btheta})-\mathbb{E}[\bxi_{\ss,i}(\cM;\cM\q;\btheta)]\big)\bigg)}_{\equiv A}&\\
    +\underbrace{\sqrt{n}\q^\intercal\big(\hcM\cM^{-1}-\bI\big)\mathbb{E}[\bxi_{\ss,i}(\cM;\cM\q;\btheta)],}_{\equiv B}&
\end{align*}
where $\bI$ is the $2\times 2$ identity matrix, and
\begin{align}
    &A=\mathbb{G}_n[\zeta_i(\cM\q;\btheta)] +o_p(1)\notag\\
    &\hspace{1cm}+\bigg\{\sqrt{n}\biggl(\frac{1}{n}\sum_{i=1}^n(\cM\q)^\intercal\big(\bLL_{0_i}+(\bLL_{1_i}-\bLL_{0_i})\mathds{1}\{k(\widehat{\btheta},\hcM\q)>0\}\big)\notag\\
    &\hspace{3cm}-\mathbb{E}\left[(\cM\q)^\intercal\big(\bLL_{0_i}+(\bLL_{1_i}-\bLL_{0_i})\mathds{1}\{k(\widehat{\btheta},\hcM\q)>0\}\big)\right]\biggr)\notag\\
    &\hspace{1cm}\underbrace{\hspace{4.5cm}-\sqrt{n}\biggl(\frac{1}{n}\sum_{i=1}^n\zeta_i(\cM\q;\btheta)-\mathbb{E}[\zeta_i(\cM\q;\btheta)]\biggr)\biggr\}}_{\equiv R_1}\label{eqn:R1}\\
    &\hspace{1cm}+\underbrace{\sqrt{n}\biggl(\mathbb{E}\left[(\cM\q)^\intercal\big(\bLL_{0_i}+(\bLL_{1_i}-\bLL_{0_i})\mathds{1}\{k(\widehat{\btheta},\hcM\q)>0\}\big)\right]-\mathbb{E}[\zeta_i(\cM\q;\btheta)]\biggr)}_{\equiv R_2},\label{eqn:R2}
\end{align}
where the equality follows from $\|\hcM\cM-\bI\|_{\max}=O_p(n^{-1/2})$ under the theorem's assumptions and adding and subtracting terms and $R_1$ is an empirical process term indexed by $\Delta\zeta_i\big(\q;\widehat{\btheta},\hcM\big)\equiv(\cM\q)^\intercal(\bLL_{1_i}-\bLL_{0_i})\mathds{1}\big\{k\big(\widehat{\btheta},\hcM\q\big)>0\big\}-(\cM\q)^\intercal(\bLL_{1_i}-\bLL_{0_i})\mathds{1}\big\{k\big(\btheta,\cM\q\big)>0\big\}$. Recall that $k(\widehat{\btheta} ,\hcM\q)=\q^\intercal\hcM\widehat{\Delta\btheta}$, where $\widehat{\Delta\btheta}(\X_i)=\widehat{\Delta\btheta}_k(\X_i)$ for $i\in I_k$.
Fixing any $k\in[K]$ and conditional on the $I_k^c$ sample, $\Delta\widehat{\btheta}_k$ is non-stochastic and 
\begin{align}
    \mathbb{E}[R_1^2~|~ I_k^c]   &\lesssim\sup_{\Delta\bvartheta\in\Theta_n}\mathbb{E}\biggl[\Delta\zeta_i\big(\q;\widehat{\btheta},\hcM\big)^2\biggr]\notag\\
    &=\sup_{\Delta\bvartheta\in\Theta_n}\mathbb{E}\left[\big((\cM\q)^\intercal(\bLL_{1_i}-\bLL_{0_i})\big)^2\left|\mathds{1}\{\q^\intercal\hcM\Delta\bvartheta>0\}-\mathds{1}\{\q^\intercal\cM\Delta\btheta>0\}\right|\right]\notag\\
    &\lesssim\sup_{\Delta\bvartheta\in\Theta_n}\mathbb{E}\left[\big((\cM\q)^\intercal(\bLL_{1_i}-\bLL_{0_i})\big)^2\mathds{1}\big\{|\q^\intercal\cM\Delta\btheta|<|\q^\intercal(\hcM\Delta\bvartheta-\cM\Delta\btheta)|\big\}\right]\notag\\
    &\lesssim\sup_{\Delta\bvartheta\in\Theta_n}\mathbb{P}\left(|\q^\intercal\cM\Delta\btheta|<|\q^\intercal(\hcM\Delta\bvartheta-\cM\Delta\btheta)|\right)=o_p(1),\label{eq:R1op1}
\end{align}
where the first inequality follows from \citet[Proof of Theorem 3.1 and Lemma 6.1]{che:che:dem:duf:han:whi:rob18}, the third inequality follows by Assumption~\ref{asm:moments}, 
and the last equality follows from the definition of $\Theta_n$, $\|\hcM-\cM\|_{\max}=O_p(n^{-1/2})$, and a similar argument to that of Eq. \eqref{eq:markov-ineq}.
Hence $|R_1|=o_p(1)$. In addition, by Proposition \ref{prop:orthogonality}, $R_2$ encapsulates higher-order Gateaux derivatives and can be bounded by 
\begin{align*}
    |R_2|\lesssim\sqrt{n} \underbrace{\mathbb{E}\left[\mathds{1}\biggl\{|\q^\intercal\cM\Delta\btheta|<|\q^\intercal(\hcM\widehat{\Delta\btheta}-\cM\Delta\btheta)|\biggr\}|\q^\intercal\cM\Delta\btheta|\right]}_{R_3},
\end{align*}

\vspace{-.2cm}
\noindent and we show $|R_3|=o_p(n^{-1/2})$ by leveraging the proof technique from \citet[Lemma 4.1]{sem23} under Assumption \ref{asm:nuisance-structure}.  It then follows that $|R_2|=o_p(1)$.

Under Assumption \ref{asm:nuisance-structure}, for $g\in\{r,b\}$ and the $k$-th fold,
\begin{align*}
    &\bigg|\frac{(\widehat{\Delta\theta}^g)_k(\X)}{\widehat{\mu}_g}-\frac{\Delta\theta^g(\X)}{\mu_g}\bigg|\\
    \lesssim &\bigg|\frac{(\widehat{\alpha}^g)_k}{\widehat{\mu}_g}-\frac{\alpha^g}{\mu_g}\bigg|+\bigg|\frac{(\widehat{\beta}^g)_k}{\widehat{\mu}_g}-\frac{\beta^g}{\mu_g}\bigg|+\bigg|\frac{(\widehat{\eta}^g)_k(\X_{[3:d_\X]})}{\widehat{\mu}_g}-\frac{\eta^g(\X_{[3:d_\X]})}{\mu_g}\bigg|\equiv\delta_k^g(\X_{[3:d_\X]}),
\end{align*}
where $\lesssim$ follows from the assumption that $(\X_1, X_2)$ has bounded support. 
Let $\delta_k(\X_{[3:d_\X]})\equiv\sum_{g\in\{r,b\}}\delta_k^g(\X_{[3:d_\X]})$.
Denote the distribution of 
$|\q^\intercal\cM\Delta\btheta(\X)|$ conditional on $\X_{[3:d_\X]}$ by $\bP_{\bigl(|\q^\intercal\cM\Delta\btheta|~\bigl|\X_{[3:d_\X]}\bigr)}$. Conditional on $\X$ and the sample $I_k^c$, $|\q^\intercal(\hcM\widehat{\Delta\btheta}_k-\cM\Delta\btheta)|\leq\|\hcM\widehat{\Delta\btheta}_k-\cM\Delta\btheta\|_E\lesssim\delta_k(\X_{[3:d_\X]})$, and
\begin{align*}
    \small|R_3|\lesssim\mathbb{E}_{\X_{[3:d_\X]}}\Bigg[\int_{-\delta_k(\X_{[3:d_\X]})}^{\delta_k(\X_{[3:d_\X]})}\delta_k(\X_{[3:d_\X]})\bP_{\bigl(|q^\intercal\cM\Delta\btheta|~\bigl|\X_{[3:d_\X]}\bigr)}\Biggr]\lesssim\mathbb{E}_{\X_{[3:d_\X]}}\big[\delta_k(\X_{[3:d_\X]})^2\big]=o_p(n^{-1/2}),
\end{align*}
where the second inequality follows from Assumption \ref{asm:nuisance-structure} that $\bP_{\bigl(|q^\intercal\cM\Delta\btheta|~\bigl|\X_{[3:d_\X]}\bigr)}$ is bounded, and the last equality follows from $\widehat{\Delta\btheta}_k\in\Theta_n$ as $n\to\infty$, $\|\hcM-\cM\|_{\max}=O_p(n^{-1/2})$, and repeated application of triangle inequality. Therefore,
\begin{align*}
A=\mathbb{G}_n[\zeta_i(\cM\q;\btheta)]+o_p(1).
\end{align*}
In addition,
\begin{align*}
    B=\sqrt{n}\big((\hcM-\cM)\q\big)^\intercal\cM^{-1}\ss_\E(\q)=\mathbb{G}_n[(\cM_i^*\q)^\intercal\cM^{-1}\ss_\E(\q)]+o_p(1)
\end{align*}
for $\cM_i^*\equiv\diag\left(\frac{\mathds{1}\{\G_i=r\}}{-\mu_r^2}, \frac{\mathds{1}\{\G_i=b\}}{-\mu_b^2}\right)$ by the Delta method. We hence conclude
\begin{align*}
    \sqrt{n}\biggl(\widehat{\h}_{\E}(\q; \widehat{\btheta})-\h_{\E}(\q; \btheta)\biggr)=\mathbb{G}_n[\zeta_i^*(\cM\q;\btheta)]+o_p(1)
\end{align*}
for $\zeta_i^*(\cM\q;\btheta)\equiv\zeta_i(\cM\q;\btheta)+(\cM_i^*\q)^\intercal\cM^{-1}\ss_\E(\q)$.

\noindent\textbf{Part 2.} Since the class of functions over the random variables $(\bLL_{1_i}, \bLL_{0_i}, \X_i)$,
\begin{align*}
    \mathcal{Z}\equiv\bigl\{\zeta_i^*(\cM\q;\btheta): \q\in\bS^1\bigr\},
\end{align*}
is composed of linear functions (linear in $\q\in\bS^1$) and their indicators, $\mathcal{Z}$ is VC-subgraph \citep[see, for example,][]{and94}. An envelope function of $\mathcal{Z}$ is
\begin{align}
    \sum_{d\in\{0,1\},g\in\{r,b\}}(|\LL_{d_i}^g|+C)/c_1.\label{eqn:envelope}
\end{align}
As $C$ and $c_1$ are constant and $\LL_{d_i}^g$ is square-integrable, Eq.~\eqref{eqn:envelope} is square-integrable under $\bP$. By \citet[Theorem 2.5.2]{van:wel13}, $\mathcal{Z}$ is $\bP$-Donsker, and hence
\begin{align*}
    \sqrt{n}\biggl(\widehat{\h}_{\E}(\q; \widehat{\btheta})-\h_{\E}(\q)\biggr)=\mathbb{G}[  \zeta_i^*(\cM\q;\btheta)]+o_p(1) \quad\text{ in }\,\, \ell^\infty(\bS^1).
\end{align*}
\noindent\textbf{Part 3.}
The fact that $Var\left(\mathbb{G}[\zeta_i(\q;\btheta)]\right)>0$ for each $\q\in\bS^1$ follows using the variance decomposition formula and the same argument as in \citet[proof of Theorem 4.3-(ii), p.808]{ber:mol08}.  
\end{proof}

\subsection{Proofs for Section \ref{sec:set_estimation}}

\begin{proof}[Proof of Proposition \ref{prop:consistency_coverage_sF}]
Recalling the definition of the set $Q_{\mathbb{T}}^*(e),~\mathbb{T}\in\{\bS^1,\tilde{\bS}^1\}$ in Proposition~\ref{prop:PFlimit_distr}, by the same argument as in the proof of Proposition~\ref{prop:PFlimit_distr} below,
\begin{align*}
    \left[\sqrt{n}\max_{\q\in\bS^1}(\q^\intercal e-\widehat{\h}_\E(\q;\widehat{\btheta}))\right]_{+}\xrightarrow[]{d}\left[\sup_{\q\in Q^*_{\bS^1}(e)}\mathbb{G}[-{\zeta_i^*(\cM\q;\btheta)}]\right]_+.
\end{align*}
Applying the argument in the proof of Proposition~\ref{prop:lda} to the second maximization problem in the definition of $T_n^\sF$ in Eq.~\eqref{eq:Tn_sF}, we have
\begin{align}
 \psi^\sF(e)=\left[\sup_{\q\in Q^*_{\bS^1}(e)}\mathbb{G}[-{\zeta_i^*(\cM\q;\btheta)}]\right]_+ + \left[\sup_{\q\in Q^*_{\tilde{\bS}^1}(e)}\mathbb{G}[-{\zeta_i^*(-\cM\q;\btheta)}]\right]_-. \label{psi-FA}
\end{align}
When Eq.~\eqref{eq:no_kinks_cond} holds, $\E$ has no kinks or flat faces (by Assumption~\ref{asm:smooth-X}); hence under the null:
\begin{align*}
    Q^*_{\bS^1}(e)=\arg\max_{\q\in\bS^1}\left(\q^\intercal e-\h_\E(\q)\right)=\left\{q^*_{\bS^1}(e)\right\}, ~Q^*_{\tilde{\bS}^1}(e)=\arg\max_{\q\in\tilde{\bS}^1}\left(-\h_{\cC(e)}(\q)-\h_\E(-\q)\right)\equiv\{\q^*_{\tilde{\bS}^1}(e)\}.
\end{align*}
Under the null $e\in\sF$, $(\q^*_{\bS^1}(e))^\intercal e-\h_\E(\q^*_{\bS^1}(e))=0$ and $e=\ss_\E(\q^*_{\bS^1}(e))$; since $e\in\sF$, $-(\q^*_{\tilde{\bS}^1}(e))^\intercal e-\h_\E(-\q^*_{\tilde{\bS}^1}(e))=0$, so that $e=\ss_\E(-\q^*_{\tilde{\bS}^1}(e))$. By Eq.~\eqref{eq:no_kinks_cond} kinks are absent, and hence $\q^*_{\bS^1}(e)=-\q^*_{\tilde{\bS}^1}(e)$. 
We therefore obtain the expression in Eq.~\eqref{psi_FA_no_kinks}, as
\begin{align*}
    \psi^\sF(e)=\left[\mathbb{G}[-{\zeta_i^*(\cM\q^*_{\bS^1}(e);\btheta)}]\right]_+ + \mathbb{G}\left[-{\zeta_i^*(\cM\q^*_{\bS^1}(e);\btheta)}\right]_-
    =\left|\mathbb{G}\left[{\zeta_i^*(\cM\q^*_{\bS^1}(e);\btheta)}\right]\right|.
\end{align*}
Eq.~\eqref{eq:validity_CS(F)} follows by standard arguments.

We establish Eq.~\eqref{eq:consistency_F_hat} by verifying Condition C.1 in \citet{che:hon:tam07}, under which Eq.~\eqref{eq:consistency_F_hat} follows from their Theorem 3.1-(1).
By definition, the parameter space $\bB_C$ is a compact (and convex) set.
Our criterion function is
\begin{align*}
    \mathsf{f}(e)\equiv\left[\max_{\q\in\bS^1}(\q^\intercal e-\h_\E(\q))\right]_{+}+\left[\max_{\q\in\Tilde{\bS}^1}(-\h_{\cC(e)}(\q)-\h_\E(-\q))\right]_-,
\end{align*}
and the population set is $\sF=\{e\in\bB_C:~\mathsf{f}(e)=0\}$. 
The criterion function $\mathsf{f}(e)$ is lower-semicontinuous by Berge's Maximum Theorem and composition with a continuous function.
The sample criterion function
\begin{align*}
    \widehat{\mathsf{f}}(e)\equiv
\left[\max_{\q\in\bS^1}(\q^\intercal e-\widehat{\h}_\E(\q;\widehat{\btheta}))\right]_{+}+\left[\max_{\q\in\tilde\bS^1}(-\h_{\cC(e)}(\q)-\hn_\E(-\q;\widehat{\btheta}))\right]_-
\end{align*}
takes values in $\mathbb{R}_+$ and is jointly measurable in the parameter $e\in\E$ and the data by standard arguments.
Finally, by Proposition~\ref{prop:lda}, $\sup_{e\in\bB_C}\left(\mathsf{f}(e)-\widehat{\mathsf{f}}(e)\right)_+=O_p(1/\sqrt{n})$ and  $\sup_{e\in\sF}~\widehat{\mathsf{f}}(e)=O_p(1/\sqrt{n})$.
\end{proof}

\begin{proof}[Proof of Proposition \ref{prop:PFconsistency}]
Recall that by Proposition \ref{prop:ss}, $\ss_\E(\q)=\mathbb{E}\bigl[\bzeta_{\ss}(\cM\q;\btheta)\bigr]$ for $\bzeta_{\ss}(\cM\q;\btheta)\equiv\cM\bLL_0+\cM(\bLL_1-\bLL_0)\mathds{1}\{k(\btheta,\cM\q)>0\}$.
Using the notation $\bxi_{\ss,i}(\check{\cM};\mathring{\cM}\q;\bvartheta)$ defined in Eq.~\eqref{eq:bxi}, we have:
\begin{subequations}
\begin{align}
    &\left\Vert \widehat{\ss}_\E(\q;\widehat{\btheta})-\ss_\E(\q)\right\Vert=   
    \left\Vert\frac{1}{n}\sum_{i=1}^n\bzeta_{\ss,i}(\hcM\q;\widehat{\btheta})-\ss_\E(\q)\right\Vert\notag\\
    \le&\left\Vert \hcM\cM^{-1}\right\Vert
    \Bigg\Vert \frac{1}{n}\sum_{i=1}^n\bxi_{\ss,i}(\cM;\hcM\q;\widehat{\btheta})-\mathbb{E}\left[\bxi_{\ss,i}(\cM;\cM\q;\btheta)\right]\Bigg\Vert\notag\\
    &\hspace{6cm}+\left\Vert \left(\hcM\cM^{-1}-\bI\right) \mathbb{E}[\bzeta_{\ss,i}(\cM\q;\btheta)]\right\Vert\notag\\
    \le&\Bigg\Vert \frac{1}{n}\sum_{i=1}^n\bxi_{\ss,i}(\cM;\cM\q;\btheta)-\mathbb{E}\left[\bxi_{\ss,i}(\cM;\cM\q;\btheta)\right]\Bigg\Vert\label{eq:consistency_hatS1}\\
    &+\Bigg\Vert \left(\frac{1}{n}\sum_{i=1}^n\bxi_{\ss,i}(\cM;\hcM\q;\widehat{\btheta})-\mathbb{E}\left[\bxi_{\ss,i}(\cM;\hcM\q;\widehat{\btheta})\right]\right)\notag\\
    &\hspace{1.5cm}-\left(\frac{1}{n}\sum_{i=1}^n\bxi_{\ss,i}(\cM;\cM\q;\btheta)-\mathbb{E}\left[\bxi_{\ss,i}(\cM;\cM\q;\btheta)\right]\right)\Bigg\Vert\label{eq:consistency_hatS2}\\
    &\hspace{1.75cm}+\left\Vert \frac{1}{n}\sum_{i=1}^n\mathbb{E}\left[\bxi_{\ss,i}(\cM;\hcM\q;\widehat{\btheta})\right]-\mathbb{E}\left[\bxi_{\ss,i}(\cM;\cM\q;\btheta)\right]  \right\Vert+o_p(1),\label{eq:consistency_hatS3}
\end{align}
\end{subequations}
where the $o_p(1)$ term in line~\eqref{eq:consistency_hatS3} follows as $\Vert \hcM\cM^{-1}-\bI\Vert_{\max}=O_p(n^{-1/2})$.
The term in Eq.~\eqref{eq:consistency_hatS1} equals $o_p(1)$ by the Law of Large Numbers.
Using that $\mathbb{E}\left[\cM(\bLL_1-\bLL_0)|\X\right]=\cM(\btheta_1(\X)-\btheta_0(\X))=[-\mathfrak{u}_1^\intercal\cM
\Delta\btheta(\X),~-\mathfrak{u}_2^\intercal\cM
\Delta\btheta(\X)]^\intercal$ and omitting dependence on $\X$ to shorten notation, we have that the first term of Eq.~\eqref{eq:consistency_hatS3} is upper bounded by
\begin{align*}
    &\sup_{\Delta\bvartheta\in\Theta_n}\Big\Vert\mathbb{E}\Bigl[\cM(\bLL_{1_i}-\bLL_{0_i})\left(\mathds{1}\bigl\{\q^\intercal\hcM\Delta\bvartheta>0\bigr\}-\mathds{1}\bigl\{\q^\intercal\cM\Delta\btheta>0\bigr\}\right)\Bigr]\Big\Vert\\
    =&\sup_{\Delta\bvartheta\in\Theta_n}\Big\Vert\mathbb{E}\Bigl[[-\mathfrak{u}_1^\intercal\cM
\Delta\btheta,~-\mathfrak{u}_2^\intercal\cM
\Delta\btheta]^\intercal\left(\mathds{1}\bigl\{\q^\intercal\hcM\Delta\bvartheta>0\bigr\}-\mathds{1}\bigl\{\q^\intercal\cM\Delta\btheta>0\bigr\}\right)\Bigr]\Big\Vert\\
    \lesssim &\sup_{\Delta\bvartheta\in\Theta_n}\max_{\v\in\{-\mathfrak{u}_1,-\mathfrak{u}_2\}}\Big|\mathbb{E}\Bigl[\v^\intercal\cM
\Delta\btheta\left(\mathds{1}\bigl\{\q^\intercal\hcM\Delta\bvartheta>0\bigr\}-\mathds{1}\bigl\{\q^\intercal\cM\Delta\btheta>0\bigr\}\right)\Bigr]\Big|\\
    \le &\sup_{\Delta\bvartheta\in\Theta_n}\max_{\v\in\{-\mathfrak{u}_1,-\mathfrak{u}_2\}}\mathbb{E}\Bigl[\big|\v^\intercal\cM\Delta\btheta\big|\mathds{1}\bigl\{\big|\q^\intercal\cM\Delta\btheta\big|<\big|\q^\intercal\hcM\Delta\bvartheta-\q^\intercal\cM\Delta\btheta\big|\bigr\}\Bigr]\\
    \lesssim &\sup_{\Delta\bvartheta\in\Theta_n}\mathbb{E}\left[\mathds{1}\bigl\{\big|\q^\intercal\cM\Delta\btheta\big|<\big|\q^\intercal\hcM\Delta\bvartheta-\q^\intercal\cM\Delta\btheta\big|\bigr\}\right]^{1/2}=o_p(1)
\end{align*}
where the last inequality follows by Cauchy-Schwartz and Assumption~\ref{asm:moments}, 
and the last equality follows from Assumption \ref{asm:smooth-X}, the fact that $\Delta\bvartheta\in\Theta_n$, and that under the proposition's assumptions $\Vert \hcM\cM^{-1}-\bI\Vert_{\max}=O_p(n^{-1/2})$; one can then show Eq.~\eqref{eq:consistency_hatS3} is $o_p(1)$ by the same argument as that used to establish Eq.~\eqref{eq:R1op1} in the proof of Theorem~\ref{thm:gaussian}.

We conclude by establishing Eq.~\eqref{eq:consistency_PF_hat}.
Using the definition of Hausdorff distance,
\begin{align*}
    \mathbf{d}_H(\widehat{\sPF},\sPF)=\max\left\{\sup_{\hat{e}\in\widehat{\sPF}}\inf_{e\in\sPF}\left\Vert\hat{e}-e\right\Vert, \sup_{e\in\sPF}\inf_{\hat{e}\in\widehat{\sPF}}\left\Vert\hat{e}-e\right\Vert\right\}
\end{align*}
Suppose by contradiction that there is $e^*\in\sPF$ such that for all $n\ge 1$, $\inf_{\hat{e}\in\widehat{\sPF}}\left\Vert\hat{e}-e^*\right\Vert>c$ for some constant $c>0$. By definition, there exists $\q^*\in\bQ$ such that $\ss_\E(\q^*)=e^*$. By Eq.~\eqref{eq:consistency_S_hat}, $\Vert\widehat{\ss}_\E(\q^*;\widehat{\btheta})-e^*\Vert=o_p(1)$, and by definition $\widehat{\ss}_\E(\q^*;\widehat{\btheta})\in\widehat{\sPF}$, yielding a contradiction. The same argument holds by swapping the role of $\widehat{\sPF}$ and $\sPF$.
\end{proof}

\begin{proof}[Proof of Proposition \ref{prop:PFlimit_distr}]
Our proof follows arguments in \citet[Theorem 3.4]{kai16}. 
We first observe that for $e\in\sPF$, $\max_{\q\in\bS^1}(\q^\intercal e - \h_\E(\q)) = 0$ and $\max_{\q\in\bQ}(\q^\intercal e - \h_\E(\q))=0$.
Hence, for $\mathbb{T}\in\{\bS^1,\bQ\}$ and $\phi_{e,\mathbb{T}}(f)\equiv\sup_{\q\in\mathbb{T}}(\q^\intercal e-f(\q))$, we can write
\begin{multline*}
    T_n^\sPF(e) =\max\left\{\sqrt{n}\left(\phi_{e,\bS^1}\big(\hn_\E(\q;\widehat{\btheta})\big)-\phi_{e,\bS^1}\big(\h_\E(\q)\big)\right),0\right\} \\
    - \min\left\{\sqrt{n}\left(\phi_{e,\bQ}\big(\hn_\E(\q;\widehat{\btheta})\big)-\phi_{e,\bQ}\big(\h_\E(\q)\right),0\right\}.
\end{multline*}
\citet[Lemma D.3]{kai16} establishes that $\phi_{e,\mathbb{T}}$ is Hadamard directionally differentiable at $\h_\E(\cdot)$ with Hadamard directional derivative equal to $\phi^\prime_{e,\mathbb{T}}(y)=\sup_{\q\in Q^*_\mathbb{T}(e)}-y(\q)$.  
By Theorem~\ref{thm:gaussian}, the assumptions in \citet[Lemma D.4]{kai16} are satisfied, and the result follows by the Continuous Mapping Theorem and as argued in \citet[proof of Theorem 3.4]{kai16}.
Absolute continuity of the limit law in Eq.~\eqref{eq:Tn_PF_limit} follows using the same argument as in \citet[proof of Theorem 4.3-(ii), p.808]{ber:mol08}.  
\end{proof}

\begin{proof}[Proof of Proposition~\ref{prop:algorithm_consistency}]
    Take a point $e^0$ in the set $\{\tilde{e}: \Vert \tilde{e}\Vert=2C~\text{and}~\tilde{e}_r+\tilde{e}_b\le 0\}$. Select $\widehat{e}_{n_1}$ as the metric projection of $e^0$ onto $\widehat{\sF}$ and let $e$ be the metric projection of $e^0$ onto $\sF$.
    By the consistency result in Proposition~\ref{prop:consistency_coverage_sF} and by \citet[proof of Theorem 1.7.19]{mol17}, $\Vert \widehat{e}_{n_1} - e\Vert \xrightarrow[]{p} 0$ as $n_1\to\infty$.
    Hence, given that by Theorem~\ref{thm:gaussian} the support function estimator converges to the population support function uniformly in $\q\in\bS^1$, and given that $\widehat{\q}^*_{n_1}(\widehat{e}_{n_1})$ is an extremum estimator and all the conditions for its consistency are satisfied, $\Vert \widehat{\q}^*_{n_1}(\widehat{e}_{n_1})-\q^*_{\bS^1}(e)\Vert \xrightarrow[]{p} 0$ as $n_1\to\infty$.
    Next, using the same argument as in the proof of Proposition~\ref{prop:PFconsistency} leading to Eqs.~\eqref{eq:consistency_hatS1}-\eqref{eq:consistency_hatS3}, 
    and $\bxi_{\ss,i}(\check{\cM};\mathring{\cM}\q;\bvartheta)$ defined in Eq.~\eqref{eq:bxi},
    \begin{subequations}
    \begin{align}
        \Big\Vert\widehat{\ss}_\E(\widehat{\q}^*_{n_1}(\widehat{e}_{n_1});\widehat{\btheta}_{n_1}) - \ss_\E(\q^*_{\bS^1}(e))\Big\Vert\hspace{7.5cm}& \notag\\
        = \left\Vert \frac{1}{n_2}\sum_{i=1}^{n_2}\bzeta_{\ss,i}(\hcM\widehat{\q}^*_{n_1}(\widehat{e}_{n_1});\widehat{\btheta}_{n_1})-\mathbb{E}[\bzeta_{\ss,i}(\cM\q^*_{\bS^1}(e);\btheta)]\right\Vert&\notag\\
        \le o_p(1)+\Bigg\Vert \frac{1}{n_2}\sum_{i=1}^{n_2}\bxi_{\ss,i}(\cM;\cM\q^*_{\bS^1}(e);\btheta)-\mathbb{E}\left[\bxi_{\ss,i}(\cM;\cM\q^*_{\bS^1}(e);\btheta)\right]\Bigg\Vert&\label{eq:consistent_alg1} \\
        +\left\Vert \left(\frac{1}{n_2}\sum_{i=1}^{n_2}\bxi_{\ss,i}(\cM;\hcM\widehat{\q}^*_{n_1}(\widehat{e}_{n_1});\widehat{\btheta}_{n_1})-\mathbb{E}\left[\bxi_{\ss,i}(\cM;\hcM\widehat{\q}^*_{n_1}(\widehat{e}_{n_1});\widehat{\btheta}_{n_1})\right]\right)\right. &\notag\\
        \left.-\left(\frac{1}{n_2}\sum_{i=1}^{n_2}\bxi_{\ss,i}(\cM;\cM\q^*_{\bS^1}(e);\btheta)-\mathbb{E}\left[\bxi_{\ss,i}(\cM;\cM\q^*_{\bS^1}(e);\btheta)\right]\right)\right\Vert&\label{eq:consistent_alg2} \\
        +\left\Vert \mathbb{E}\left[\bxi_{\ss,i}(\cM;\hcM\widehat{\q}^*_{n_1}(\widehat{e}_{n_1});\widehat{\btheta}_{n_1})\right]-\mathbb{E}\left[\bxi_{\ss,i}(\cM;\cM\q^*_{\bS^1}(e);\btheta)\right]  \right\Vert&.\label{eq:consistent_alg3} 
    \end{align}
    \end{subequations}
    By the same argument as in the proof of Proposition~\ref{prop:PFconsistency}, the terms in Eqs.~\eqref{eq:consistent_alg1}-\eqref{eq:consistent_alg2} are $o_p(1)$. 
    We next show that the same holds for the term in Eq.~\eqref{eq:consistent_alg3}.
    Take any $\delta>0$,
    \begin{align*}
        &\Big\Vert \mathbb{E}\left[\bxi_{\ss,i}(\cM;\hcM\widehat{\q}^*_{n_1}(\widehat{e}_{n_1});\widehat{\btheta}_{n_1})\right]-\mathbb{E}\left[\bxi_{\ss,i}(\cM;\cM\q^*_{\bS^1}(e);\btheta)\right]\Big\Vert \\
        \lesssim &~\sup_{\Delta\bvartheta\in\Theta_n}\max_{\v\in\{-\mathfrak{u}_1,\mathfrak{u}_2\}}\mathbb{E}\Bigl[\big|\v^\intercal\cM\Delta\btheta\big|\mathds{1}\bigl\{\big|
        \q^*_{\bS^1}(e)^\intercal\cM\Delta\btheta \big|<\big|
        \widehat{\q}^*_{n_1}(\widehat{e}_{n_1})^\intercal\hcM\Delta\bvartheta-\q^*_{\bS^1}(e)^\intercal\cM\Delta\btheta \big|\bigr\}\Bigr]\\
        \lesssim &~\sup_{\Delta\bvartheta\in\Theta_n}\mathbb{E}\Bigl[\mathds{1}\bigl\{\big|\q^*_{\bS^1}(e)^\intercal\cM\Delta\btheta\big|<\big|\widehat{\q}^*_{n_1}(\widehat{e}_{n_1})^\intercal\hcM\Delta\bvartheta-\q^*_{\bS^1}(e)^\intercal\cM\Delta\btheta\big|\bigr\}\Bigr]^{1/2}\\
        \leq&\left(\mathbb{E}\Bigl[\mathds{1}\bigl\{\big|\q^*_{\bS^1}(e)^\intercal\cM\Delta\btheta\big|<\delta\big\}\Bigr]+\mathbb{E}\Bigl[\mathds{1}\bigl\{\Vert\hcM\widehat{\q}^*_{n_1}(\widehat{e}_{n_1})-\cM\q^*_{\bS^1}(e)\Vert>\delta\big\}\Bigr]\right)^{1/2}=o_p(1),
    \end{align*} 
using Assumption~\ref{asm:smooth-X}, $\Vert \hcM\cM^{-1}-\bI\Vert_{\max}=O_p(n^{-1/2})$, Markov inequality, and that $\delta>0$ is arbitrary.
\end{proof}

\subsection{Proofs for Section \ref{sec:test}}
\begin{proof}[Proof of Proposition \ref{prop:weak-GB}]
By the same argument as in the proof of Proposition~\ref{prop:PFlimit_distr} and the discussion in Section~\ref{subsec:pareto_frontier},
\begin{align}
    \liminf_{n\to\infty}\bP\big\{(\R,\B)\in \mathcal{CS}_n(\R,\B)\big\}\geq 1-\alpha.\label{eq:CS-validity}
\end{align} 
To obtain the result in Eq.~\eqref{eq:testRB-validity}, observe that
 under the null in Eq.~\eqref{eq:null-weak-GB}, by Eq.~\eqref{eq:CS-validity},
\begin{align*}
    \mathbb{E}\left[\varphi_n^{\texttt{skew}}\right]&=1-\mathbb{\bP}\bigg\{\sup_{(\Tilde{\R}, \Tilde{\B})\in \mathcal{CS}_n(\R,\B)}\big((\mathfrak{u_1}-\mathfrak{u_2})^\intercal \Tilde{\R}\big)\big((\mathfrak{u_1}-\mathfrak{u_2})^\intercal \Tilde{\B}\big)\geq0\bigg\}\\
    &\leq 1-\mathbb{\bP}\big\{(\R,\B)\in \mathcal{CS}_n(\R,\B)\big\}\leq \alpha.
\end{align*}
\end{proof}

\begin{proof}[Proof of Proposition \ref{prop:lda}]
For $g\in\{r,b\}$, recall $Z_i^g$ defined in Eq.~\eqref{ehat} and note that
\begin{align*}
\sqrt{n}\left(\widehat{e}_g^*-e_g^*\right)=&\sqrt{n}\left(\frac{1}{n}\sum_{i=1}^n \frac{Z_i^g}{\widehat{\mu}_g}-\frac{\mathbb{E}[Z_i^g]}{\mu_g}\right)\\
=&\frac{1}{\mu_g}\mathbb{G}_n[Z_i^g]-\frac{\mathbb{E}[Z_i^g]}{\mu_g^2}\mathbb{G}_n[\mathds{1}\{G_i=g\}]+ o_p(1)
=\mathbb{G}_n[Z_i^{g,*}]+o_p(1),
\end{align*}
where 
\vspace{-.7cm}
\begin{align}
    Z_i^{g,*}\equiv\frac{Z_i^g}{\mu_g}-\frac{\mathbb{E}[Z_i^g]}{\mu_g^2}\mathds{1}\{G_i=g\}.\label{ehat*}
\end{align}
Since $\max_{d\in\{0,1\},g\in\{r,b\}}\mathbb{E}[(\LL_d^g)^2]<\infty$, $Z_i^{g,*}$ has finite first and second moments. By the Lindeberg–Lévy central limit theorem, we have $\sqrt{n}\left(\widehat{e}_g^*-e_g^*\right)=\mathbb{G}[Z_i^{g,*}]+o_p(1)$, where $\mathbb{G}[Z_i^{g,*}]$ is a mean-zero Gaussian random variable with variance $\mathbb{E}\bigl[(Z_i^{g,*})^2\bigr]$. 

By Theorem \ref{thm:gaussian} and the Cramér–Wold theorem, we have that jointly,
\begin{align}
\sqrt{n}\begin{bmatrix}
	\hn_\E(\q; \widehat{\btheta})-\h_\E(\q)\\
	\widehat{e}_r^*-e_r^*\\
	\widehat{e}_b^*-e_b^*
\end{bmatrix}\xrightarrow[]{d}\begin{bmatrix}
	\mathbb{G}[\zeta_i^*(\cM\q;\btheta)]\\
	\mathbb{G}[Z_i^{r,*}]\\
	\mathbb{G}[Z_i^{b,*}]
\end{bmatrix}\equiv\mathbb{G}_{he^*} \quad \text{ in }\ell^\infty(\bB^1),\label{eq:e-h-joint}
\end{align}
Next, we analyze the two parts of the test statistic in Eq.~\eqref{eq:Tn_LDA}, beginning with the first part. By the same argument as in the proof of Proposition \ref{prop:PFlimit_distr}, under the null that $e^*\in\sF$ we have $\max_{\q\in\bS^1}(\q^\intercal e^* - \h_\E(\q)) = 0$.
Hence, for $\phi_{\bS^1}\big(e,f(\cdot)\big)\equiv\sup_{\q\in\bS^1}\big(\q^\intercal e-f(\q)\big)$, we can write $\sqrt{n}\left[\max_{\q\in\bS^1}(\q^\intercal \widehat{e}^*-\widehat{\h}_\E(\q;\widehat{\btheta}))\right]_{+}=\max\left\{\sqrt{n}\left(\phi_{\bS^1}\big(\widehat{e}^*,\hn_\E(\q;\widehat{\btheta})\big)-\phi_{\bS^1}\big(e^*,\h_\E(\q)\big)\right),0\right\}$.
\citet[Lemma D.3]{kai16} shows that $\phi_{\bS^1}$ is Hadamard directionally differentiable at $\big(e^*,\h_\E(\cdot)\big)$ with derivative $\phi^\prime_{\bS^1,(e^*,\h_\E(\cdot))}\big(e,f(\cdot)\big)=\sup_{\q\in Q^*_{\bS^1}(e^*)}\big(\q^\intercal e-f(\q)\big)$.  

For the second part of the test statistic in Eq.~\eqref{eq:Tn_LDA}, note 
that for any $s,t\in\mathbb{R}$, $\min\{s,2t-s\}=(2t-s)-\max\{2(t-s),0\}$ and $\min\{t,2s-t\}=t-\max\{2(t-s),0\}$,
so we can plug in $t=e_r^*$ and $s=e_b^*$ to rewrite Eq.~\eqref{eqn:sf-C} as
\begin{align*}
&\h_{\sC}(\q)=\max\biggl\{\q_1(e_r^*-\max\{2(e_r^*-e_b^*),0\})+\q_2 e_b^*,\,\,\, \q_1 e_r^*+\q_2\bigl(2e_r^*-e_b^*-\max\{2(e_r^*-e_b^*),0\}\bigr)\biggr\}\\
&=\max\biggl\{2\q_2\bigr(e_r^*-e_b^*\bigr)+(\q_1-\q_2)\max\{2(e_r^*-e_b^*),0\},\,\,\,0 \biggr\}+\q_1 e_r^* + \q_2 e_b^* - \q_1\max\{2(e_r^*-e_b^*),0\}.
\end{align*}
It follows that we can write
\vspace{-.25cm}
\begin{align}
\mathfrak{\h}_1\big(\h_\E(\cdot), e_r^*, e_b^*;~\q\big) \equiv&-\h_{\sC}(\q)-\h_\E(-\q)\notag\\
=&-\bigg(\max\biggl\{2\q_2\bigr(e_r^*-e_b^*\bigr)+(\q_1-\q_2)\max\{2(e_r^*-e_b^*),0\},\,\,\,0 \biggr\}\notag\\
&\hspace{1.5cm}+\q_1 e_r^* + \q_2 e_b^* - \q_1\max\{2(e_r^*-e_b^*),0\}\bigg)-\h_\E(-\q).\label{eq:LDA_inner_map}
\end{align}

\vspace{-.25cm}
\noindent Hence, the second part of the test statistic in Eq.~\eqref{eq:Tn_LDA} is the composition of two mappings applied to $\{\hn_\E(\cdot; \widehat{\btheta}),\widehat{e}_r^*,\widehat{e}_b^*\}$: $\mathfrak{\h}_1(\,\cdot\,;\q)$ in Eq.~\eqref{eq:LDA_inner_map} and  $\mathfrak{\h}_2(\cdot)\equiv\max_\q(\cdot)$. 
Each of these mappings is Hadamard directionally differentiable at $\{\h_\E(\cdot),e_r^*,e_b^*\}$ tangentially to $\ell^\infty(\tilde{\bS}^1)\times\mathbb{R}^2$ (\citeauthor{fan:san19}, \citeyear{fan:san19}, Example 2.1; \citeauthor{car:cue:alb20}, \citeyear{car:cue:alb20}, Theorem 2.1).
By \citet[Proposition 3.6]{Shapiro90},
\vspace{-.2cm}
\begin{align}
\max_{\q\in\tilde{\bS}^1}\left(-\h_{\sC}(\q)-\h_\E(-\q)\right)=\mathfrak{\h}_2\circ\mathfrak{\h}_1\big(\h_\E(\cdot), e_r^*, e_b^*;\,\q\big)\equiv\mathfrak{\h}\big( \h_\E\big(\cdot), e_r^*, e_b^*\big)\equiv\mathfrak{\h}\label{eq:composite_h}
\end{align}

\vspace{-.25cm}
\noindent is directionally differentiable at $(\h_\E(\cdot),e_r^*,e_b^*)$ tangentially to $\ell^\infty(\tilde{\bS}^1)\times\mathbb{R}^2$, with
\vspace{-.25cm}
\begin{align}
\mathfrak{\h}'(\cdot)=\mathfrak{\h}'_{2,\mathfrak{h}_1^*(\q)}\circ\mathfrak{\h}'_{1,s^*}(\,\cdot~;\q),\label{eqn:deriv-compo}
\end{align}

\vspace{-.25cm}
\noindent where $\mathfrak{h}_1^*(\q)\equiv\mathfrak{h}_1\big(\h_\E(\cdot),e_r^*,e_b^*\,;\q\big)$ and $s^*\equiv(\q_1-\q_2)\max\{2(e_r^*-e_b^*),0\}+2\q_2\bigr(e_r^*-e_b^*\bigr)$;
for any $s_r,s_b,s\in\mathbb{R}$, $s_h\in\ell^\infty(\tilde{\bS}^1)$ (the space of bounded functions over the compact set $\tilde{\bS}^1$) and continuous $f\in\ell^\infty(\tilde{\bS}^1)$,
\vspace{-.25cm}
\begin{align}    
    \mathfrak{\h}'_{1,s^*}\big(s_h(\cdot), s_r, s_b\,;\q\big)&=-\phi_{1,s^*}^\prime(2\q_2[s_r-s_b]+2(\q_1-\q_2)\phi_{1,e_r^*-e_b^*}^\prime[s_r-s_b])\notag\\
    &\hspace{1.5cm}-\q_1 s_r-\q_2 s_b+2\q_1\phi_{1,e_r^*-e_b^*}^\prime[s_r-s_b]-s_h(-\q),\label{eq:der_h1}\\
    \mathfrak{\h}'_{2,\mathfrak{h}_1^*(\q)}(f)&= \max_{\bigl\{\q\in\tilde{\bS}^1:\mathfrak{h}_1^*(\q)=\mathfrak{h}\bigr\}} f(\q).\label{eq:der_h2}
\end{align}

\vspace{-.25cm}
\noindent In Eq.~\eqref{eq:der_h1}, for any $t\in\mathbb{R}$
\vspace{-.25cm}
\begin{align}
\phi_{1,s^*}'(t)\equiv\begin{cases}
	t, &\text{ if $s^*>0$}\\
	\max\{t,0\}, &\text{ if $s^*=0$} \\
	0, &\text{ if $s^*<0$}
\end{cases}\label{max-deriv}
\end{align}
is the Hadamard directional derivative of $\phi_1(s)\equiv\max\{s,0\}$ at $s^*$ \citep[Example 2.1]{fan:san19}.
 Eq.~\eqref{eq:der_h2} results from \citet[Corollary 2.3]{car:cue:alb20}, as $\mathfrak{\h}_1(\cdot~;\q)$ is a continuous function over compact support.
 By \citet[Proposition 3.6]{Shapiro90},  $T_n^{\texttt{LDA}}$ is Hadamard directionally differentiable at $(\h_\E(\cdot),e_r^*,e_b^*)$ tangentially to $\ell^\infty(\tilde{\bS}^1)\times\mathbb{R}^2$, with directional derivative given by the sum of the two directional derivatives derived above.  
By \citet[Theorem 3.4]{kai16}, \citet[Theorem 2.1]{fan:san19}, and \citet[Theorem 2.2]{car:cue:alb20}, 
for $\mathbb{G}[\mathbf{Z}_i^{*}]\equiv\left[\mathbb{G}[Z_i^{r,*}]~~~~~\mathbb{G}[Z_i^{b,*}]\right]^\intercal$, $\psi^{\texttt{LDA}}$ takes the form
\begin{multline}
\psi^{\texttt{LDA}}=\left[\sup_{\q\in Q^*_{\bS^1}(e^*)}\q^\intercal\mathbb{G}[\mathbf{Z}_i^{*}]-\mathbb{G}[\zeta_i^*(\cM\q;\btheta)]\right]_+\\
+\left[\mathfrak{\h}'_{2,\mathfrak{h}_1^*(\q)}  \left\{ \mathfrak{h}_{1,s^*}'\big(\mathbb{G}[\zeta_i^*(-\cM(\,\cdot\,);\btheta)], \mathbb{G}[Z_i^{r,*}], \mathbb{G}[Z_i^{b,*}]\,;\q\big)\right\}\right]_-. \label{psi-inf}
\end{multline}

If $c^{\texttt{LDA}}_{1-\alpha+\varsigma}+\varsigma$ is a continuity point of the distribution of $\psi^{\texttt{LDA}}$,

\vspace{-1cm}
\begin{align*}
\lim_{n\to\infty}\mathbb{P}\bigl(T_n^{\texttt{LDA}}>c^{\texttt{LDA}}_{1-\alpha+\varsigma}+\varsigma\bigr)
=\mathbb{P}\bigl(\psi^{\texttt{LDA}}>c^{\texttt{LDA}}_{1-\alpha+\varsigma}+\varsigma\bigr)\le\alpha.
\end{align*}
For $\varsigma$ small enough, if $c^{\texttt{LDA}}_{1-\alpha+\varsigma}+\varsigma$ is a discontinuity point of $\psi^{\texttt{LDA}}$, then $\psi^{\texttt{LDA}}$ is continuous at $c^{\texttt{LDA}}_{1-\alpha}$, and since $\mathbb{P}(T_n^{\texttt{LDA}}>c^{\texttt{LDA}}_{1-\alpha+\varsigma}+\varsigma)\le\mathbb{P}(T_n^{\texttt{LDA}}>c^{\texttt{LDA}}_{1-\alpha})$, the result follows.

If Eq.~\eqref{eq:no_kinks_cond} is satisfied, $\E$ has no kinks or flat faces (by Assumption~\ref{asm:smooth-X}). Under the null, since $e^*\in\sF$, $\left\{\q\in\tilde{\bS}^1:\mathfrak{h}_1^*(\q)=\mathfrak{h}\right\}=\arg\max_{\q\in\tilde{\bS}^1}\left(-\h_{\sC}(\q)-\h_\E(-\q)\right)=\{\q^*_{\tilde{\bS}^1}\}$
and $-(\q^*_{\tilde{\bS}^1})^\intercal e^*-\h_\E(-\q^*_{\tilde{\bS}^1})=0$, so that $e^*=\ss_\E(-\q^*_{\tilde{\bS}^1})$. 
It then follows that under Eq.~\eqref{eq:no_kinks_cond},
\begin{multline}
\psi^{\texttt{LDA}}=\left[\q^{*\intercal}_{\bS^1}\mathbb{G}[\mathbf{Z}_i^{*}]-\mathbb{G}[\zeta_i^*(\cM\q^*_{\bS^1};\btheta)]\right]_+ \\
+ \left[\mathfrak{h}_{1,s^*}'\big(\mathbb{G}[\zeta_i^*(-\cM(\cdot);\btheta)], \mathbb{G}[Z_i^{r,*}], \mathbb{G}[Z_i^{b,*}]\,;\q^*_{\bS^1}\big)\right]_-. \label{psi-inf_no_kink}
\end{multline}

\end{proof}

\begin{proof}[Proof of Proposition \ref{prop:bootstrap}]
We verify the assumptions in Theorem 3.2 of \citet{fan:san19}, from which it follows that Proposition \ref{prop:bootstrap} holds and that $\widehat{c}_\beta=c_\beta+o_p(1)$ \citep[Online Appendix, Eq. (S.13), p. 4]{fan:san19} when $c_\beta$ is a point at which the cdf of $\phi_{he^*}'(\mathbb{G}_{he^*})$ is continuous and
increasing.

Assumptions 1, 3(i), 3(iii), and 3(iv) of \citet{fan:san19} hold by construction; their Assumption 2 holds by Eq. \eqref{eq:e-h-joint}, and Assumption 4 holds by Lemma S.3.8 of \citet[Online Appendix]{fan:san19}. Lastly, to show Assumption 3(ii) holds, let $\mathcal{O}_n\equiv\{(\Y_i,\G_i,\X_i)\}_{i=1}^n$. In the next paragraph, we establish that
    \begin{align}
       &\sup_{f\in\mathcal{BL}_1}\left|\mathbb{E}\left[f\left(\sqrt{n}\{\widetilde{he^*}(\widehat{\btheta})-\widehat{he^*}(\widehat{\btheta})\}\right)\,\bigg|\,\mathcal{O}_n\right]-\mathbb{E}[f(\mathbb{G}_{he^*})]\right|\notag\\
       &= \sup_{f\in\mathcal{BL}_1}\left|\mathbb{E}\left[f\left(\begin{bmatrix}
           \mathbb{G}_n[(W_i-1)\zeta_i^*(\cM\q;\btheta)]\\
           \mathbb{G}_n[(W_i-1)Z_i^{r,*}]\\
           \mathbb{G}_n[(W_i-1)Z_i^{b,*}]
       \end{bmatrix}\right)\,\bigg|\,\mathcal{O}_n\right]-\mathbb{E}[f(\mathbb{G}_{he^*})]\right|+o_p(1),\label{eq:bs-nice}
    \end{align}
where $\mathbb{G}_{he^*}$ is defined in Eq. \eqref{eq:e-h-joint}, and by Theorem 3.6.13 of \citet{van:wel13}, Eq. $\eqref{eq:bs-nice}=o_p(1)$, and therefore Assumption 3(ii) of \citet{fan:san19} holds.  
    
To show that the equality in Eq.~\eqref{eq:bs-nice} holds, we let $\widetilde{\mu}_g^W\equiv\overline{W}\widetilde{\mu}_g$, $\tcM^W\equiv\diag(1/\widetilde{\mu}_r^W, 1/\widetilde{\mu}_b^W)$. Recall $\bxi_{\ss,i}(\check{\cM};\mathring{\cM}\q;\widehat{\btheta})$ defined in Eq.~\eqref{eq:bxi}. Decompose the bootstrapped process by
    \begin{align*}
        &\sqrt{n}\big\{\widetilde{he^*}(\widehat{\btheta})-\widehat{he^*}(\widehat{\btheta})\big\}=\sqrt{n}\big\{\widetilde{he^*}(\widehat{\btheta})-he^*\big\}-\sqrt{n}\big\{\widehat{he^*}(\widehat{\btheta})-he^*\big\}\\
        &=\sqrt{n}\begin{bmatrix}
            \frac{1}{n}\sum_{i=1}^nW_i\q^\intercal\bxi_{\ss,i}(\tcM^W;\tcM\q;\widehat{\btheta})-\mathbb{E}[\zeta_i(\cM\q;\btheta)]\\
            \frac{1}{n}\sum_{i=1}^nW_i\frac{Z_i^r}{\widetilde{\mu}_r^W}-\frac{\mathbb{E}[Z_i^r]}{\mu_r}\\
            \frac{1}{n}\sum_{i=1}^nW_i\frac{Z_i^b}{\widetilde{\mu}_b^W}-\frac{\mathbb{E}[Z_i^b]}{\mu_b}
        \end{bmatrix}-\begin{bmatrix}
            \mathbb{G}_n[\zeta_i^*(\cM\q;\btheta)]\\
            \mathbb{G}_n[Z_i^{r,*}]\\
            \mathbb{G}_n[Z_i^{b,*}]
        \end{bmatrix}+o_p(1),
    \end{align*}
where the second equality follows from Theorem \ref{thm:gaussian} and the proof of Proposition \ref{prop:lda}. Next, observe that
\begin{align*}
    &\sqrt{n}\left(\frac{1}{n}\sum_{i=1}^nW_i\q^\intercal\bxi_{\ss,i}(\tcM^W;\tcM\q;\widehat{\btheta})-\mathbb{E}[\zeta_i(\cM\q;\btheta)]\right)\\
    &=\underbrace{\sqrt{n}\left(\frac{1}{n}\sum_{i=1}^n\q^\intercal(\tcM^W\cM^{-1})\big(W_i\bxi_{\ss,i}(\cM;\tcM\q;\widehat{\btheta})-\mathbb{E}[W_i\bxi_{\ss,i}(\cM;\cM\q;\btheta)]\big)\right)}_{\equiv\widetilde{A}}\\
    &\hspace{5cm}+\underbrace{\sqrt{n}\q^\intercal\left(\tcM^W\cM^{-1}-\bI\right)\mathbb{E}[\bxi_{\ss,i}(\cM;\cM\q;\btheta)]}_{\equiv\widetilde{B}}\\
    &=\mathbb{G}_n[W_i\zeta_i^*(\cM\q;\btheta)]+o_p(1),
\end{align*}
where $\mathbb{E}[W_i\bxi_{\ss,i}(\cM;\cM\q;\btheta)]=\mathbb{E}[\bxi_{\ss,i}(\cM;\cM\q;\btheta)]$  by independence of $W_i$ and $\mathbb{E}[W_i]=1$, $\widetilde{A}$ (resp., $\widetilde{B}$) is the bootstrapped analogue of $A$ (resp., $B$) in the proof of Theorem \ref{thm:gaussian}, and the last equality follows from a similar argument used in the proof of Theorem \ref{thm:gaussian}. 

In addition, for $g\in\{r,b\}$, by the Delta method,
\begin{align*}
    &\sqrt{n}\left(\frac{1}{n}\sum_{i=1}^nW_i\frac{Z_i^g}{\widetilde{\mu}_g^W}-\frac{\mathbb{E}[Z_i^g]}{\mu_g}\right)\\
    &=\sqrt{n}\left(\frac{1}{n}\sum_{i=1}^nW_i\frac{Z_i^g}{\mu_g}-\frac{\mathbb{E}[Z_i^g]}{\mu_g}\right)+\sqrt{n}\left(\frac{1}{\widetilde{\mu}_g^W}-\frac{1}{\mu_g}\right)\left(\frac{1}{n}\sum_{i=1}^nW_iZ_i^g\right)\\
    &=\mathbb{G}_n[W_iZ_i]\frac{1}{\mu_g}-\mathbb{G}_n[W_i\mathds{1}\{G_i=g\}]\frac{\mathbb{E}[Z_i^g]}{\mu_g^2}+o_p(1)=\mathbb{G}_n[W_iZ_i^{g,*}]+o_p(1).
\end{align*}
Therefore,
\begin{align*}
    \sqrt{n}\big\{\widetilde{he^*}(\widehat{\btheta})-\widehat{he^*}(\widehat{\btheta})\big\}=\begin{bmatrix}
           \mathbb{G}_n[(W_i-1)\zeta_i^*(\cM\q;\btheta)]\\
           \mathbb{G}_n[(W_i-1)Z_i^{r,*}]\\
           \mathbb{G}_n[(W_i-1)Z_i^{b,*}]
       \end{bmatrix}+o_p(1),
\end{align*}
yielding Eq. \eqref{eq:bs-nice}.

\end{proof}

\subsection{Proofs for Section \ref{sec:distance_F}}
\begin{proof}[Proof of Proposition \ref{prop:dist-F}]
Consider the null hypothesis $\HH_0:\rho(e^*,\F)=\delta$ against $\HH_A:\rho(e^*,\F)\neq\delta$ for some $\delta>0$, and view our confidence interval as the result of inverting this hypothesis test. Let $\varphi_n^{\texttt{dist}}\equiv\mathds{1}\left\{\inf_{\rho(\Tilde{e}, \Tilde{\F})\in \mathcal{CS}_n^{\rho(e^*,\F)}}\left|\rho(\Tilde{e}, \Tilde{\F}) - \delta\right|>0\right\}$ and partition the parameter space of the location of $\E$ relative to $\mathcal{H}_{45}$ such that
\begin{align}
    \varphi_n^{\texttt{dist}} =
    &~ \varphi_n^{\texttt{dist}}\mathds{1}\big\{\E\cap\mathcal{H}_{45}^-=\emptyset\big\}\label{eq:tstat-above}\\
    &~~+\varphi_n^{\texttt{dist}}\mathds{1}\big\{\E\cap\mathcal{H}_{45}^+=\emptyset\big\}\label{eq:tstat-below}\\
    &~~~~+\varphi_n^{\texttt{dist}}\mathds{1}\big\{\E\cap\mathcal{H}_{45}^+\neq\emptyset, \E\cap\mathcal{H}_{45}^-\neq\emptyset\big\}.\label{eq:tstat-cross}
\end{align}
Note that whenever $\mathcal{CS}_n^+(e^*,\F)\neq\emptyset$,
\begin{multline}
    \inf_{\rho(\Tilde{e}, \Tilde{\F})\in \mathcal{CS}_n^{\rho(e^*,\F)}}|\rho(\Tilde{e}, \Tilde{\F})-\delta|\le \inf_{(\Tilde{e}, \Tilde{\F})\in \mathcal{CS}_n^+(e^*,\F)}|\rho(\Tilde{e}, \Tilde{\F})-\delta|,\\
    \Rightarrow \mathbb{E}[\varphi_n^{\texttt{dist}}]\le\mathbb{E}\left[\mathds{1}\left\{\inf_{(\Tilde{e}, \Tilde{\F})\in \mathcal{CS}_n^+(e^*,\F)}|\rho(\Tilde{e}, \Tilde{\F})-\delta| > 0\right\}\right],\label{eq:varphi_dist_smaller}
\end{multline}
and similarly when $(\Tilde{e}, \Tilde{\F})\in\mathcal{CS}_n^+(e^*,\F)$ is replaced with  $(\Tilde{e}, \Tilde{\F})\in\mathcal{CS}_n^-(e^*,\F)$ or $\rho(\Tilde{e}, \Tilde{\F})\in\mathcal{CS}_n^{45}(\rho(e^*,\F))$ (and under the case where these sets are non-empty).

When $\E\cap\mathcal{H}_{45}^-=\emptyset$, we have $ \F\in\mathcal{H}_{45}^+\cup\mathcal{H}_{45}$ and $\lim_{n\to\infty}\bP\big((e^*,\F)\in \mathcal{CS}_n^+(e^*,\F)\big)\geq1-\alpha$ by a similar argument to the proof of Proposition \ref{prop:weak-GB}.
It then follows that under the null $\rho(e^*,\F)=\delta$,
$\mathbb{E}[\eqref{eq:tstat-above}]\leq \alpha\mathds{1}\{\E\cap\mathcal{H}_{45}^-=\emptyset\}$ as $n\to\infty$, using Eq.~\eqref{eq:varphi_dist_smaller} and the fact that $\bP\left(\inf_{(\Tilde{e},\Tilde{\F})\in \mathcal{CS}_n^+(e^*,\F)}\left|\rho(\Tilde{e}, \Tilde{\F})-\delta\right|>0
    \right)\leq 1-\bP\big((e^*,\F)\in \mathcal{CS}_n^+(e^*,\F)\big)$.
Similarly, $\mathbb{E}[\eqref{eq:tstat-below}]\leq\alpha\mathds{1}\{\E\cap\mathcal{H}_{45}^+=\emptyset\}$ as $n\to\infty$.

We complete the proof by showing $\mathbb{E}[\eqref{eq:tstat-cross}]\leq\alpha\mathds{1}\{\E\cap\mathcal{H}_{45}^+\neq\emptyset, \E\cap\mathcal{H}_{45}^-\neq\emptyset\}$ if $\rho(e^*,\F)=\delta$. In the event $\E\cap\mathcal{H}_{45}^+\neq\emptyset$ and $ \E\cap\mathcal{H}_{45}^-\neq\emptyset$, $\F\in\mathcal{H}_{45}$ and---since $\E$ is \textit{not} tangent to $\mathcal{H}_{45}$ in this case---the direction in which $\F$ is the support set of $\E$ does \textit{not} live in the span $\{c[1~ -1]^\intercal: c\in\mathbb{R}\}$. Hence, $c^*\equiv\arg\inf_{c\in\mathbb{R}}\h_\E(\mathfrak{u}_1(c))$ for $\mathfrak{u}_1(c)\equiv\mathfrak{u}_1-c[1~-1]^\intercal$ is bounded, since otherwise it would require $\E$ to be tangent to $\mathcal{H}_{45}$. In addition, as explained in Footnote \hyperref[ftnt:bound_inf_E_tilde]{3},  $\h_\E(\mathfrak{u}_1(c))$ is bounded from below when $\F\in\mathcal{H}_{45}$. We can therefore restrict attention to $c\in[-c_3, c_3]\equiv\mathcal{C}_3\subset\mathbb{R}$ for some bounded constant $c_3>0$ in the characterization of $\F$ in Eq.~\eqref{eq:h_tildeE} so that $\h_\E(\mathfrak{u}_1(\cdot))$ is a bounded function over compact support $\mathcal{C}_3$.  Under the maintained assumptions, Eq. \eqref{eq:e-h-joint} holds and implies that
\begin{align*}
\sqrt{n}\begin{bmatrix}
	\hn_\E\big(\mathfrak{u}_1(c); \widehat{\btheta}\big)-\h_\E\big(\mathfrak{u}_1(c)\big)\\
	\widehat{e}_r^*-e_r^*\\
	\widehat{e}_b^*-e_b^*
\end{bmatrix}\xrightarrow[]{d}\begin{bmatrix}
	\|\mathfrak{u}_1(c)\|_E\mathbb{G}[\zeta_i^*(\cM\mathfrak{u}_1(c)/\|\mathfrak{u}_1(c)\|_E;\btheta)]\\
	\mathbb{G}[Z_i^{r,*}]\\
	\mathbb{G}[Z_i^{b,*}]
\end{bmatrix} ~\text{in }\ell^\infty(\mathcal{C}_3),
\end{align*}
where we use the property of support functions that for any constant $\tilde{c}>0$, $\h_\E(\tilde{c}\cdot\q)=\tilde{c}\cdot\h_\E(\q)$ \citep[see,][p.45]{sch93}. Recall $\F_{45}\equiv\mathfrak{u}\cdot\h_{\Tilde{\E}}(\mathfrak{u}_1)$ and let $\widehat{\F}_{45}$ be its estimator with $\h_\E(\cdot)$ replaced by $\widehat{\h}_\E(\,\cdot\,;\widehat{\btheta})$ in the expression of $\h_{\Tilde{\E}}(\cdot)$ given in Eq.~\eqref{eq:h_tildeE}. Observe that $\rho(e^*,\F_{45})$ is a composition of two Hadamard directionally differentiable functions: let $\mathfrak{h}_4:\ell^\infty(\mathcal{C}_3)\to\mathbb{R}$ be the $\inf_{c\in\mathcal{C}_3}(\cdot)$ function,
\begin{align*}
   \rho(e^*,\F_{45})= \rho\bigg(e^*,~\mathfrak{u}\cdot\mathfrak{h}_4\big\{\h_\E\big(\mathfrak{u}_1(c)\big)\big\}\bigg),
\end{align*}
where by \citet[Corollary 2.3]{car:cue:alb20}, $\mathfrak{h}_4$ is directionally differentiable at $\h_\E\big(\mathfrak{u}_1(\cdot)\big)$ tangentially to $\ell^\infty(\mathcal{C}_3)$, with
\begin{align*}
    \mathfrak{h}'_{4,\h_\E(\mathfrak{u}_1(\cdot))}(f)=\inf_{\bigl\{c\in\mathcal{C}_3:\h_\E\big(\mathfrak{u}_1(c)\big)=\h_{\Tilde{\E}}(\mathfrak{u}_1)\bigr\}} f(c),~~\text{for continuous } f\in\ell^\infty(\mathcal{C}_3).
\end{align*}
Denote $\rho_{(e^*,\F_{45})}' : \mathbb{R}^2\times\mathbb{R}^2\to\mathbb{R}$ the directional derivative of $\rho$ at $(e^*,\F_{45})$. Then by \citet[Proposition 3.6]{Shapiro90}, \citet[Theorem 2.1]{fan:san19}, and \citet[Theorem 2.2]{car:cue:alb20},
\begin{align*}
\sqrt{n}\begin{bmatrix}
	\rho(\widehat{e}^*,\widehat{\F}_{45})-\rho(e^*,\F_{45})
\end{bmatrix}\xrightarrow[]{d}\psi^{\rho(e^*,\F_{45})},
\end{align*}
where, for $\mathbb{G}[\mathbf{Z}_i^{*}]\equiv\left[\mathbb{G}[Z_i^{r,*}]~~~~~\mathbb{G}[Z_i^{b,*}]\right]^\intercal$,
\begin{align*}
\psi^{\rho(e^*,\F_{45})}\equiv\rho_{(e^*,\F_{45})}'\left(
        \mathbb{G}[\mathbf{Z}_i^{*}],\mathfrak{u}\cdot\mathfrak{h}'_{4,\h_\E(\mathfrak{u}_1(\cdot))}\big(\|\mathfrak{u}_1(c)\|_E\mathbb{G}[\zeta_i^*(\cM\mathfrak{u}_1(c)/\|\mathfrak{u}_1(c)\|_E;\btheta)]\big)\right),
\end{align*}
so by the continuous mapping theorem,
\begin{align}
    \psi^{45}=\left|\psi^{\rho(e^*,\F_{45})}\right|,\label{eq:psi-delta}
\end{align}
Next, if Eq.~\eqref{eq:no_kinks_cond} holds and $\E$ has no kinks, we show that the set $\left\{c\in\mathcal{C}_3:\h_\E\big(\mathfrak{u}_1(c)\big)=\h_{\Tilde{\E}}(\mathfrak{u}_1)\right\}$ is a singleton and the expression of $\psi^{\rho(e^*, \F_{45})}$ simplifies. Recall $c^*\equiv\arg\inf_{c\in\mathcal{C}_3}\h_\E\big(\mathfrak{u}_1(c)\big)$. By contradiction, assume there exists $\Tilde{c}\neq c^*$ and $\h_\E\big(\mathfrak{u}_1(\Tilde{c})\big)=\h_\E\big(\mathfrak{u}_1(c^*)\big)=\h_{\Tilde{\E}}(\mathfrak{u}_1)$. This implies that the two linear equations $(-1-\Tilde{c})e_r+\Tilde{c}e_b=\h_{\Tilde{\E}}(\mathfrak{u}_1)$ and $(-1-c^*)e_r+c^*e_b=\h_{\Tilde{\E}}(\mathfrak{u}_1)$ intersect at some point $(e_r^\star,e_b^\star)$. Replacing this value in the equations, we find $\Tilde{c}(e_b^\star-e_r^\star)=c^*(e_b^\star-e_r^\star)$. If $c^*=0$, for $\Tilde{c}$ not to equal $c^*$ it must be the case that $\Tilde{c}\neq 0$, in which case $e_b^\star=e_r^\star$, implying that $(e_r^\star,e_b^\star)=\R=\F$ and in turn by Eq.~\eqref{eq:no_kinks_cond} it must be the case that $\Tilde{c}=c^*$. Similarly, if $c^*\neq 0$, then either $e_b^\star=e_r^\star$, which implies $(e_r^\star,e_b^\star)=\F$ and hence $\Tilde{c}=c^*$, or $\Tilde{c}/c^*=1$ and the claim follows. Let $\Tilde{\mathfrak{u}}_1(c^*)\equiv\frac{\mathfrak{u}_1(c^*)}{\|\mathfrak{u}_1(c^*)\|_E}$, we get:
\begin{align}
    \psi^{45}=\left|\rho_{(e^*,\F_{45})}'\bigg(
        \mathbb{G}[\mathbf{Z}_i^{*}],\mathfrak{u}\cdot\|\mathfrak{u}_1(c^*)\|_E\cdot\mathbb{G}\left[\zeta_i^*\left(\cM\Tilde{\mathfrak{u}}_1(c^*);\btheta\right]\right)\bigg)\right|.\label{eq:psi-delta_no_kink}
\end{align}
If $\F\in\mathcal{H}_{45}$, $\lim_{n\to\infty}\bP\left(\mathcal{CS}_n^{45}(\rho(e^*,\F))=\emptyset\right)=0$.
Under the null $\rho(e^*,\F)=\delta$, and using again Eq.~\eqref{eq:varphi_dist_smaller}, if $c_{1-\alpha+\varsigma}^{\rho(\Tilde{e}, \Tilde{\F})}+\varsigma$ is a continuity point of the distribution of $\psi^{\rho(\Tilde{e}, \Tilde{\F})}$,
\begin{align*}
    \lim_{n\to\infty}\mathbb{E}[\varphi_n^{\texttt{dist}}]\le&\lim_{n\to\infty}\bP\big(T_n^{\rho(\Tilde{e}, \Tilde{\F})}>c_{1-\alpha+\varsigma}^{\rho(\Tilde{e}, \Tilde{\F})}+\varsigma \big)\leq\alpha
\end{align*}
and if it is a discontinuity point for an infinitesimal $\varsigma$, 
then $c_{1-\alpha}^{\rho(\Tilde{e}, \Tilde{\F})}$ is a continuity point and
\begin{align*}
    \lim_{n\to\infty}\mathbb{E}[\varphi_n^{\texttt{dist}}]\leq&\lim_{n\to\infty}\bP\big(T_n^{\rho(\Tilde{e}, \Tilde{\F})}>c_{1-\alpha}^{\rho(\Tilde{e}, \Tilde{\F})}\big)=\alpha.
\end{align*}
Therefore, under the null $\rho(e^*,\F)=\delta$, we conclude 
\begin{align*}
    \lim_{n\to\infty}\mathbb{E}[\varphi_n^{\texttt{dist}}]\leq\alpha.
\end{align*}
Test inversion yields the coverage result.
\end{proof}

\numberwithin{asm}{section}
\section{Auxiliary Results}\label{appn:B}
\subsection{Threshold Rules}\label{app:sec:threshold_rules}
In the paper, we allow $\mathcal A(\sX)$, the set of all algorithms that map from the input space $\sX$ to $[0,1]$, to be completely unrestricted. This includes randomized rules where the event $D=1$ occurs with probability $a(\X)$.
Here instead we consider \emph{threshold rules}.
Let $\mathcal{A}^\texttt{th}(\sX)$ denote a space of algorithms $\{a:\sX \mapsto \mathbb{R}\}$ such that $a\in\mathcal{A}^\texttt{th}(\sX)$ induces the decision rule
\begin{align*}
    D_{a} = \mathds{1}\{a(\X) \ge 0\}.
\end{align*}
One could alternatively pick a constant $\kappa$ a priori and use a decision rule of the form $\mathds{1}\{a(\X) \ge \kappa\}$, but to ease notation we absorb the threshold $\kappa$ in $a$.
We maintain the following richness assumption on $\mathcal{A}^\texttt{th}(\sX)$:
\begin{asm}
\label{asm:rich-A}
\textit{The set of algorithms $\mathcal{A}^\texttt{th}(\sX)$ is (i) convex; and (ii) sufficiently rich, in the sense that, $\X$-a.s.,
\begin{align*}
    \exists a'\in\mathcal{A}^\texttt{th}(\sX):~\mathds{1}\{a'(\X) \ge 0\}&=1,\\
    \exists a''\in\mathcal{A}^\texttt{th}(\sX):~\mathds{1}\{a''(\X) \ge 0\}&=0.
\end{align*}
    }
\end{asm}
\begin{remark}[Linear Threshold Rules]\label{remark:linear_threshold_rules1}
    Assumption~\ref{asm:rich-A} is satisfied by linear threshold rules with $\mathcal A^\texttt{th}(\sX)=\{[1;\X]^\intercal\beta:~\beta\in\mathbb{R}^{d_\X+1}\}$ and 
    $D_{\beta}=\mathds{1}\{[1;\X]^\intercal\beta \ge 0\}$.
\end{remark}
When using threshold rules, similar to Eq.~\eqref{eq:expression:e_g}, the group risks can be expressed as
\begin{align}
e_g(D_{a})&\equiv\frac{1}{\mu_g}\mathbb{E}_{\X}\left[D_{a}\theta_1^g(\X)+(1-D_{a})\theta_0^g(\X)\right]\notag\\
&= \frac{1}{\mu_g}\mathbb{E}\left[\LL_0^g+(\LL_1^g-\LL_0^g)\mathds{1}\{a(\X) \ge 0\}\right].\label{eq:expression:e_g_threshold}
\end{align}
Compare Eq.~\eqref{eq:expression:e_g_threshold} with Eq.~\eqref{eq:expression:e_g}: it follows immediately that threshold rules given by $D_{k(\btheta(\X),\cM\q)}=\mathds{1}\{k(\btheta(\X), \cM\q)\ge 0\}$ yield the extreme points of the set $\E$ in Eq.~\eqref{eq:E_as_Aumann}.
As we show next, under Assumption \ref{asm:rich-A}, the feasible set associated with threshold rules, denoted $\E^\texttt{th}$, is convex.
To see this, note that
\begin{align}
\E^\texttt{th}&=\left\{\mathbb{E}[\cM\vartheta(\X)]: \vartheta(\X)\in \left\{\btheta_0(\X), \btheta_1(\X)\right\} \right\}\equiv \mathbf{E}\left[\cM\tilde\linseg(\X)\right],\label{eq:E_as_Aumann_threshold}
\end{align}
 with $\btheta_d(\X)$ defined in Eq.~\eqref{eq:def_theta_d} and $\tilde\linseg(\X)\equiv\left\{\btheta_0(\X), \btheta_1(\X)\right\}$.
 Relative to Eq.~\eqref{eq:E_as_Aumann}, the fundamental difference here is that $\left\{\btheta_0(\X), \btheta_1(\X)\right\}$ is a two-point set instead of an interval.
 Nonetheless, the set on the right-hand-side of Eq.~\eqref{eq:E_as_Aumann_threshold} is by definition \citep[Def. 3.1]{mol:mol18} the \emph{Aumann expectation} of the two-point set $\tilde\linseg(\X)$.
 Next, observe that under Assumption \ref{asm:smooth-X} the probability space on which $(\Y,\G,\X)$ are defined is non-atomic (non-atomicity follows as long as one of the variables in $\X$ has a continuous distribution, and if all variables in $\X$ had a discrete distribution, Assumption \ref{asm:smooth-X} would fail).\footnote{To see this, let $\X$ have countable support, take $x\in\sX$ such that $\bP(\X=x)=\varsigma>0$. Then for $\q=\frac{[\{\theta_1^b(x)-\theta_0^b(x)\}/\mu_b
          \quad
          \{\theta_0^r(x)-\theta_1^r(x)\}/\mu_r]^\intercal}{\Vert \cM\{\btheta_1(x)-\btheta_0(x) \}\Vert}$,
 $\bP(|\q_1\{\theta_1^r(x)-\theta_0^r(x)\}/\mu_r+\q_2\{\theta_1^b(x)-\theta_0^b(x)\}/\mu_b|=0)\geq\varsigma>0$,
  and hence $\bP(|\q_1\{\theta_1^r(x)-\theta_0^r(x)\}/\mu_r+\q_2\{\theta_1^b(x)-\theta_0^b(x)\}/\mu_b|<\delta)\geq\varsigma>0$ for any $\delta>0$.}
  As both $\btheta_0(\X)$ and $\btheta_1(\X)$ are absolutely integrable, all conditions required for Theorem 3.4 in \citet{mol:mol18} are satisfied, yielding:
 \begin{align}
 \mathbf{E}\left[\cM\tilde\linseg(\X)\right]=\mathbf{E}\big[\cM\conv\left(\{\btheta_0(\X),\btheta_1(\X)\}\right)\big]=\mathbf{E}\left[\cM\linseg(\X)\right].\label{eq:convexification_threshold}
 \end{align}
 Theorem 3.11 in \citet{mol:mol18} also applies, and $\h_{\E}(\q)=\h_{\mathbf{E}\left[\cM\linseg(\X)\right]}(\q)=\h_{\mathbf{E}\left[\cM\tilde\linseg(\X)\right]}(\q)=\mathbb{E}[\h_{\cM\linseg(\X)}(\q)]$.
 Hence, under Assumption \ref{asm:rich-A}, the feasible set associated with threshold rules is convex and the support function fully characterizes it.
\begin{remark}[Richness of Threshold Rules]\label{remark:linear_threshold_rules2}
The result in Eq.~\eqref{eq:convexification_threshold} shows that threshold rules corresponding to a rich algorithm space can replicate any unconstrained algorithm.
Inspecting further the linear case is instructive.
If one allows for any $\beta\in\mathbb{R}^{d_\X+1}$,  Assumption~\ref{asm:rich-A} is satisfied. 
By the same argument given above, the feasible set associated with linear threshold rules, denoted $\E^\texttt{lin}$, is convex.
Indeed, for each $\beta$ one can write
\begin{multline*}
\hspace{-.4cm}e_g(\beta)\equiv\frac{1}{\mu_g}\mathbb{E}\left[\LL_0^g\big|[1;\X]^\intercal\beta< 0\right]\bP\left([1;\X]^\intercal\beta< 0\right)
+\frac{1}{\mu_g}\mathbb{E}\left[\LL_1^g\big|[1;\X]^\intercal\beta\ge 0\right]\bP\left([1;\X]^\intercal\beta\ge 0\right).
\end{multline*}
Let $\sX_\beta^+\equiv\{\X\in\sX:[1;\X]^\intercal\beta\ge 0\}$ and $\sX_\beta^-\equiv\{\X\in\sX:[1;\X]^\intercal\beta< 0\}$ denote the two sets in which a linear threshold rule with parameter $\beta$ partitions $\sX$.
As at least one component of $\X$ has continuous distribution and $\beta$ has support $\mathbb{R}^{d_\X+1}$, each realization of $\X$ can be allocated either in $\sX_\beta^+$ or in $\sX_\beta^-$ for some $\beta$.
Hence, convexification occurs.
The support function of $\E^\texttt{lin}$ can be expressed as
\begin{align*}
    \h_{\E^\texttt{lin}}(\q)=\max_{\beta\in\mathbb{R}^{d_\X+1}}\q^\intercal e_g(\beta)=\max_{\beta\in\mathbb{R}^{d_\X+1}}(\cM\q)^\intercal \mathbb{E}\left[\bLL_0\mathds{1}([1;\X]^\intercal\beta< 0)+\bLL_1\mathds{1}([1;\X]^\intercal\beta\ge 0)\right].
\end{align*}
\end{remark}

\subsection{Sufficient Conditions Yielding Strict Convexity and No Kinks}\label{app:sec:strict_convex_no_kink}
Assumption~\ref{asm:smooth-X} plays multiple roles in our analysis.
It assures that $\E$ is strictly convex, hence its support set in any direction $\q\in\bS^1$ is a singleton, and it assures Neyman orthogonality of the moment condition defining $\h_\E(\cdot)$. \citet[Section 3.1]{sem23} provides sufficient conditions for this assumption, based on joint Gaussianity of $\btheta_1(\X)-\btheta_0(\X)$. 
A mild strengthening of Assumption~\ref{asm:nuisance-structure} is sufficient both for Assumption~\ref{asm:smooth-X} to hold with $m=1$ and for Eq.~\eqref{eq:no_kinks_cond} to be satisfied, guaranteeing the absence of kinks in $\E$, as we show next.
\begin{asm}
    \label{asm:condition_smoothX1X2}
    \textit{(i) The distribution of $(\X_1,\X_2)|\X_{[3:d_\X]}$ is continuous with a bounded density and $\mathbb{E}\big[\big|\eta^g(\X_{[3:d_\X]})\big|\big]<\infty$ for $g\in\{r,b\}$; (ii) for a set $\tilde{\sX}_{[3:d_\X]}$ of realizations of $\X_{[3:d_\X]}$ with positive probability, the density of $(\X_1,\X_2)|\X_{[3:d_\X]}$ is positive on a ball of radius $c>0$ that includes $\mathbf{0}$ and the image of $\tilde{\sX}_{[3:d_\X]}$  under $\eta^g(\cdot)$ includes $0$.}
\end{asm}
\begin{prop}
\label{prop:no_kinks_condition}
    \textit{If Assumptions \ref{asm:nuisance-structure} and \ref{asm:condition_smoothX1X2}(i) hold, Assumption \ref{asm:smooth-X} is implied. If Assumption \ref{asm:condition_smoothX1X2}(ii) also holds, Eq.~\eqref{eq:no_kinks_cond} holds and the set $\E$ has no kinks.}
\end{prop}
\begin{proof}
Take $\delta>0$, and note that $\sup_{\q\in \bS^1} \mathbb{P}(|k(\btheta(\X),\cM\q)|<\delta)$ can be written as
\begin{align*}
    &\sup_{\q\in \bS^1}\mathbb{E}_{\X_{[3:d_\X]}}\left[\int_0^\delta \left|\q_1\Delta\theta^r(\X)/\mu_r+\q_2\Delta\theta^b(\X)/\mu_b \right|d\bP_{(\X_1,\X_2)|\X_{[3:d_\X]}}\right]\\
    \lesssim &\max_{g\in\{r,b\}}\mathbb{E}_{\X_{[3:d_\X]}}\left[\int_0^\delta \left(|\alpha^g|+|\beta^g|+|\eta^g(\X_{[3:d_\X]})|\right)d\bP_{(\X_1,\X_2)|\X_{[3:d_\X]}}\right]\\
    \lesssim &\max_{g\in\{r,b\}}\mathbb{E}_{\X_{[3:d_\X]}}\left[\delta \left(|\alpha^g|+|\beta^g|+|\eta^g(\X_{[3:d_\X]})|\right)\right]\lesssim\delta,
\end{align*}
where the first inequality follows from  $\sup_{\q\in \bS^1}\|\q\|_E=1$, $\mu_g\in(0,1)$, and that $(\X_1,\X_2)$ has bounded support. The second equality follows from continuity and boundedness of $d\bP_{(\X_1,\X_2)|\X_{[3:d_\X]}}$. The last inequality follows from $\mathbb{E}\big[\big|\eta^g(\X_{[3:d_\X]})\big|\big]<\infty$.

For the second result, observe that $\btheta_1(\X)-\btheta_0(\X)=\Delta\btheta(\X)$ equals:
\begin{align*}
    \underbrace{\begin{bmatrix}
        ~\alpha^r~~~ & \beta^r~\\
        ~\alpha^b~~~ & \beta^b~
    \end{bmatrix}}_{\equiv A_1}\begin{bmatrix}
        \X_1\\
        \X_2
    \end{bmatrix}+\underbrace{\begin{bmatrix}
        \eta^r(\X_{[3:d_\X]})\\
        \eta^b(\X_{[3:d_\X]})
    \end{bmatrix}}_{\equiv A_2},
\end{align*}
where the matrix $A_1$ is invertible by $\alpha^b\beta^r\neq\alpha^r\beta^b$. Under Assumption \ref{asm:condition_smoothX1X2}(ii), on the set $\tilde{\sX}_{[3:d_\X]}$, $A_1[\X_1~\X_2]^\intercal+A_2$ realizes in a set containing $0$. Hence, Eq. \eqref{eq:no_kinks_cond} holds by applying the law of iterated expectations.
\end{proof}

\numberwithin{asm}{section}
\section{Variability of the Empirical Results to Nuisance Parameter Estimation}\label{appn:C}
\subsection{Estimating the Nuisance Parameters Using Lasso}\label{app:sec:lasso}
In this subsection, we report the analogs of Figures \ref{fig:application:estimated:sets}-\ref{fig:tests}-\ref{fig:application:build:alg} and Tables \ref{tbl:LDA-results}-\ref{tbl:frac-bl-trt} using multinational logit lasso from the \href{https://glmnet.stanford.edu/index.html}{\texttt{glmnet}} package to estimate the nuisance parameter $\Delta\btheta$. That is, the only difference between the results reported in this subsection and those in Section \ref{subsec:empirical} lies in the choice of what machine learner is used to estimate nuisance parameters. 
Respectively, these are Figures \ref{fig:application:estimated:sets:lasso}-\ref{fig:tests:lasso}-\ref{fig:application:build:alg:lasso} and Tables \ref{tbl:LDA-results-lasso}-\ref{tbl:frac-bl-trt-lasso}.

\begin{figure}[H]\captionsetup[subfigure]{font=footnotesize}
    \centering
    \includegraphics[width=0.75\linewidth]{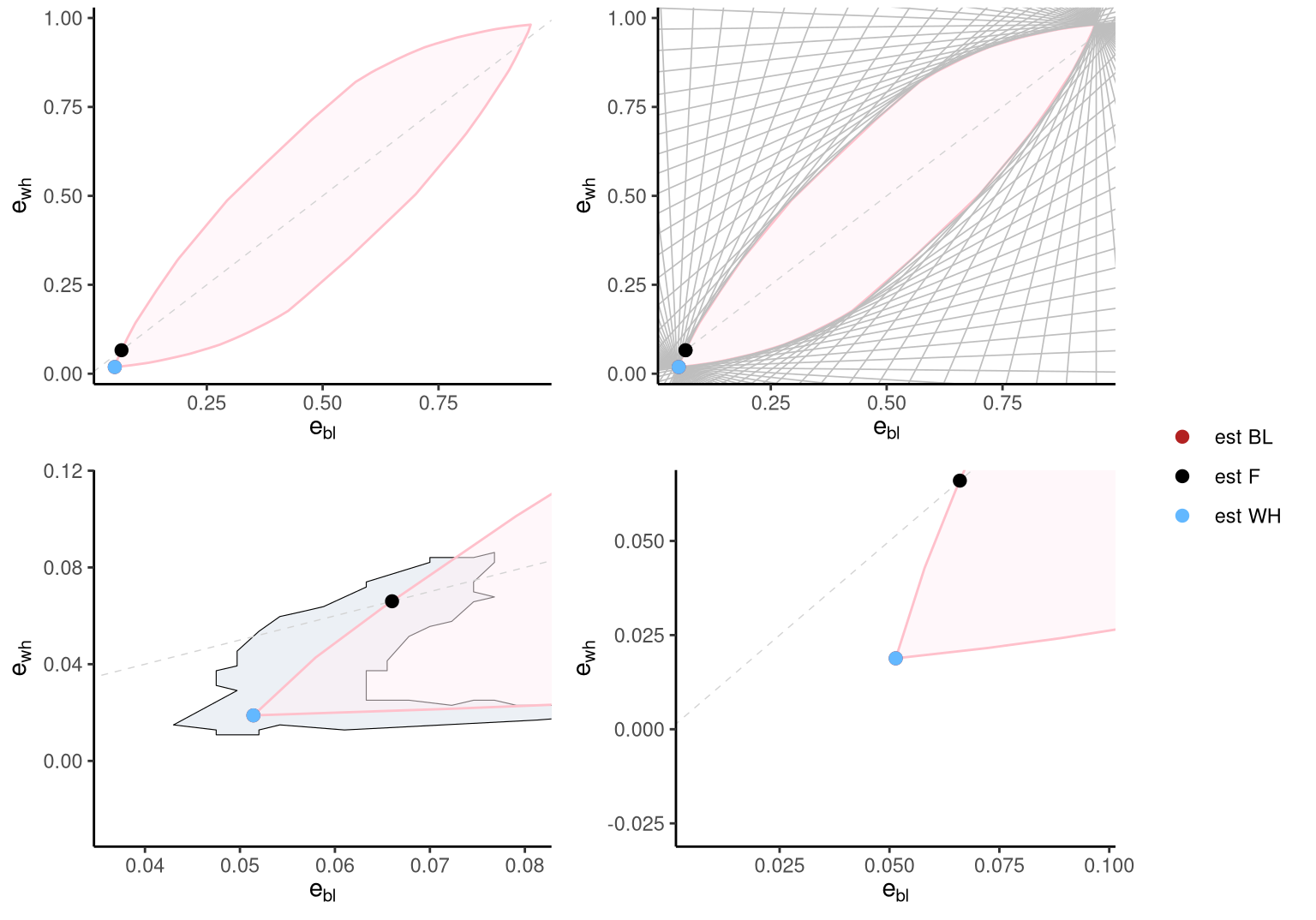}
    \caption{Top-left panel: $\widehat{\E}$; top-right panel: $\widehat{\E}$ along with one hundred supporting hyperplanes; bottom-left panel: zoom-in to $\widehat{\sF}$ and the $95\%$ confidence set around this frontier; bottom-right panel: further zoom-in to the best group-specific points $\textsf{BL}$ and $\textsf{WH}$, and the fairest point $\F$. $\Delta\btheta$ is estimated by logit lasso.}
    \label{fig:application:estimated:sets:lasso}
\end{figure}
\begin{figure}[H]\captionsetup[subfigure]{font=footnotesize}
\centering
\subfigure[Candidate values for $(\textsf{BL}, \textsf{WH})$]{
\includegraphics[width=0.4\linewidth]{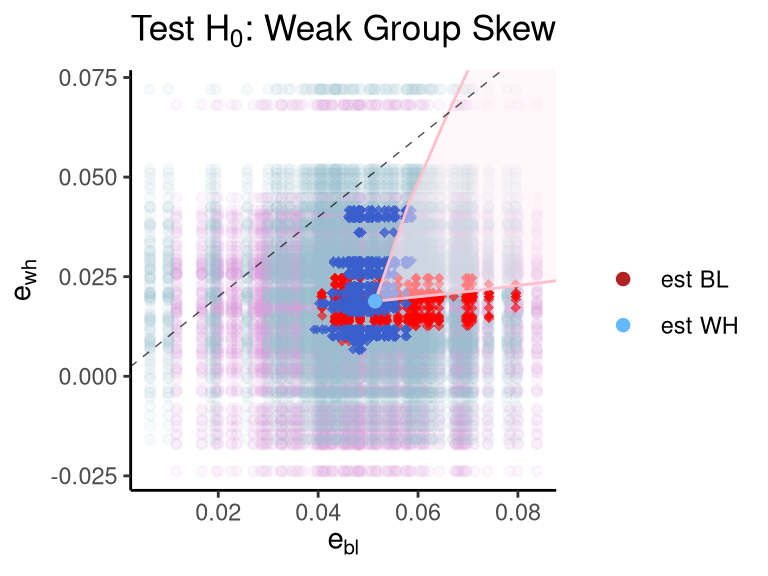}
}
\hspace{1em}
\subfigure[$3$ experimental algorithms]
{\includegraphics[width=0.4\linewidth]{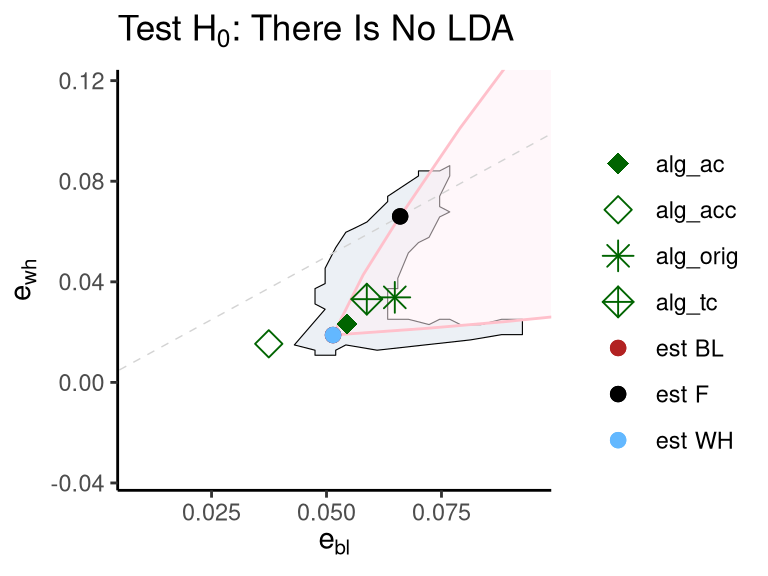}
}
    \caption{Panel (a): Plum-colored (respectively, light-blue colored) circles correspond to candidate values for $e_\textsf{BL}$ ($e_\textsf{WH}$) sampled from a normal distribution centered at $\widehat{e}_\textsf{BL}$ ($\widehat{e}_\textsf{WH}$), and red (blue) diamonds correspond to non-rejected values.
    Panel (b): $\widehat{\sF}$ along with its $95\%$ confidence set and the estimated group risks for four algorithms considered by \citetalias{obe:pow:vog:mul19}: the original algorithm used by the hospital (asterisk); one that predicts total cost (hollow diamond with a cross); one that predicts avoidable costs (filled diamond); and one that predicts the number of active chronic conditions (hollow diamond). $\Delta\btheta$ is estimated by logit lasso.}
    \label{fig:tests:lasso}
\end{figure}

\begin{figure}[H]\captionsetup[subfigure]{font=footnotesize}
    \centering
    \includegraphics[width=0.95\linewidth]{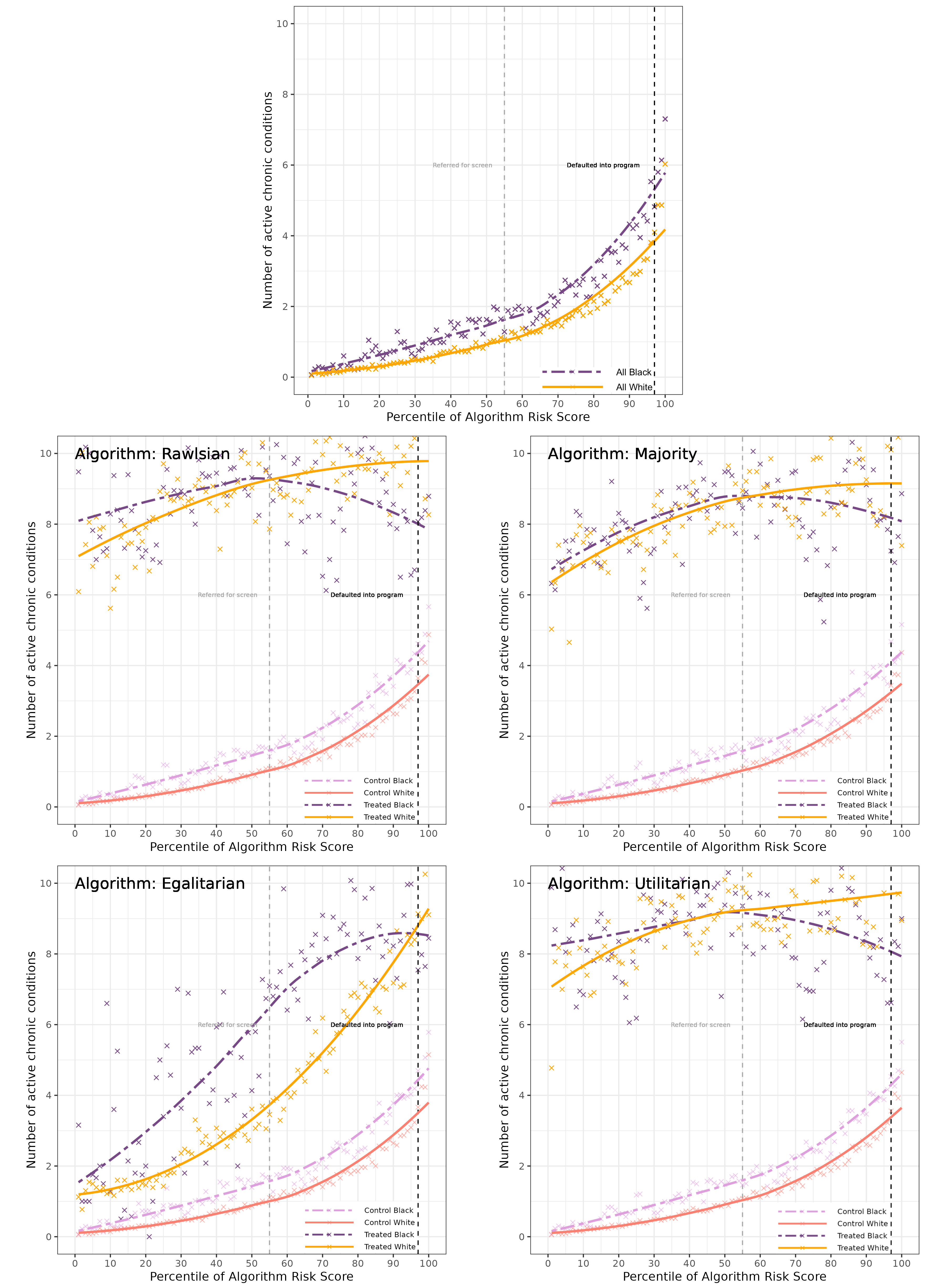}
    \caption{Average number of active chronic conditions within each risk-score percentile bin by treatment group under the alternative algorithms on the FA frontier subject to $3\%$ capacity constraint, averaged across $20$ replications of the $50$-$50$ split. $\Delta\btheta$ is estimated by logit lasso.}
    \label{fig:application:build:alg:lasso}
\end{figure}

\subsection{Variability of Empirical Results due to the Randomness in the Nuisance Estimation}
Both random forests and lasso involve randomness in their respective construction: for forests implemented by the \href{https://grf-labs.github.io/grf/index.html}{\texttt{grf}} package, randomness comes from subsampling in the construction of individual trees and randomly splitting features at each tree node, whereas for lasso implemented by the \href{https://glmnet.stanford.edu/index.html}{\texttt{glmnet}} package, randomness comes from choosing the optimal penalty parameter via cross-validation. Therefore, even if the same seed is set for reproducibility whenever possible, results will vary across different seeds. For this reason, we provide an assessment of how the empirical results reported in Section \ref{subsec:empirical} vary across different seeds by repeating the empirical exercises $20$ times, with results for forests and lasso reported respectively in Table \ref{tbl:variability-rf} and Table \ref{tbl:variability-lasso}.

\newpage
{
\begin{table}[H]
\scriptsize
\vspace{-.5cm}
\caption{\vspace{-.25cm}Results for the LDA Test and Confidence Sets for the Distance to $\F$ (for $\alpha=0.05$)\label{tbl:LDA-results-lasso}}
\vspace{.1cm}
\centering
\resizebox{0.85\columnwidth}{!}{
\begin{tabular}
{@{\extracolsep{3pt}}c@{}c@{}@{}c@{}@{}c@{}@{}c@{}@{}c@{}@{}} 

\multicolumn{5}{c}{\cellcolor{pink!20}\textbf{Test \textit{$\HH_0:$ There Is No LDA}}}\\
\cline{1-5}
\multicolumn{1}{c}{} & 
 \multicolumn{1}{c}{Original} & \multicolumn{1}{c}{Total Costs} & \multicolumn{1}{c}{Avoid. Costs} & \multicolumn{1}{c}{Act. Chr. Cond.} \\
\cline{1-5}
Estimated Risks & $ (0.065, 0.034)$ & $(0.059, 0.033)$ & $(0.054, 0.023)$ & $(0.037, 0.015)$ \\
Test Statistic & $2.043$ & $0.787$ & $0.440$ & $2.774$\\
Critical Value & $2.014$ & $1.865$ & $1.937$ & $1.886$ \\
Conclusion & Rejected & Not Rejected & Not Rejected & Rejected \\ 
\cline{1-5}
\multicolumn{5}{c}{\cellcolor{pink!20}\textbf{Distance to $\F=(0.063, 0.063)$}}\\
\cline{1-5}
Estimated Distance & $0.0009$ & $0.0009$ & $0.0017$ & $0.0029$ \\
Confidence Set & $(0.000, 0.002)$ & $(0.000, 0.002)$ & $(0.000, 0.003)$ & $(0.001, 0.006)$\\
\cline{1-5}
\end{tabular}
}
\begin{tablenotes}
\footnotesize
\item Top panel: LDA test statistics and $0.05$-level critical values associated with the original algorithm and the three experimental algorithms (predicting, respectively, total costs; avoidable costs; number of active chronic conditions) analyzed by \citetalias{obe:pow:vog:mul19}. Bottom panel: estimated squared-Euclidean distance to the $\F$ point and corresponding confidence set for this distance. $\Delta\btheta$ is estimated by logit lasso
\end{tablenotes}

\end{table}
}
{
\begin{table}[H]
\scriptsize
\vspace{-.5cm}
\caption{\vspace{-.25cm}Fraction of Black Patients Treated among All Treated \label{tbl:frac-bl-trt-lasso}}
\vspace{.1cm}
\centering
\resizebox{0.95\columnwidth}{!}{
\begin{tabular}{c|c@{\hspace{-.5cm}}c@{\hspace{-.5cm}}|cccc}
\cline{1-7}
\multicolumn{1}{c}{\cellcolor{pink!20}} & \multicolumn{2}{c}{\cellcolor{pink!20}\hspace{-.1cm}\textbf{Algorithms from \citeauthor{obe:pow:vog:mul19}}} & \multicolumn{4}{c}{\cellcolor{pink!20}\textbf{Algorithms on the FA-Frontier}}\\ 
\cline{1-7}
Capacity Threshold & $\hspace{.5cm}$Original & Counterfactual & Rawlsian & Majority & Egalitarian & Utilitarian \\
\cline{1-7}
$55$ & $\hspace{.5cm}0.120$ & $0.184$ & $0.173$ & $0.173$ & $0.174$ & $0.172$\\
$69$ & $\hspace{.5cm}0.128$ & $0.255$ & $0.217$ &	$0.200$ & $0.171$ & $0.202$\\
$82$ & $\hspace{.5cm}0.138$ & $0.327$ & $0.241$ & $0.224$ & $0.130$ & $0.223$\\
$89$ & $\hspace{.5cm}0.151$ & $0.407$ & $0.264$ & $0.247$ & $0.118$ & $0.249$\\ 
$94$ & $\hspace{.5cm}0.167$ & $0.498$ & $0.324$ & $0.284$ & $0.124$ & $0.294$\\
$97$ & $\hspace{.5cm}0.184$ & $0.592$ & $0.369$ & $0.318$ & $0.143$ & $0.339$\\
\cline{1-7}
\end{tabular}
}
\begin{tablenotes}
\footnotesize
\item The distribution of the number of active chronic conditions is such that the $55^\text{th}$ to the $68^\text{th}$ percentiles all correspond to $1$ active chronic condition, the $69^\text{th}$-$81^\text{st}$ correspond to $2$, the $82^\text{nd}$-$88^\text{th}$ correspond to $3$, the $89^\text{th}$-$92^\text{nd}$ correspond to $4$, the $94^\text{th}$-$95^\text{th}$ correspond to $5$, and the $96^\text{th}$-$97^\text{th}$ correspond to $6$. $\Delta\btheta$ is estimated by logit lasso.
\end{tablenotes}
\end{table}
}

{
\begin{table}[H]
\scriptsize
\vspace{-.5cm}
\caption{\vspace{-.25cm}Variability of Empirical Results: Random Forests\label{tbl:variability-rf}}
\vspace{.1cm}
\centering
\resizebox{0.8\columnwidth}{!}{
\begin{tabular}
{cccccccc} 
\cline{1-8}
{\cellcolor{pink!20}} & 
{\cellcolor{pink!20}Mean} & {\cellcolor{pink!20}SD} & {\cellcolor{pink!20}Min} & {\cellcolor{pink!20}$25$-th} &{\cellcolor{pink!20}$50$-th} & {\cellcolor{pink!20}$75$-th} & {\cellcolor{pink!20}Max} \\
\cline{1-8}
\multicolumn{8}{c}{Weak Skew Test}\\
Conclusion & $0$ & $0$ & $0$ & $0$ & $0$ & $0$ & $0$ \\
\cline{1-8}
\multicolumn{8}{c}{LDA Test} \\
\multicolumn{8}{l}{\textit{Original Algorithm:}}\\
Test Statistic & $3.338$ &	$0.318$ & $2.666$ & $3.173$ & $3.301$ & $3.562$ & $3.823$\\
Critical Value & $1.894$ & $0.051$ & $1.810$ & $1.866$ & $1.896$ & $1.919$ & $2.004$ \\
Conclusion & $1$ & $0$ & $1$ & $1$ & $1$ & $1$ & $1$\\
\multicolumn{8}{l}{\textit{Algorithm that Predicts Total Costs:}}\\ 
Test Statistic & $2.149$ & $0.327$ & $1.475$ & $1.977$ & $2.111$ & $2.361$ & $2.761$\\
Critical Value & $1.844$ & $0.066$ & $1.736$ & $1.804$ & $1.829$ & $1.876$ & $1.998$ \\
Conclusion & $0.8$ & $0.410$ & $0$ & $1$ & $1$ & $1$ & $1$\\
\multicolumn{8}{l}{\textit{Algorithm that Predicts Avoidable Costs:}} \\
Test Statistic & $1.488$ & $0.054$ & $1.398$ & $1.440$ & $1.493$ & $1.521$ & $1.577$\\
Critical Value & $1.721$ & $0.060$ & $1.612$ & $1.682$ & $1.724$ & $1.754$ & $1.836$ \\
Conclusion & $0$ & $0$ & $0$ & $0$ & $0$ & $0$ & $0$\\
\multicolumn{8}{l}{\textit{Algorithm that Predicts the Number of Active Chronic Conditions:}} \\
Test Statistic & $1.194$ & $0.346$ & $0.674$ & $1.007$ & $1.135$ & $1.395$ & $1.812$
\\
Critical Value & $1.632$ & $0.049$ & $1.516$ & $1.600$ & $1.634$ & $1.654$ & $1.725$ \\
Conclusion & $0.15$ & $0.366$ & $0$ & $0$ & $0$ & $0$ & $1$\\
\cline{1-8} 
\multicolumn{8}{c}{Confidence Set for the Distance to $F$} \\
Estimated $\F$ & $0.052$ & $0.003$ & $0.049$ & $0.050$ & $0.052$ & $0.054$ & $0.058$\\
\multicolumn{8}{l}{\textit{Original Algorithm:}}\\
Estimated Distance & $0.0005$ & $0.0000$ & $0.0005$ & $0.0005$ & $0.0005$ & $0.0005$ & $0.0006$\\
Lower $95\%$ CI & $0.0001$ & $0.0001$ & $0.0000$ & $0.0000$ & $0.0000$ & $0.0001$ & $0.0002$\\
Upper $95\%$ CI & $0.0012$ & $0.0003$ & $0.0008$ & $0.0009$ & $0.0011$ & $0.0014$ & $0.0018$\\
\multicolumn{8}{l}{\textit{Algorithm that Predicts Total Costs:}}\\ 
Estimated Distance & $0.0004$ & $0.0001$ & $0.0003$ & $0.0004$ & $0.0004$ & $0.0004$ & $0.0006$ \\
Lower $95\%$ CI & $0.0000$ & $0.0000$ & $0.0000$ & $0.0000$ & $0.0000$ & $0.0000$ & $0.0001$ \\
Upper $95\%$ CI & $0.0013$ & $0.0004$ & $0.0007$ & $0.0010$ & $0.0013$ & $0.0016$ & $0.0022$ \\
\multicolumn{8}{l}{\textit{Algorithm that Predicts Avoidable Costs:}} \\
Estimated Distance & $0.0009$ & $0.0002$ & $0.0007$ & $0.0007$ & $0.0008$ & $0.0009$ & $0.0013$ \\
Lower $95\%$ CI & $0.0000$ & $0.0000$ & $0.0000$ & $0.0000$ & $0.0000$ & $0.0000$ & $0.0002$ \\
Upper $95\%$ CI & $0.0024$ & $0.0005$ & $0.0018$ & $0.0020$ & $0.0023$ & $0.0027$ & $0.0033$ \\
\multicolumn{8}{l}{\textit{Algorithm that Predicts the Number of Active Chronic Conditions:}} \\
Estimated Distance & $0.0016$ & $0.0003$ & $0.0012$ & $0.0013$ & $0.0016$ & $0.0017$ & $0.0023$ \\
Lower $95\%$ CI & $0.0003$ & $0.0002$ & $0.0000$ & $0.0002$ & $0.0003$ & $0.0004$ & $0.0006$ \\
Upper $95\%$ CI & $0.0041$ & $0.0006$ & $0.0030$ & $0.0035$ & $0.0042$ & $0.0046$ & $0.0051$\\
\cline{1-8}
\end{tabular}
}
\begin{tablenotes}
\footnotesize
\item Table \ref{tbl:variability-rf} reports the variability of the empirical results in Section \ref{subsec:empirical} due to randomness in the estimation of $\Delta\btheta$ using random forests, reported as the mean, standard deviation, minimum, the 25-th percentile, 50-th percentile, 75-th percentile, and the maximum across 20 replications. Test conclusions are recorded as $1$ if the conclusion is rejection, and $0$ otherwise.
\end{tablenotes}

\end{table}
}

{
\begin{table}[H]
\scriptsize
\vspace{-.5cm}
\caption{\vspace{-.25cm}Variability of Empirical Results: Logit Lasso\label{tbl:variability-lasso}}
\vspace{.1cm}
\centering
\resizebox{0.8\columnwidth}{!}{
\begin{tabular}
{cccccccc} 
\cline{1-8}
{\cellcolor{pink!20}} & 
{\cellcolor{pink!20}Mean} & {\cellcolor{pink!20}SD} & {\cellcolor{pink!20}Min} & {\cellcolor{pink!20}$25$-th} &{\cellcolor{pink!20}$50$-th} & {\cellcolor{pink!20}$75$-th} & {\cellcolor{pink!20}Max} \\
\cline{1-8}
\multicolumn{8}{c}{Weak Skew Test}\\
Conclusion & $0$ & $0$ & $0$ & $0$ & $0$ & $0$ & $0$ \\
\cline{1-8}
\multicolumn{8}{c}{LDA Test} \\
\multicolumn{8}{l}{\textit{Original Algorithm:}}\\
Test Statistic & $2.037$ & $0.182$ & $1.784$ & $1.874$ & $2.001$ & $2.166$ & $2.390$ \\
Critical Value & $1.976$ & $0.074$ & $1.762$ & $1.937$ & $1.989$ & $2.024$ & $2.066$ \\
Conclusion & $0.550$ & $0.510$ & $0$ & $0$ & $1$ & $1$ & $1$ \\
\multicolumn{8}{l}{\textit{Algorithm that Predicts Total Costs:}}\\ 
Test Statistic & $0.804$ & $0.173$ & $0.562$ & $0.651$ & $0.788$ & $0.908$ & $1.154$ \\
Critical Value & $1.915$ & $0.055$ & $1.801$ & $1.890$ & $1.914$ & $1.961$ & $1.995$ \\
Conclusion & $0$ & $0$ & $0$ & $0$  & $0$  & $0$  & $0$  \\
\multicolumn{8}{l}{\textit{Algorithm that Predicts Avoidable Costs:}} \\
Test Statistic & $0.478$ & $0.171$ & $0.100$ & $0.375$ & $0.491$ & $0.591$ & $0.821$ \\
Critical Value & $1.913$ & $0.057$ & $1.783$ & $1.888$ & $1.923$ & $1.941$ & $2.006$ \\
Conclusion & $0$ & $0$ & $0$ & $0$ & $0$ & $0$ & $0$\\
\multicolumn{8}{l}{\textit{Algorithm that Predicts the Number of Active Chronic Conditions:}} \\
Test Statistic & $2.769$ & $0.174$ & $2.470$ & $2.668$ & $2.754$ & $2.889$ & $3.184$ \\
Critical Value & $1.873$ & $0.058$ & $1.768$ & $1.841$ & $1.880$ & $1.905$ & $2.020$ \\
Conclusion & $1$ & $0$ & $1$ & $1$ & $1$ & $1$ & $1$\\
\cline{1-8} 
\multicolumn{8}{c}{Confidence Set for the Distance to $F$} \\
Estimated $\F$ & $0.063$ & $0.001$ & $0.061$ & $0.062$ & $0.062$ & $0.063$ & $0.065$ \\
\multicolumn{8}{l}{\textit{Original Algorithm:}}\\
Estimated Distance & $0.0008$ & $0.0001$ & $0.0007$ & $0.0008$ & $0.0008$ & $0.0009$ & $0.0010$ \\
Lower $95\%$ CI & $0.0000$ & $0.0000$ & $0.0000$ & $0.0000$ & $0.0000$ & $0.0000$ & $0.0001$\\
Upper $95\%$ CI & $0.0020$ & $0.0003$ & $0.0014$ & $0.0017$ & $0.0019$ & $0.0021$ & $0.0026$ \\
\multicolumn{8}{l}{\textit{Algorithm that Predicts Total Costs:}}\\ 
Estimated Distance & $0.0009$ & $0.0001$ & $0.0008$ & $0.0008$ & $0.0009$ & $0.0009$ & $0.0010$ \\
Lower $95\%$ CI & $0.0000$ & $0.0000$ & $0.0000$ & $0.0000$ & $0.0000$ & $0.0000$ & $0.0001$ \\
Upper $95\%$ CI & $0.0022$ & $0.0003$ & $0.0017$ & $0.0021$ & $0.0022$ & $0.0025$ & $0.0025$  \\
\multicolumn{8}{l}{\textit{Algorithm that Predicts Avoidable Costs:}} \\
Estimated Distance & $0.0016$ & $0.0001$ & $0.0014$ & $0.0015$ & $0.0016$ & $0.0017$ & $0.0018$ \\
Lower $95\%$ CI & $0.0003$ & $0.0001$ & $0.0000$ & $0.0002$ & $0.0003$ & $0.0004$ & $0.0005$ \\
Upper $95\%$ CI & $0.0035$ & $0.0004$ & $0.0030$ & $0.0032$ & $0.0035$ & $0.0038$ & $0.0042$  \\
\multicolumn{8}{l}{\textit{Algorithm that Predicts the Number of Active Chronic Conditions:}} \\
Estimated Distance &  $0.0028$ & $0.0002$ & $0.0026$ & $0.0027$ & $0.0028$ & $0.0029$ & $0.0032$  \\
Lower $95\%$ CI &  $0.0009$ & $0.0003$ & $0.0005$ & $0.0008$ & $0.0009$ & $0.0011$ & $0.0016$\\
Upper $95\%$ CI & $0.0054$ & $0.0004$ & $0.0047$ & $0.0051$ & $0.0054$ & $0.0057$ & $0.0062$ \\
\cline{1-8}
\end{tabular}
}
\begin{tablenotes}
\footnotesize
\item Table \ref{tbl:variability-lasso} reports the variability of the empirical results in Section \ref{subsec:empirical} due to randomness in the estimation of $\Delta\btheta$ using logit lasso, reported as the mean, standard deviation, minimum, the 25-th percentile, 50-th percentile, 75-th percentile, and the maximum across 20 replications. Test conclusions are recorded as $1$ if the conclusion is rejection, and $0$ otherwise.
\end{tablenotes}

\end{table}
}

\end{appendix}
\end{document}